\renewcommand\footnotetextcopyrightpermission[1]{} % removes footnote with conference information in first column
\DeclarePairedDelimiter\floor{\lfloor}{\rfloor}
  \providecommand\BibTeX{{%
    \normalfont B\kern-0.5em{\scshape i\kern-0.25em b}\kern-0.8em\TeX}}}
\newcommand{\mattwo}[4]{\left[ \begin{array}{cc}#1 & #2 \\ #3 & #4\end{array}\right]}
\newcounter{qasmindex}
\NewDocumentEnvironment{qprognamed}{ m }
    {\VerbatimOut{qasm/qasm/#1.qasm}}
    {\endVerbatimOut
    \immediate\write18{cd qasm && bash build_qasm.sh #1}
    \includegraphics[valign=c, max width=\textwidth, max height=0.2\textheight]{qasm/out/#1.pdf}
    }
\NewDocumentEnvironment{qprog}{} 
{\VerbatimOut{qasm/qasm/\theqasmindex-inline.qasm}}
{\endVerbatimOut
\immediate\write18{cd qasm && bash build_qasm.sh \theqasmindex-inline}
\includegraphics[valign=c, max width=\textwidth, max height=0.2\textheight]{qasm/out/\theqasmindex-inline.pdf}
\stepcounter{qasmindex}
}
\newtheorem{theorem}{Theorem}
\newtheorem{lemma}{Lemma}
\newtheorem{definition}{Definition}
\newtheorem{example}{Example}
\newtheorem{note}{Note}
\newcommand{\mycomment}[1]{}
\begin{document}
\immediate\write18{mkdir -p qasm/qasm}
\immediate\write18{mkdir -p qasm/out}
\immediate\write18{mkdir -p qasm/build}
%%
%% The "title" command has an optional parameter,
%% allowing the author to define a "short title" to be used in page headers.
\title{Software Pipelining for Quantum Loop Programs}

%%
%% The "author" command and its associated commands are used to define
%% the authors and their affiliations.
%% Of note is the shared affiliation of the first two authors, and the
%% "authornote" and "authornotemark" commands
%% used to denote shared contribution to the research.

\author{Guo Jingzhe}
\orcid{0000-0002-7921-9771}
\affiliation{%
  \department{Department of Computer Science}
  \department{and Technology}             %% \department is recommended
  \institution{Tsinghua University}
  \country{China}                   %% \country is recommended
}
\email{gjz20@mails.tsinghua.edu.cn}

\author{Mingsheng Ying}
\affiliation{
  %\position{Position2a}
  %\department{}             %% \department is recommended
  \institution{CQSI, FEIT, University of Technology Sydney}
  \country{Australia}                   %% \country is recommended
}
%\email{first2.last2@inst2a.com}         %% \email is recommended
\affiliation{
  %\position{Position2a}
  \department{Institute of Software, CAS, China}             %% \department is recommended
}
\affiliation{
  %\position{Position2a}
  %%\department{Department of Computer Science and Technology}             %% \department is recommended
  \institution{Tsinghua University, China}           %% \institution is required
    %%\country{China}                   %% \country is recommended
}
\email{mingshengying@gmail.com}         %% \email is recommended

%%
%% By default, the full list of authors will be used in the page
%% headers. Often, this list is too long, and will overlap
%% other information printed in the page headers. This command allows
%% the author to define a more concise list
%% of authors' names for this purpose.
\renewcommand{\shortauthors}{Guo, et al.}

%%
%% The abstract is a short summary of the work to be presented in the
%% article.
\begin{abstract}
We propose a method for performing software pipelining on quantum for-loop programs, exploiting parallelism in and across iterations.
We redefine concepts that are useful in program optimization, including array aliasing, instruction dependency and resource conflict, this time in optimization of quantum programs.
Using the redefined concepts, we present a software pipelining algorithm exploiting instruction-level parallelism in quantum loop programs.
The optimization method is then evaluated on some test cases, including popular applications like QAOA, and compared with several baseline results. The evaluation results show that our approach outperforms loop optimizers exploiting only in-loop optimization chances by reducing total depth of the loop program to close to the optimal program depth obtained by full loop unrolling, while generating much smaller code in size. This is the first step towards optimization of a quantum program with such loop control flow as far as we know.

\end{abstract}

%%
%% The code below is generated by the tool at http://dl.acm.org/ccs.cfm.
%% Please copy and paste the code instead of the example below.
%%
\begin{CCSXML}
<ccs2012>
<concept>
<concept_id>10011007.10011006.10011041</concept_id>
<concept_desc>Software and its engineering~Compilers</concept_desc>
<concept_significance>500</concept_significance>
</concept>
<concept>
<concept_id>10010583.10010786.10010813.10011726</concept_id>
<concept_desc>Hardware~Quantum computation</concept_desc>
<concept_significance>500</concept_significance>
</concept>
</ccs2012>
\end{CCSXML}

\ccsdesc[500]{Software and its engineering~Compilers}
\ccsdesc[500]{Hardware~Quantum computation}

%%
%% Keywords. The author(s) should pick words that accurately describe
%% the work being presented. Separate the keywords with commas.
\keywords{quantum program scheduling, quantum program compilation}

%% A "teaser" image appears between the author and affiliation
%% information and the body of the document, and typically spans the
%% page.
%\begin{teaserfigure}
%  \includegraphics[width=\textwidth]{sampleteaser}
%  \caption{Seattle Mariners at Spring Training, 2010.}
%  \Description{Enjoying the baseball game from the third-base
%  seats. Ichiro Suzuki preparing to bat.}
%  \label{fig:teaser}
%\end{teaserfigure}

%%
%% This command processes the author and affiliation and title
%% information and builds the first part of the formatted document.
\maketitle

\section{Introduction}

Quantum computer hardware has reached the so-called quantum supremacy showing that quantum computation can actually outperform classical computation for certain tasks, but it is still in the NISQ (Noisy-Intermediate-Scale-Quantum) era where there are no sufficient quantum bits (qubits, for short) for quantum error correction.

\textbf{Program optimization} is particularly important for executing a quantum program on NISQ hardware in order to reduce the number of required qubits, the length of gate pipeline, and to mitigate quantum noise. Indeed, there has already been plenty of work on optimization and parallelization of quantum programs.
Theoretically, it was proved in \cite{botea2018complexity} that compilation of quantum circuits with discretized time and parallel execution can be NP complete. Practically, 
quantum hardware architectures, especially those based on superconducting qubits, provide instruction level support for exploiting parallelism in quantum programs; for example, Rigetti's Quil \cite{smith2016practical} allows programmers to explicitly specify multiple instructions that do not involve same qubits to be executed together, while in Qiskit, ASAP or ALAP scheduling is performed implicitly \cite{qiskit_basic_scheduler}. Furthermore, several compilers have been implemented that can optimize quantum circuits by exploiting instruction level parallelism; for example, ScaffCC \cite{javadiabhari2015scaffcc} introduces critical path analysis to find the ``depth'' of a quantum program efficiently, revealing how much parallelism there is in a quantum circuit; commutativity-aware logic scheduling is proposed in \cite{shi2019optimized} to adopt a more relaxing quantum dependency graph than ``qubit dependency'' by taking in mind commutativity between the $R_Z$ gates and $\mathit{CNOT}$ gates as well as high-level commutative blocks while scheduling circuits. There are also some more sophisticated optimization strategies reported in in previous works \cite{DBLP:conf/asplos/MuraliMMJ20, quantum_ai_team_and_collaborators_2020_4062499, Sivarajah_2020, Guerreschi_2018} .

Quantum hardware will soon be able to execute quantum programs with more complex program constructs, e.g. for-loops. However, most of the optimization techniques in previous work only deal with sequential quantum circuits. Some methods allow loop programs as their input, but those loops will be unrolled immediately and optimization will be performed on the unrolled code. Loop unrolling is the technique that allows optimization across all iterations of a loop, but comes at a price of long compilation time, redundant final code and run-time compulsory cache misses. As quantum hardware in the near future may allow up to hundreds of qubits, it will often be helpful to preserve loop structure during optimization since the growth in number of qubits will also lead to increment in total gate count, as well as increment in difficulty unrolling the entire program.

\textbf{Software pipelining} \cite{lam1988software} is a common technique in optimizing classical loop prosgrams. Inspired by the execution of an unrolled loop on an out-of-order machine, software pipelining reorganizes the loop by a software compiler instead of by hardware. There are two major approaches for software pipelining: 
\begin{itemize}\item Unrolling-based software pipelining 
%is based on the observation that software pipelining %is a periodic version of loop unrolling. This approach 
usually unrolls loop for several iterations and finds repeating pattern in the unrolled part; see for example \cite{aiken1988optimal}. 
\item Modulo scheduling
%is based on the assumption that the iterations in the original loop will be issued by a constant \textbf{initiation interval} $II$, so instructions with the same scheduled-time-modulo-$II$ are scheduled together. This approach 
guesses an initiation interval first and try to schedule instructions one by one under dependency constraints and resource constraints; see for example \cite{lam1988software}.
\end{itemize}

\textbf{Our Contributions}: We hereby presents a software pipelining algorithm for parallelizing a certain kind of quantum loop programs. Our parallelization technique is based on a novel and more relaxed set of dependency rules on a CZ-architecture (Theorems \ref{rule-1} and \ref{th:generalized_conjugation}). The algorithm is essentially a combination of unrolling-based software pipelining and modulo scheduling \cite{lam1988software}, with several modifications to make it work on quantum loop programs. 

We 
%will give out our algorithm, analysis the complexity,
carried out experiments on several 
%artificial and practical 
examples and compared the results with the baseline result obtained by loop unrolling. Our approach proves to be a steady step toward bridging the gap between optimization results without considering across-loop optimization and fully unrolling results while restraining the increase in code size.

\textbf{Organization of the Paper}: In Section \ref{sec:defs_and_notations}, we review some basic definitions used in this paper. The theoretical tools for defining and exploiting parallelism in quantum loop program are developed in Section \ref{sec:approach_optimal}. In Section \ref{sec:resched_loop_body}, we present our  approach of rescheduling instructions across loops, extracting prologue and epilogue so that depth of the loop kernel can be reduced. 
%In Section \ref{sec:adopt_to_arch} we briefly discuss how to adapt our quantum computer model to the models of existing quantum architectures
The evaluation results of our experiments are given in Section \ref{sec:evaluation}. The conclusion is drawn in the Section \ref{sec:conclusion}. [\textit{For conciseness, all proofs are given in the Appendices}.]

\section{Preliminaries and Examples}
\label{sec:defs_and_notations}
%In this section, we will give out a series of definitions and notations for the following parts
This section provides some backgrounds \cite{10.5555/1972505, YING201611} on quantum computing and quantum programming.

\subsection{Basics of quantum computing}

The quantum counterparts of bits are qubits. Mathematically, a state of a single qubit is represented by a $2$-dimensional complex column vector $(\alpha, \beta)^T$, where $T$ stands for transpose. It is often written in the Dirac's notation as $\ket{\psi}=\alpha \ket{0}+\beta \ket{1}$ with  $\ket{0}=(1,0)^T$, $\ket{1}=(0,1)^T$ corresponding to classical bits $0$ and $1$, respectively. It is required that $\ket{\psi}$ be unit: $\|\alpha\|^2+\|\beta\|^2=1$. Intuitively, the qubit is in a superposition of $0$ and $1$, and when measuring it, we will get $0$ with probability $\|\alpha\|^2$ and $1$ with probability $\|\beta\|^2$.  
    A gate on the qubit is then modelled by a $2\times 2$ complex matrix $U$. The output of $U$ on an input $\ket{\psi}$ is quantum state $\ket{\psi^\prime}$. Its mathematical representation as a vector is obtained by the ordinary matrix multiplication $U\ket{\psi}$. To guarantee that $\ket{\psi^\prime}$ is always unit, $U$ must be unitary in the sense that $U^\dag U=I$ where $U^\dag$ is the adjoint of $U$ obtained by transposing and then complex conjugating $U$.
    In general, a state of $n$ qubits is represented by a $2^n$-dimensional unit vector, and a gate on $n$ qubits is described by a $2^n\times 2^n$ unitary matrix.
[\textit{For convenience of the readers, 
we present the basic gates used in this paper in Appendix \ref{appendix:basic_quantum_gates}}.]

%although each gate family above has a parameter named $\alpha$, the gate family is not a family of rotation around some fixed axis on Bloch Sphere; instead, each gate family above is a composition of a fixed rotation and a family of rotation.

\subsection{Quantum execution environment}
\label{subsec:arch}
Software pipelining is a highly machine-dependent approach of optimization. So we must give out some basic assumptions about the underlying machine that our algorithm requires. State-of-the-art universal quantum computers differ in many ways:
\begin{itemize}
    \item \textbf{Instruction set}: A quantum computer chooses a universal set of quantum gates as its low-level instructions. For example, IBM Q\cite{Qiskit-Textbook} uses controlled-NOT $\mathit{CNOT}$ and three one-qubit gates $U_1, U_2, U_3$, but Rigetti Quil\cite{smith2016practical} uses controlled-Z $\mathit{CZ}$ and one-qubit rotations $R_X, R_Z$. We use the universal gate set $\left\{U_3, CZ \right\}$ for the reason that $U_3$ itself is universal for single qubit gates, which allows us to merge single qubit gates at compile time. [\textit{see Appendix \ref{appendix:basic_quantum_gates} for the definition of these gates}.]
    \item \textbf{Instruction parallelism}: Different quantum computers are implemented on different technologies, constraining their power to execute multiple instructions simultaneously. Usually superconductive quantum computers support parallelism while ion-trap ones do not. We assume qubit-level parallelism: instructions on different qubits can always be executed simultaneously. 
    %Performing $CZ$ with two operands being the same qubit is undefined behaviour, where program equivalence is not guaranteed before and after optimization.
    %Usually, superconductive quantum computers provide \textbf{qubit parallelism}, where different operations that do not involve same qubits can be executed together. 
    %Ion-trap based quantum computers, despite their long decoherence time (\textasciitilde$1 min$, compared to \textasciitilde$100\mu s$ for superconducting $QC$), do not support instruction level parallelism, and instructions must be executed one by one. 
    \item \textbf{Timing}: Different quantum computers may use different timing strategies, using continuous time or discrete time. Also execution time of different instructions may differ and is highly machine-dependent. Usually a two-qubit gate (e.g. $\mathit{CZ}$ and $\mathit{CNOT}$) costs much longer time than single qubit gates. We use a discrete time model with every gate requiring $1$ tick equally.
    \item \textbf{Qubit connectivity}: Different machines may have different qubit topologies. However, we assume that all gates in the input are directly executable, which may require a layout synthesis step before our optimization.
   % \item \textbf{Noise}. $U3$ is a noisy gate among all single qubit gates, and two-qubit gates are noisier than single qubit gates. Also, the existence of crosstalk noise makes circuit depth no longer the only factor to consider when scheduling quantum programs. We ignore these noise for simplicity.
    \item \textbf{Classical control}. The support for classical control flow varies among different quantum computers; for example, IBM Q does not support any complex control flow, while Rigetti Quil supports branch statements. We assume such classical controls [\textit{see Appendix \ref{appendix:output_language}}].
\end{itemize}

%We adopt the following set of assumptions:

%\begin{itemize}
%    \item We use $\left\{U_3, CZ \right\}$ as the universal gate set. $U_3$ itself is universal for single qubit gates, which allows us to merge single qubit gates at compile time.
%    \item We assume a superconducting qubit-level parallelism, that is, instructions on different qubits can always be executed simultaneously. Instructions that can be executed simultaneously will be explicitly given out by our approach. Performing $CZ$ operations with two operands being the same qubit is undefined behaviour, where program equivalence is not guaranteed before and after optimization.
%    \item We use a discrete time model and we ignore difference of latency among different kinds of instructions. Every instruction takes one unit of time to complete.
%    \item We don't consider qubit connectivity or noise. $CZ$ can be performed on arbitrary pair of qubits and crosstalk noise is ignored.
%    \item We allow loop programs to be directly executed on the architecture. Moreover, if $[a,b]$ is unknown before execution, we can do some classical computation (including classical branch statements) to check if the loop range is large enough so that the software-pipelined version can be used.
%\end{itemize}

The above assumptions do not fit into the existing quantum hardware architecture perfectly (for instance, IBM Q requires $\mathit{CNOT}$ and Quil disallows $U_3$), while the architecture of Google's devices\cite{quantum_ai_team_and_collaborators_2020_4062499} fits these requirements most. With some slight modifications, however, our method can be easily adapted to unsupported architectures [\textit{see Appendix \ref{sec:adapt_to_arch}}]. 

\subsection{Quantum loop programs}

We focus on a special family of quantum loop programs, called \textit{one-dimensional  for-loop programs}, defined as below:
\begingroup
\allowdisplaybreaks
\begin{align*}
    %\begin{aligned}
    \textbf{program} :=& \textbf{header}\ \textbf{statement}*\\
    \textbf{header} :=& [(\textbf{qdef}\ |\ \textbf{udef})^*]\\
    \textbf{qdef} :=& qubit\ \textbf{ident}[\mathbb{N}];\\
    \textbf{udef} :=& defgate\ \textbf{ident}[\mathbb{N}] = \textbf{gate};\\
    \textbf{gate} :=& [(\mathbb{C}^{2\times 2})^*]\ |\ R_Z\ |\ R_Z^+\ |\ Unknown\\
    \textbf{gateref} :=& \textbf{ident}[\textbf{expr}]\\
    \textbf{qubit} :=& \textbf{ident}[\textbf{expr}]\\
    \textbf{op} :=& SQ(\textbf{gateref})\ \textbf{qubit}\ 
                       |\ CZ\ \textbf{qubit},\textbf{qubit};\\
    \textbf{statement} :=& \textbf{op}\ 
                        |\ for\ \textbf{ident}\ in\ \mathbb{Z}\ to\ \mathbb{Z} \{\textbf{op}^*\}\\
                       &\quad\ |\  for\ \textbf{ident}\ in\ \textbf{ident}\ to\ \textbf{ident} \{\textbf{op}^*\}\\
                         %|\ & parallel\{\textbf{statement}*\}\\
                         %|\ & guard\{\\
                         %    &(\textbf{compare}=>\{\textbf{statement}*\})^* \\ %
                         %&otherwise=>\{\textbf{statement}*\}\\
                         %    &\} \\
   % \textbf{expr} :=& \mathbb{Z}  * \textbf{ident} + \mathbb{Z}
    %\textbf{expr} :=&\textbf{ident}\ |\ expr+expr\ |\ expr-expr\\
    %             |\ &expr*expr\ |\ expr/expr\ |\ expr\%expr\ |\ \mathbb{Z} \\
    %\textbf{compare} :=& \textbf{expr}\ \textbf{ordering}\ \textbf{expr}\\
    %\textbf{ordering} :=& ==\ |\ !=\ |\ >\ |\ <\ |\ >=\ |\ <=
    %  \textbf{program} :=& \textbf{header}\ \textbf{statement}*\\
  %  \textbf{header} :=& [(\textbf{qdef}\ |\ \textbf{udef})^*]\\
  %  \textbf{qdef} :=& qubit\ \textbf{ident}[\mathbb{N}];\\
  %  \textbf{udef} :=& defgate\ \textbf{ident}[\mathbb{N}] = \textbf{gate};\\
  %  \textbf{gate} :=& [(\mathbb{C}^{2\times 2})^*]\ |\ R_Z\ |\ R_Z^+\ |\ Unknown\\
  %  \textbf{gateref} :=& \textbf{ident}[\textbf{expr}]\\
  %  \textbf{qubit} :=& \textbf{ident}[\textbf{expr}]\\
  %  \textbf{op} :=& SQ(\textbf{gateref})\ \textbf{qubit}\ 
        %                 |\ CZ\ \textbf{qubit},\textbf{qubit};\\
   % \textbf{statement} :=& \textbf{op}\ 
     %                    |\ for\ \textbf{ident}\ in\ \mathbb{Z}\ to\ \mathbb{Z} \{\textbf{op}^*\}\\
                     %   &\quad\ |\  for\ \textbf{ident}\ in\ \textbf{ident}\ to\ \textbf{ident} \{\textbf{op}^*\}\\
                         %|\ & parallel\{\textbf{statement}*\}\\
                         %|\ & guard\{\\
                         %    &(\textbf{compare}=>\{\textbf{statement}*\})^* \\ %
                         %&otherwise=>\{\textbf{statement}*\}\\
                         %    &\} \\
    \textbf{expr} :=& \mathbb{Z}  * \textbf{ident} + \mathbb{Z}
    %\textbf{expr} :=&\textbf{ident}\ |\ expr+expr\ |\ expr-expr\\
    %             |\ &expr*expr\ |\ expr/expr\ |\ expr\%expr\ |\ \mathbb{Z} \\
    %\textbf{compare} :=& \textbf{expr}\ \textbf{ordering}\ \textbf{expr}\\
    %\textbf{ordering} :=& ==\ |\ !=\ |\ >\ |\ <\ |\ >=\ |\ <=
    %\end{aligned}
\end{align*}
\endgroup
%\begin{algorithmic}
%    \FOR{i=a \TO b } {
%        \STATE {$Op_k[q_1[p_1i+r_1], \cdots]$}
%        \STATE {$\cdots$}
%    } \ENDFOR
%\end{algorithmic}
where: 
\begin{itemize}
    \item The loop involves a group of one-dimensional qubit array variables defined by \textbf{qdef}.
    \item The loop has only one iteration variable $i$ starting from $a$ to $b$ with stride $1$. The range $[a, b]$ is completely known at compile time, or completely unknown until execution. This allows our algorithm to be performed on a program with parametric loop range.
    \item All array index expressions are in the form $(ki+b)$, where $i$ is the iteration variable, and $k,b\in \mathbb{Z}$ are known constants.
    \item All operations in the loop body are either an one-qubit gate, or a $CZ$ gate on two qubits. We don't consider measurement operations.
    \item One-qubit gates are defined by \textbf{udef}. They are given as known matrices, or ``an element in an array of unknown matrices'' when a hint on whether the matrix array is diagonal or antidiagonal can be given. This allows our algorithm to be performed on a program with parametric gates or performing different gates on different iterations.
\end{itemize}
At the very start of the entire program, all qubit arrays are initialized as $\ket{0}$. 
%For the case of the step length $>1$, we can always rename the iteration variable and transform the loop so that the step length$=1$. However 
Our optimization may introduce some branch statements if the endpoints $a$ and $b$ are unknown before code execution. As a result, the output language of the compiler is a superset of the input language above, with support for branch statements [\textit{see Appendix \ref{appendix:output_language} for one possible definition of output language}].
%\begin{definition}
%(One-dimensional )
%\begin{itemize}
%\item For any two references $q_1[k_1i+b_1], q_2[k_2i+b_2]$, if $q_1=q_2$ but $k_1\neq k_2$,
%\begin{itemize}
%    \item if the range $[a, b]$ is known, $\forall i_1, i_2\in[a,b], k_1i_1+b_1\neq k_2i_2+b_2$;
%    \item if the range is unknown, $\forall i_1, i_2\in Z, k_1i_1+b_1\neq k_2i_2+b_2$.
%\end{itemize}
%\end{itemize}
%\end{definition}
To show versatility of the above loop, let us consider several popular quantum algorithms.   
%{\color{red}Grover search is a good example of quantum loop. .......} 

\begin{example} 
Grover algorithm \cite{10.1145/237814.237866} is designed for the black-box searching problem: given a function $f: \left\{0,1\right\}^n\rightarrow\left\{0,1\right\}$, find a bitstring $x: \left\{0,1\right\}^n$ such that $f(x)=1$. While a classical algorithm requires $\Omega(n)$ calls to the oracle, Grover search can find a solution in $O(\sqrt{n})$ calls of quantum oracle $U_f(\ket{x}\otimes\ket{b})=\ket{x}\otimes\ket{b\oplus f(x)}$.
This is done by repeating a series of quantum gates, called Grover iteration. Grover search can be written as the loop program: 
\begin{algorithmic}
\FOR{i in  0 \TO N-1 } {
    \STATE {$H[q[i]]$}
} \ENDFOR
\STATE {$H[q_{work}]$}
\FOR{i in  1 \TO $O(\sqrt{N})$ } {
    \STATE {$U_f[q,q_{work}]$; $(2\ket{\psi}\bra{\psi}-I)[q]$}
} \ENDFOR
\end{algorithmic}
\end{example}

{
\begin{example}

A Quantum Approximate Optimization Algorithm (QAOA for short) is designed in  \cite{farhi2014quantum} to solve the MaxCut problem on a given graph $G=\langle V, E\rangle$.
% QAOA requires us to create the state
%\begin{equation}
  %  \ket{\gamma, \beta}=\prod_{i=1}^{p}U(B, \beta_i)U(C, \gamma_i)\ket{+}
%\end{equation}
%where 
%\begin{align}
%    &U(C, \beta_i)=\prod_{(a,b)\in E}\left[ \begin{array}{cccc}1& & & \\ &e^{-i\omega_{ab}\gamma_i}& &  \\ & & e^{-i\omega_{ab}\gamma_i} & \\ & & &1 \end{array}\right]\\
  %  &U(B, \gamma_i)=\prod_{j=0}^{n-1}R_X(\beta_i,j),
%\end{align}
%Since QAOA requires repeating execution of the circuit but each time with different sets of angles $\left\{\beta_i\right\}$ and $\left\{\gamma_i\right\}$, an optimizer has to support compilation of the circuit above without knowing all parameters in advance.
%Since
%\begin{equation}
 %   \label{equ:qaoa_impl}
  %  \left[ \begin{array}{cccc}1& & & \\ &e^{-i\omega_{ab}\gamma_i}& &  \\ & & e^{-i\omega_{ab}\gamma_i} & \\ & & &1 \end{array}\right]=
   % \begin{qprognamed}{qaoa_implementation_of_gamma_rot}
    %    qubit a
     %   qubit b
      %  def	ua,0,'R_Z(-\omega_{ab}\gamma_i)'
       % cnot a,b
     %   ua b
      %  cnot a,b
  %  \end{qprognamed}
   % ,
%\end{equation}
It can be written as a parametric quantum loop program:
\begin{algorithmic}
\FOR{i=0 \TO (N-1) }  {
    \STATE {$H[q[i]]$}
} \ENDFOR
\FOR{i=1 \TO p }  {
    \FOR{$(a, b) \in E$}  {
        \STATE {$CNOT[q[a], q[b]]$; $U_B[i][q[b]]$; $CNOT[q[a], q[b]]$}
    }\ENDFOR
    \FOR{j=0 \TO (N-1) }  {
        \STATE {$U_C[j][q[j]]$}
    } \ENDFOR
} \ENDFOR
\end{algorithmic}
Here, we use parametric gate arrays $U_C[i]=R_X(\beta_i,j)$ and $U_B[i]=R_Z(-\omega_{ab}\gamma_i)$ of rotations. The two innermost loops can be unrolled to satisfy our input language requirements.
Since QAOA repeatedly executes the circuit but each time with different sets of angles $\left\{\beta_i\right\}$ and $\left\{\gamma_i\right\}$, an optimizer has to support compilation of the circuit above without knowing all parameters in advance.
Note that the compiler can know in advance that $U_B[i]$ are diagonal matrices, and this hint might be used during optimization. [for a further explanation of QAOA see \textit{see Appendix \ref{more-examples}}]
\end{example}
}

\section{Theoretical tools}\label{sec:approach_optimal}
In this section, we develop a handful of theoretical techniques required in our optimization. To start, let us identify some of the most critical challenges in optimizing quantum loop programs:
\begin{itemize}
    \item Instructions may be \textbf{merged} together at compile time, potentially reducing the total depth. However, merging instructions needs to know which instructions may be adjacent in the unrolled pattern, thus requiring us to resolve all possible \textbf{qubit aliasings}.
    \item \textbf{Data dependency} graph in a quantum program is usually much denser than that in classical program, since generally two matrices are not commutable, that is, $AB\neq BA$.
    \item \textbf{Resource constraint}, which prevents instructions that do not have dependency from executing together, is quite different in quantum case from classical case.
\end{itemize}
We will show how much optimization can be done by mitigating these challenges in loop reordering.
\subsection{Gate merging}

Our assumptions allow several instructions to be merged into a single instruction with the same effect:
\begin{itemize}
    \item Two adjacent one-qubit gates on the same qubit can be merged, since we are using $U_3$.
    \item Two adjacent $CZ$ gates on the same qubits can cancel each other.
\end{itemize}

\begin{example}
Figure \ref{fig:sq_merge} is a simple case for periodical gate merging pattern. The two one-qubit gates in different iterations may merge with each other, thus simplifying the dependency graph and introducing new opportunities for optimization.
    \begin{figure}
        \begin{subfigure}{0.4\textwidth}
            \begin{lstlisting}[frame=single]
for i=0 to 3 do
    U q[i]; V q[i+1]; W q[i+2];
end for
            \end{lstlisting}
            \caption{Loop program.}
        \end{subfigure}
        \begin{subfigure}{0.2\textwidth}
            \begin{qprognamed}{merge-1}
                qubit q0
                qubit q1
                qubit q2
                qubit q3
                qubit q4
                qubit q5
                def gU,0,'U'
                def gV,0,'V'
                def gW,0,'W'
                gU q0
                gV q1
                gW q2
                nop q3
                nop q4
                nop q5
                gU q1
                gV q2
                gW q3
                nop q4
                nop q5
                gU q2
                gV q3
                gW q4
                nop q5
                gU q3
                gV q4
                gW q5
            \end{qprognamed}
            \caption{Unrolled circuit.}
        \end{subfigure}
        \hspace{1cm}
        \begin{subfigure}{0.11\textwidth}
            \begin{qprognamed}{merge-2}
                qubit q0
                qubit q1
                qubit q2
                qubit q3
                qubit q4
                qubit q5
                def gU,0,'U'
                def gV,0,'V'
                def gW,0,'W'
                def gUV,0,'UV'
                def gUVW,0,'UVW'
                def gVW,0,'VW'
                gU q0
                gUV q1
                gUVW q2
                gUVW q3
                gVW q4
                gW q5
            \end{qprognamed}
            \caption{Merged.}
        \end{subfigure}
        \caption{Single qubit gates can be merged periodically.}
        \label{fig:sq_merge}
    \end{figure}
\end{example}

Gate merging allows us to decrease count of gates, and thus reduce total execution time. However, the existence of potential aliasing adds to the difficulty of finding ``adjacent'' pairs of gates. Figuring out pairs of gates that can be safely merged is one of the critical problems when scheduling the program.

\begin{example}
    Even for a simple program, it can be hard to decide whether two adjacent instructions on a qubit can be merged. Consider the simple program:
    \begin{algorithmic}
    \FOR{i=a \TO b }  {
        \STATE{$H[q[0]]$; $CZ[q[i],q[i+1]]$; $H[q[0]]$;}
    } \ENDFOR
    \end{algorithmic}
    We can merge the Hadamard gates if and only if \ 
       $\forall i, i\neq 0 \wedge (i+1)\neq 0.$
    \begin{figure}
    
        \begin{subfigure}{0.2\textwidth}
    \centering
    \begin{qprognamed}{alias-c1}
            qubit i
            qubit o
            qubit j
            H o
            nop i
            ZZ i,j
            nop o
            H o
    \end{qprognamed}
    \caption{$i\neq0 \wedge i\neq -1$}
    \end{subfigure}
    \begin{subfigure}{0.2\textwidth}
    \centering
    \begin{qprognamed}{alias-c2}
        qubit o
        qubit j
        H o
        ZZ o,j
        H o
    \end{qprognamed}
    \caption{$i=0$}
    \end{subfigure}
    \begin{subfigure}{0.2\textwidth}
    \centering
    \begin{qprognamed}{alias-c3}
        qubit i
        qubit o
        H o
        ZZ i,o
        H o
    \end{qprognamed}
    \caption{$i=-1$}
    \end{subfigure}
            \caption{The $CZ$ gate prevents the two Hadamard gates from merging, due to potential qubit aliasing.}
            \label{fig:qubit_aliasing_problem}
        \end{figure}
Three possible cases of $i$ lead to three different results, as Figure \ref{fig:qubit_aliasing_problem} shows.
    \end{example}

The above example reveals that resolving qubit aliasings is crucial in gate merging.

\subsection{Qubit aliasing resolution}
\label{subsec:qubit_alias}
Allowing arbitrary linear expressions being used to index qubit arrays introduces the problem of qubit aliasing both in a single iteration and across iterations. Potential aliasing in quantum programs leads two kinds of problems: \textit{lack of periodic features in unrolled schedule}, and \textit{extra complexity in detecting aliasings}.

The first problem is that non-periodic features cannot be captured using software-pipelining (or other loop scheduling methods). For example, in Figure \ref{fig:alias_cause_lack_period}, the situation where $CZ$ blocks two Hadamards from merging only occurs in one or two iterations of the loop program, but it prevents the merging in \textbf{all} iterations, since software pipelining can only generate a periodic pattern and has to generate conservative code. The only kind of aliasing (two different qubit expressions refering to the same qubit) that software pipelining can capture is those expressions on the same qubit array and with the same slope, as shown in Figure \ref{fig:alias_captured}.
\begin{figure}
    \begin{subfigure}{0.3\textwidth}
        \begin{lstlisting}[frame=single]
for i=0 to 3 do
    H q[1];
    CZ q[i], q[i+1];
    H q[1];
end for
        \end{lstlisting}
        \caption{Loop program.}
    \end{subfigure}
    \begin{subfigure}{0.3\textwidth}
        \begin{qprognamed}{alias-unroll}
            qubit q0
            qubit q1
            qubit q2
            qubit q3
            qubit q4
            H q1
            ZZ q0,q1
            H q1
            H q1
            ZZ q1,q2
            H q1
            H q1
            ZZ q2,q3
            H q1
            H q1
            ZZ q3,q4
            H q1
        \end{qprognamed}
        \caption{Unrolled circuit.}
    \end{subfigure}
    \caption{Unrolled loop does not reveal periodic feature due to qubis aliasing.}
    \label{fig:alias_cause_lack_period}
\end{figure}
\begin{figure}
    \begin{subfigure}{0.3\textwidth}
        \begin{lstlisting}[frame=single]
for i=0 to 3 do
    H q[i];
    CZ q[i], q[i+1];
    H q[i+1];
end for
        \end{lstlisting}
        \caption{Loop program.}
    \end{subfigure}
    \begin{subfigure}{0.3\textwidth}
        \begin{qprognamed}{alias-unroll-2}
            qubit q0
            qubit q1
            qubit q2
            qubit q3
            qubit q4
            H q0
            ZZ q0,q1
            H q1
            H q1
            ZZ q1,q2
            H q2
            H q2
            ZZ q2,q3
            H q3
            H q3
            ZZ q3,q4
            H q4
        \end{qprognamed}
        \caption{Unrolled circuit.}
    \end{subfigure}
    \caption{Periodic feature in the unrolled loop can be captured.}
    \label{fig:alias_captured}
\end{figure}

To see the second problem, we note that detection of memory aliasing \cite{10.5555/1177220} is usually solved by an Integer Linear Programming (ILP) problem solver such as Z3\cite{10.1007/978-3-540-78800-3_24}. However, a general ILP problem is NP-complete in theory and may take long time to solve in practice. Fortunately, we will see that all problems that we are facing can be solved efficiently in $O(1)$ time without an ILP solver.

We consider two references to a same qubit array:  
$q\left[k_1i+b_1\right],$ $q\left[k_2i+b_2\right],$ $i\in T$, where $T$ is the loop interval when the loop range is known and $\mathbb{Z}$ when unknown.

\begin{definition}\label{def-ilqa}\textbf{In-loop qubit aliasing}:
To check whether two instructions can always be executed together, we have to check if one qubit reference may be an alias of another, that is, $(\exists i \in T)\left(k_1i+b_1=k_2i+b_2\right).$
\end{definition}

This problem can be easily solved by checking whether $(b_2-b_1)$ is a multiple of $(k_1-k_2)$ and $\frac{b_2-b_1}{k_1-k_2}$ lies in $T$.

\begin{definition}\label{def-alqa}
\textbf{Across-loop qubit aliasing}:
To check whether there is an across-loop dependency between two instructions, we have to check if one qubit reference may be an alias of another qubit reference \textbf{several iterations later}. Thus, we need to find the minimal increment $\Delta i\geqslant 1$, s.t.
\begin{equation}
    (\exists i \in T)\left((i+\Delta i \in T)\wedge (k_1i+b_1=k_2(i+\Delta i)+b_2)\right).
\end{equation}
\end{definition}

This issue can be reduced to the Diophantine equation 
\begin{equation}
    (k_2-k_1)i+k_2(\Delta i)=b_1-b_2, i\in T, i+\Delta i \in T, \Delta i\geqslant 1,
\end{equation} which can be solved in $O(1)$ time [\textit{see Appendix \ref{appendix:diophantine-noilp}}]. We solve the equation every time when needed rather than memorizing its solution. A visualization of across-loop qubit aliasing is presented in Figure \ref{fig:across-loop-aliasing}.
\begin{figure}
    \centering
    \includegraphics[max width=0.40\textwidth]{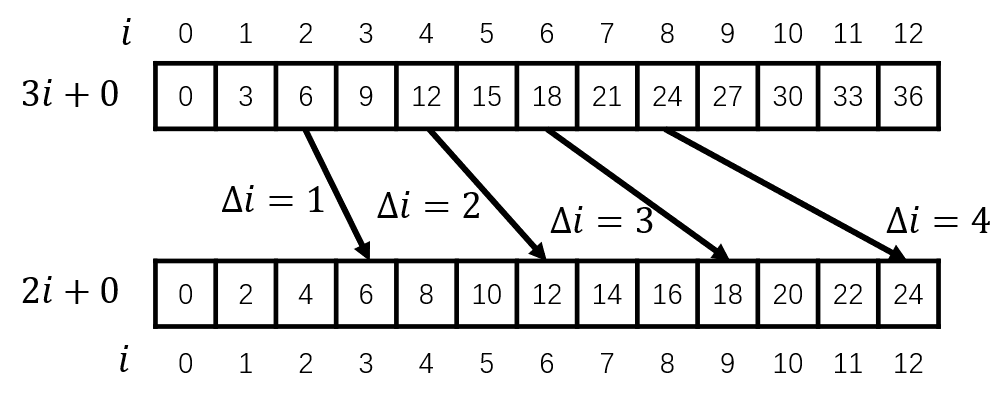}
    \caption{\label{fig:across-loop-aliasing}An example for across-loop qubit aliasing with $k_1=3$ and $k_2=2$. For $T=\mathbb{Z}$, $\Delta i=1$, while for $T=\left[4, 10\right]$, $\Delta i=2$.}
\end{figure}

\subsection{Instruction data dependency}
\label{subsec:data_dep}
One most important step in rescheduling a loop is to find the data dependencies - instrucions that can not be reordered while scheduling. Previous work mostly defined instruction dependency according to matrix commutativity: the order of two instructions can change if their unitary matrices satisfy $AB=BA$. This captures most commutativity between gates, but not all. Here, we relax this requirement by establishing several novel and more relexed commutativity rules between quantum instructions.
Since $CZ$ gates is the only two-qubit gate we use and any two $CZ$ gates commute with each other, what we need to care about is commutativity between $CZ$ gates and one-qubit gates.
\begin{definition}
(CZ conjugation) If for one-qubit gates $U_A$, $U_B$, $V_A$ and $V_B$, we have
$CZU_AU_BCZ=V_AV_B$,
%    \begin{qprognamed}{cz_conjugation_1}
%    qubit a
%    qubit b
%    def	ua,0,'U_A'
%    def ub,0,'U_B'
%    ZZ a,b
%    ua a
%    ub b
%    ZZ a,b
%    \end{qprognamed}
%    =
%    \begin{qprognamed}{cz_conjugation_2}
%    qubit a
%    qubit b
%    def	uc,0,'V_A'
%    def ud,0,'V_B'
%    uc a
%    ud b
%    \end{qprognamed}
%    ,
%\end{equation}
we say $CZ$ conjugates $U_A\otimes U_B$ into $V_A\otimes V_B$.
\end{definition}

Conjugation allows us to swap a $CZ$ gate with a pair of one-qubit gates, at the price of changing $U_A$ and $U_B$ to $V_A$ and $V_B$ correspondingly. 
The following theorem identifies all possible conjugations.
\begin{theorem}\label{rule-1}  
(CZ conjugation of single qubit gates) $CZ$ conjugates $U_A\otimes U_B$ into some $V_A\otimes V_B$ \textbf{if and only if} $U_A$ and $U_B$ are diagonal or anti-diagonal:  $U_i=R_Z(\theta)$ or $U_i=R_Z^+(\theta)$ for $i\in\{A,B\}$.
\end{theorem}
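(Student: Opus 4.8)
My plan is to reduce everything to linear algebra on $2\times 2$ matrices by writing the conjugation in block form. Put $U_A=(u_{ij})_{i,j\in\{0,1\}}$ in the computational basis and use $CZ=\ket{0}\!\bra{0}\otimes I+\ket{1}\!\bra{1}\otimes Z$; a one-line computation then gives
\[
  CZ\,(U_A\otimes U_B)\,CZ \;=\; \sum_{i,j\in\{0,1\}} u_{ij}\,\ket{i}\!\bra{j}\otimes Z^{i}U_BZ^{j},
\]
i.e.\ the $2\times 2$ block matrix whose blocks are $u_{00}U_B$, $u_{01}U_BZ$, $u_{10}ZU_B$, $u_{11}ZU_BZ$. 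Both directions of the theorem will be extracted from this identity.

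For the ``if'' direction I would write $U_A=X^{a}D_A$ and $U_B=X^{b}D_B$ with $a,b\in\{0,1\}$ and $D_A,D_B$ diagonal --- this is exactly what ``diagonal or anti-diagonal'' means. Since diagonal matrices commute with $CZ$ and $CZ^2=I$, we get $CZ(U_A\otimes U_B)CZ=\bigl(CZ(X^{a}\otimes X^{b})CZ\bigr)\,(D_A\otimes D_B)$, so it suffices to settle the four Pauli cases. Using $CZ(X\otimes I)CZ=X\otimes Z$, $CZ(I\otimes X)CZ=Z\otimes X$ (and their product for the $(X,X)$ case), each case is manifestly a tensor product, and one can read off $V_A,V_B$ directly.

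The ``only if'' direction is the crux, and I would run it through the block form rather than by brute-force entrywise matching. Suppose $CZ(U_A\otimes U_B)CZ=V_A\otimes V_B$; then each block satisfies $u_{ij}Z^iU_BZ^j=(V_A)_{ij}V_B$. Consider column $0$ of $U_A$: if both $u_{00}$ and $u_{10}$ were nonzero, then $U_B$ and $ZU_B$ would both be nonzero scalar multiples of $V_B$, forcing $ZU_B=cU_B$ for some scalar $c\neq 0$; comparing rows and using that a unitary has no zero row gives $c=1$ and then a vanishing second row of $U_B$, a contradiction. Hence column $0$ of $U_A$ has exactly one nonzero entry, and the same argument on column $1$ (via the blocks $u_{01}U_BZ$ and $u_{11}ZU_BZ$) does the same for column $1$. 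An invertible $2\times 2$ matrix with exactly one nonzero entry per column is diagonal or anti-diagonal, so $U_A$ has the claimed form; and since $CZ$ is invariant under swapping its two qubits, the identical argument with the roles of $A$ and $B$ interchanged shows $U_B$ is diagonal or anti-diagonal.

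The main obstacle I anticipate is organizing this ``only if'' argument cleanly: the naive route expands $CZ(U_A\otimes U_B)CZ$ into a $4\times 4$ matrix and tries to impose that it factors as a Kronecker product, which is a messy nonlinear system; the block viewpoint isolates the single obstruction ``$ZU_B=cU_B$ is impossible for an invertible $U_B$'' and then lets the qubit-swap symmetry finish the job with no extra work. A secondary point to handle carefully is that $V_A$ and $V_B$ are determined only up to reciprocal scalar factors, so the argument must use only proportionality and non-vanishing of blocks, never a fixed normalization.
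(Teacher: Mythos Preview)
Your proof is correct, and it takes a genuinely different route from the paper's. The paper argues necessity by a \emph{state-based} trick: it first proves the separability criterion $ad-bc=0$ for a two-qubit state $(a,b,c,d)^T$, then feeds the specific input $\ket{0}\otimes U_B^{\dagger}U_A\ket{0}$ through $CZ(U_A\otimes U_B)CZ$ to obtain $CZ\bigl(U_A\ket{0}\otimes U_A\ket{0}\bigr)=(a^2,ab,ab,-b^2)^T$, which must be separable if the conjugate is a tensor product; the criterion then forces $ab=0$, so $U_A\ket{0}\in\{\ket{0},\ket{1}\}$ up to phase. Your argument instead stays purely algebraic: the block identity $CZ(U_A\otimes U_B)CZ=\sum_{i,j}u_{ij}\ket{i}\!\bra{j}\otimes Z^iU_BZ^j$ immediately reduces tensor-factorability to proportionality of blocks, and the obstruction ``$ZU_B=cU_B$ is impossible for invertible $U_B$'' (equivalently $Z=cI$) kills the mixed case. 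Both arguments are short; yours avoids the detour through the separability lemma and exposes the role of the $CZ$ qubit-swap symmetry explicitly, while the paper's version gives a nice physical reading (tensor products preserve product states) and would generalize more readily to other controlled-diagonal gates by changing only the input state. For sufficiency the two proofs essentially coincide: the paper also reduces to $CZ\,X_A=X_AZ_B\,CZ$ after peeling off the diagonal part.
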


\begin{note}
The antidiagonal rule has been named ``EjectPhasedPaulis'' in \cite{quantum_ai_team_and_collaborators_2020_4062499}. However we propose the rules for both necessity and sufficiency: no more commutation rules can be obtained at gate level.
\end{note}

Since identity matrix $I$ is diagonal, $U_A$ and $U_B$ can be thought of as going under conjugation separately. Thus, we only need to consider two special cases: $I\otimes R_Z$ and $I\otimes R_Z^+$. Note that in conjugation rules $R_Z^+$ will always introduce a $Z$ gate to the other qubit. This inspires us to generalize Theorem \ref{rule-1} for a generalized form of $CZ$ defined in the following:

\begin{definition}
(Generalized $CZ$ gates) For $x,y\in\{0,1\}$, we define following variants of $CZ$ gate:
\begin{equation*}
\begin{aligned}
    &CZ_{11}[a,b]=CZ[a,b],\qquad\ \ CZ_{00}[a,b]=-Z[a]Z[b]CZ[a,b]\\
    &CZ_{10}[a,b]=Z[a]CZ[a,b],\ \ CZ_{01}[a,b]=Z[b]CZ[a,b]
\end{aligned}
\end{equation*}
%\begin{enumerate}
%    \item \begin{equation}
%        \begin{qprognamed}{vcz_1}
%        qubit a
%        qubit b
%        ZZ1 b,a
%        \end{qprognamed}
%        =
%        \begin{qprognamed}{vcz_2}
%        qubit a
%        qubit b
%        ZZ a,b
%        Z b
%        \end{qprognamed}
%    \end{equation}
%    \item \begin{equation}
%        \begin{qprognamed}{vcz_3}
%        qubit a
%        qubit b
%        ZZ2 a,b
%        \end{qprognamed}
%        =
%        \begin{qprognamed}{vcz_4}
%        qubit a
%        qubit b
%        ZZ a,b
%        Z b
%        Z a
%        \end{qprognamed}
%    \end{equation}
%\end{enumerate}
\end{definition}

Equivalently, $CZ_{xy}$ can be defined as follows: $CZ_{xy}\ket{ab}=(-1)^{\delta_{ax}\delta_{by}}\ket{ab}$, where $\delta_{ij}$ is Kronecker delta. Now we have the following commutativity rules for generalized $CZ$:
\begin{theorem}
\label{th:generalized_conjugation} (Generalized $CZ$ conjugation of single qubit gates)  When exchanged with $R_Z^+$, $CZ$ gate changes into one of its variants by toggling the corresponding bit.
\begin{enumerate}
    \item $R_Z(\alpha)[b]CZ_{xy}[a,b]=CZ_{xy}[a,b]R_Z(\alpha)[b]$;
    \item $R_Z^+(\alpha)[b]CZ_{xy}[a,b]=CZ_{x(1-y)}[a,b]R_Z^+(\alpha)[b]$.
\end{enumerate}
\end{theorem}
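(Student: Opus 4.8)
The plan is to verify the two identities directly in the computational basis, using the fact that every operator appearing is either diagonal or a scaled bit-flip on the qubit it touches, so it suffices to act on the four basis vectors $\ket{ab}$ with $a,b\in\{0,1\}$ and then invoke linearity.

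\emph{Part (1).} From the equivalent characterization $CZ_{xy}\ket{ab}=(-1)^{\delta_{ax}\delta_{by}}\ket{ab}$, the operator $CZ_{xy}[a,b]$ is diagonal in the computational basis, and $R_Z(\alpha)$ is diagonal by definition (Appendix~\ref{appendix:basic_quantum_gates}), so $R_Z(\alpha)[b]$ is diagonal on the two-qubit space. Two diagonal matrices commute, which gives (1) with no case analysis.

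\emph{Part (2).} Since $R_Z^+(\alpha)$ is antidiagonal, write its single-qubit action as $R_Z^+(\alpha)\ket{c}=\mu_c\ket{1-c}$ for scalars $\mu_0,\mu_1$ depending only on $\alpha$ (their exact values are irrelevant). Then for each basis state $\ket{ab}$,
\[
R_Z^+(\alpha)[b]\,CZ_{xy}[a,b]\ket{ab}=(-1)^{\delta_{ax}\delta_{by}}\,\mu_b\,\ket{a,1-b},
\]
whereas
\[
CZ_{x(1-y)}[a,b]\,R_Z^+(\alpha)[b]\ket{ab}=\mu_b\,(-1)^{\delta_{ax}\delta_{(1-b)(1-y)}}\ket{a,1-b}.
\]
These coincide because of the Kronecker-delta identity $\delta_{(1-b)(1-y)}=\delta_{by}$ — both are $1$ exactly when $b=y$ — i.e. conjugating $CZ_{xy}$ by the bit-flip on qubit $b$ toggles the second subscript. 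Extending over the four basis states by linearity yields the operator identity. (Equivalently, one could deduce (2) from Theorem~\ref{rule-1} together with the observation in the text that moving $R_Z^+$ across $CZ$ leaves a $Z$ on the partner qubit, but the basis computation is shorter.)

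There is no genuine obstacle here; the only point that needs care is the subscript bookkeeping — confirming that it is precisely the subscript attached to the qubit carrying $R_Z^+$ (here $b$, hence $y$) that flips, and that this is dictated by the $(-1)^{\delta_{ax}\delta_{by}}$ sign pattern rather than by any particular matrix convention chosen for $R_Z^+$. I would also note that the $a\leftrightarrow b$ symmetry of all the definitions immediately gives the mirrored statements in which $R_Z$ or $R_Z^+$ acts on qubit $a$ (toggling $x$), so the single case displayed above covers everything.
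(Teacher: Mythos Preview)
Your proof is correct. The paper does not spell out a separate proof of Theorem~\ref{th:generalized_conjugation}; it introduces the characterization $CZ_{xy}\ket{ab}=(-1)^{\delta_{ax}\delta_{by}}\ket{ab}$ precisely so that such identities become immediate, and in the sufficiency part of Theorem~\ref{rule-1} it records the equivalent fact $R_Z^+=XR_Z$ together with $CZ\,X_b=X_bZ_aCZ$, from which the bit-toggle follows after absorbing the extra $Z$ into the $CZ_{xy}$ definition. Your direct basis-state computation using $\delta_{(1-b)(1-y)}=\delta_{by}$ is exactly the intended one-line verification, and your remark on the $a\leftrightarrow b$ symmetry correctly covers the case where $R_Z^+$ acts on the first qubit.
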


Since generalized $CZ$ gates are also diagonal, they commute with each other and can be scheduled just as ordinary $CZ$ gate and converted back to $CZ$ by adding $Z$ gates.

\subsection{Instruction resource constraint}

Qubits have properties that resemble both data and resource: qubits work as quantum data registers and carry quantum data; meanwhile, qubit-level parallelism allows all instructions, if they operate on different qubits, to be executed simultaneously. This results in a surprising property for quantum programs: the resources should be described using linear expressions, instead of by a static ``resource reservation table'' as in the classical case. Using the rules for detecting qubit aliasings, we simply check if there is an aliasing between the qubit references from two instructions, that is, the two instructions share a same qubit at some iteration and cannot be executed simultaneously.

%In classical programs, ``resource'' is usually defined as microarchitectural components like adder or FPU; however, for quantum instruction, the ``resource'', which prevents two instructions from executing simultaneously, is exactly the qubit that the instruction involves.

\section{Rescheduling loop body}
\label{sec:resched_loop_body}
Now we are ready to present the main algorithm for pipelining quantum loop programs. It is based on modulo scheduling via hierarchical reduction \cite{allan1995software}, but several modifications to the original algorithm are required to fit into scheduling quantum instructions on qubits. The entire flow of our approach is depicted in Figure \ref{fig:compilation-flow}.
\begin{figure}
    \includegraphics[max width=0.5\textwidth]{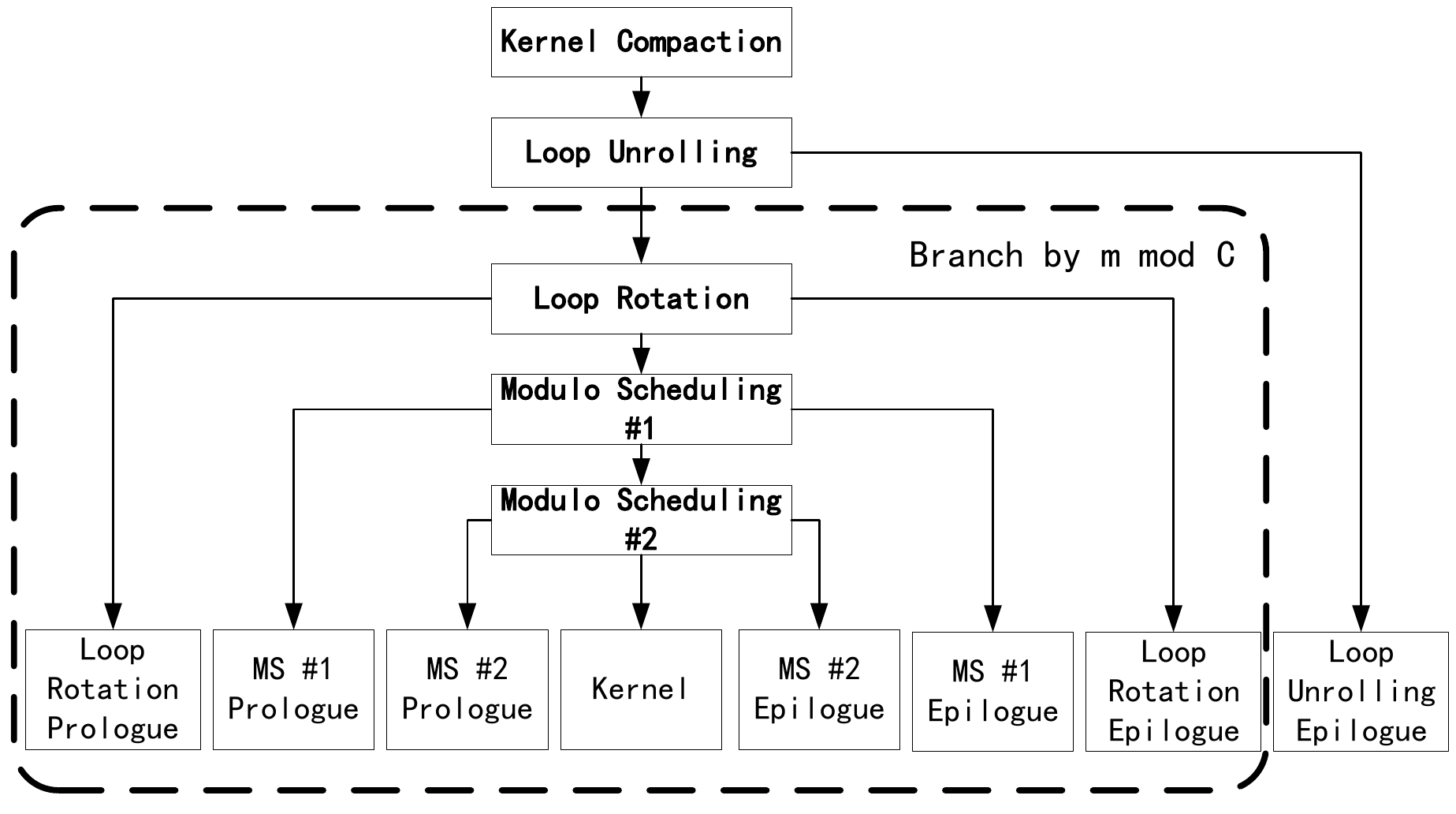}
    \caption{\label{fig:compilation-flow}The entire compilation flow of our approach.}
\end{figure}
For simplicity we suppose the number of iterations is large enough so that we don't worry about generating a long prologue/epilogue.

\subsection{Loop body compaction}

At first we compact the loop kernel to merge the gates that can be trivially merged, including: (a) adjacent single qubit gates; 
(b) diagonal or antidiagonal single qubit gates and their nearby single qubit gates, maybe at the other side of a $CZ$ gate; and (c) adjacent $CZ$ gates. 
To this end, we define the following compaction procedure, which considers the potential aliasing between qubits:
\begin{definition}
    A greedy procedure for compacting loop kernel:
    \begin{itemize}
        \item Initialize all qubits with an ideneity gate.
        \item Place all instructions one by one. Initialize \textbf{operation} to ``Blocked''. Check the new instruction (A) against all placed instructions (B). Update \textbf{operation} according to Table \ref{tab:compaction}.
        \item Perform the last \textbf{operation} according to the table.
        \begin{itemize}
            \item ``Blocked'' means the instruction is put at the end of the instruction list.
            \item ``Merge with B'' means the single qubit instruction is merged with the placed single qubit gate B. If the placed gate is an antidiagonal, $Z$ gates should be added for uncancelled $CZ$ gates that occur earlier but are placed after the antidiagonal.
            \item ``Cancelled'' means two $CZ$ gates are cancelled. Note that the added $Z$ gates are not cancelled. Also, a third arriving $CZ$ can ``uncancel'' a cancelled $CZ$, which we also call as ``Cancelled''.
        \end{itemize}
    \end{itemize}
\end{definition}
\begin{figure*}
    \begin{tabular}{|
    >{\columncolor[HTML]{EFEFEF}}l |l|l|l|l|}
    \hline
    \cellcolor[HTML]{C0C0C0}A\textbackslash{}B & \cellcolor[HTML]{EFEFEF}SQ with same qubit & \cellcolor[HTML]{EFEFEF}SQ with in-loop aliasing & \cellcolor[HTML]{EFEFEF}CZ with same qubit & \cellcolor[HTML]{EFEFEF}CZ with aliasing qubit \\ \hline
    Diagonal SQ                                & Merge with B                               & Blocked                                          &                                            &                                                \\ \hline
    AntiDiagonal SQ                            & Merge with B                               & Blocked                                          &                                            & Blocked                                        \\ \hline
    General SQ                                 & Blocked                                    & Blocked                                          & Blocked                                    & Blocked                                        \\ \hline
    CZ                                         & Blocked                                    & Blocked                                          & If exactly-same then Cancel                &                                                \\ \hline
    \end{tabular}
    
    \captionof{table}{\label{tab:compaction}Operation table for loop kernel compaction. Empty cell means using previous operation. Check is performed from left to right, so antidiagonal can pass through $CZ$ with a same qubit and an aliasing qubit.}
\end{figure*}

This compaction can be done in two directions: compacting to the left or to the right. They can be seen as the results of ASAP schedule and ALAP  correspondingly.
\begin{figure}
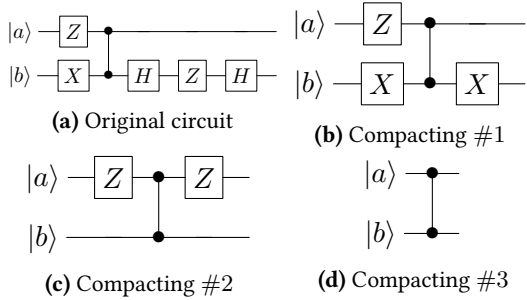

    \centering%
    \begin{subfigure}{0.2\textwidth}
        \centering%
        \begin{qprognamed}{greedy_leftcompact_1}
            qubit a
            qubit b
            Z a
            X b
            ZZ a,b
            H b
            Z b
            H b
        \end{qprognamed}
      \caption{Original circuit}
    \end{subfigure}%
    \begin{subfigure}{0.2\textwidth}
        \centering%
        \begin{qprognamed}{greedy_leftcompact_2}
            qubit a
            qubit b
            Z a
            X b
            ZZ a,b
            X b
        \end{qprognamed}
      \caption{Compacting $\#1$}
    \end{subfigure}
    \begin{subfigure}{0.2\textwidth}
        \centering%
        \begin{qprognamed}{greedy_leftcompact_3}
            qubit a
            qubit b
            Z a
            ZZ a,b
            Z a
        \end{qprognamed}
      \caption{Compacting $\#2$}
    \end{subfigure}
    \begin{subfigure}{0.2\textwidth}
        \centering%
        \begin{qprognamed}{greedy_leftcompact_4}
            qubit a
            qubit b
            ZZ a,b
        \end{qprognamed}
      \caption{Compacting $\#3$}
    \end{subfigure}
    \caption[Compaction converges only after multiple applications.]{Compacting more than once yields better result.}
    \label{fig:greedy-leftnormalize-bad}
  \end{figure}
However, this procedure does not guarantee compacting once will converge: not all the outputs from the procedure are fixpoints of the procedure. For example, the circuit in Figure \ref{fig:greedy-leftnormalize-bad}  only converges after three applications of left compaction. In general, we have the following:
%We may want to know how many times we need to compact the circuit so that it can't be compacted more.
%In fact, it can be shown that:
\begin{theorem}\label{thm:compaction}
    Compacting three times results in a fixpoint of the compaction procedure.
\end{theorem}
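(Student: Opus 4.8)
The plan is to analyze what can change between successive applications of the greedy compaction procedure, and to show that each pass that produces a non-fixpoint must have triggered a specific structural event, of which only a bounded number can occur. The key observation is that the only reason the procedure fails to reach a fixpoint in one pass is the left-to-right checking order in Table~\ref{tab:compaction}: an antidiagonal gate can ``pass through'' a $CZ$ sharing a qubit, which rewrites a $CZ$ into a $CZ$ together with added $Z$ gates (diagonal gates) on both wires. These newly created diagonal $Z$ gates are exactly the gates that were not available for merging on the previous pass but become mergeable on the next. So I would track, per qubit wire, the ``frontier'' of merge opportunities and argue it can be exhausted quickly.

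Concretely, I would classify the single-qubit gates on each wire after compaction into the standard three buckets (diagonal, antidiagonal, general). First I would argue that after one pass, between any two consecutive $CZ$ gates on a given wire there is at most one surviving single-qubit gate, and adjacent $CZ$ gates on the same pair of qubits have been cancelled; this is immediate from the Merge/Cancel rules. Second, I would argue that the only way a second pass changes anything is that an antidiagonal gate, now sitting next to a $CZ$ it could not previously reach (because on pass one it was blocked behind a general or a differently-aliased gate that has since merged away), slides across that $CZ$, emitting $Z$ gates. Third — and this is the crux — I would show that after this happens once on a wire, the configuration on that wire stabilizes: the emitted $Z$ gates are diagonal, diagonals never block antidiagonals or other diagonals (only ``SQ with in-loop aliasing'' and general SQ block, per the table), so on the third pass everything that can merge does merge, leaving a fixpoint. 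Counting the worst case — pass one does the trivial merges, pass two performs the antidiagonal-through-$CZ$ rewrites and the merges they enable, pass three mops up the $Z$ gates produced in pass two — gives the bound of three.

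I would organize the write-up as a potential/measure argument: define a measure on a compacted loop kernel (e.g. the number of single-qubit gates, lexicographically combined with the number of uncancelled $CZ$ pairs and the number of antidiagonals not yet adjacent to a $CZ$ they could pass), show it is non-increasing under the procedure, show it is strictly decreasing whenever the output is not a fixpoint, and bound the number of strict decreases by $2$ after the initial pass. Care is needed with the across-iteration bookkeeping: ``same qubit'' versus ``in-loop aliasing'' versus ``aliasing qubit'' are decided by Definitions~\ref{def-ilqa} and~\ref{def-alqa}, and merging two gates in the kernel is only legal when the qubit references provably coincide for all $i\in T$, so I must check that a merge performed on pass $k$ does not later have to be undone — it cannot, since coincidence of references is a static fact independent of the gates present.

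The main obstacle I anticipate is ruling out longer ``chains'' of propagation: a priori one might fear that the $Z$ gates produced when an antidiagonal crosses a $CZ$ could themselves merge with a further antidiagonal, turning it effectively into something that then crosses yet another $CZ$, and so on down the wire, costing one extra pass per $CZ$. I would close this off by noting that a $Z$ (diagonal) merged into an antidiagonal yields again an antidiagonal whose crossing behaviour is already accounted for in pass two — i.e. the relevant crossings are all ``unlocked'' simultaneously once the pass-one merges have cleared the general-gate obstructions, rather than one at a time — so the propagation depth is bounded by a constant, not by the wire length. Pinning down this simultaneity precisely, rather than hand-waving it, is the part of the argument that will require the most care.
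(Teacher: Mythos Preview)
Your three-pass skeleton matches the paper's, but the argument has a real gap at exactly the point you flag as ``the part \dots\ that will require the most care.''

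First, you have misidentified the obstruction. The reason one pass does not suffice is \emph{not} only that an antidiagonal crossing a $CZ$ emits a $Z$. There is a second, independent reason: a gate that was placed as \emph{general} in pass~1 may, after subsequent single-qubit merges in that same pass, have become diagonal or antidiagonal (the running example is $HZH=X$: the first $H$ is general and is blocked by the $CZ$; after $Z$ and $H$ merge in, the slot holds an antidiagonal $X$ that never got the chance to cross). Your write-up treats the $Z$-emission as ``the only reason'' and so your pass-by-pass accounting is off from the start.

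Second, and more importantly, your proposed closure of the cascading worry does not close it. You argue that a $Z$ absorbed into an antidiagonal remains antidiagonal; that is true but irrelevant. The dangerous direction is the opposite one: could a diag/antidiag in pass~2 merge into a \emph{general} gate and turn it into a diag/antidiag, thereby unlocking a fresh crossing and forcing a fourth pass? Your ``unlocked simultaneously'' claim is exactly what needs proof, and the proof is an algebraic fact you never state: the diagonal and antidiagonal single-qubit gates together form a subgroup $G$ of $U(2)$, hence for $g_1\notin G$ and $g_2\in G$ one has $g_2 g_1\notin G$. That corollary is the entire engine of the bound---it says a general gate, once general after pass~1, stays general under every subsequent merge, so pass~2 cannot manufacture new crossing opportunities; the only new material pass~2 creates is the emitted $Z$'s, which are diagonal and are absorbed in pass~3 without changing anyone's type. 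Your potential-function outline will not yield the constant~3 without this lemma either; it gives termination, not the specific bound. Add the subgroup observation and both gaps close at once.
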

%The proof is left in appendix \ref{appendix:compaction}.

Note that we allow using unknown single-qubit gates. If all components are known to be diagonal or antidiagonal, the product of these matrices is also diagonal or antidiagonal [\textit{see Appendix  \ref{appendix:compaction}}]. Otherwise, we can only see the product as a general matrix. However, this does not affect our result of three-time compaction.
\begin{figure}
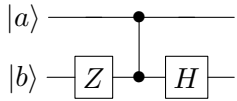

    \centering
    \begin{qprognamed}{bidir_compact}
    qubit a
    qubit b
    Z b
    ZZ a,b
    H b
    \end{qprognamed}
    \caption{Left compaction will miss the chance of compacting the $Z$ gate and the $H$ gate.}
    \label{fig:bidirectional-merge}
\end{figure} 
Also compacting in one direction does not capture all chances of merging. Figure \ref{fig:bidirectional-merge} shows that some single-qubit merging changes are missed out. In practice we perform a left compaction after a right compaction.

{
\subsubsection{Loop unrolling and rotation}
Loop kernel compaction can only discover gate merging and cancellation in one iteration. However, gate merging and cancellation can also occur across iterations. For example, in Figure \ref{fig:alias_captured} the last $H$ gate in the previous iteration can be merged and cancelled with the first $H$ gate in the next iteration. This kind of cancellation cannot be discovered by software pipelining either, since it is a reordering technique and cannot cancel instructions out.

An instruction $i$ in one iteration may merge or cancel with instruction $j$ from $t\geqslant 1$ iterations later. All potential merging of single qubit gates and cancellable $CZ$ gates can be written out by enumerating all pairs of instructions. 
Loop rotation\cite{loop_rotation} is an optimization technique to convert across-loop dependency to in-loop dependency (so that some variables can be privatized and optimized out). Consider a loop ranging from $m$ to $n$:
%\begin{equation}
   % \label{equ:before_rot}
    $\left\{A_iB_iC_i\right\}_m^n.$
%\end{equation}
Here, $A_i$ can be rotated to the tail of the loop:
%\begin{equation}
 %   \label{equ:after_rot}
$   A_m\left\{B_iC_iA_{i+1}\right\}_m^{n-1}B_nC_n,$
%\end{equation}
and $C_i$ and $A_{i+1}$ are now in one iteration. If $C_i$ writes into a temporary variable and $A_{i+1}$ reads from it, this variable can be privatized. 
For merging candidates with $t=1$, we can use a similar procedure:
\begin{definition}
    An instruction is considered movable if it satisfies one of following conditions:
    \begin{itemize}
        \item The instruction is a single-qubit gate, and there are no gates on the same qubit or on an aliasing qubit before it; in this case the instruction can be rotated to the right.
        \item The instruction is a $CZ$ gate, and there are no \textbf{single-qubit gates} on the same qubit or on ailasing qubits; in this case the instruction can be rotated to the right.
        \item The instruction is a $CZ$ gate, and there are no \textbf{single-qubit gates} on the same qubit or on ailasing qubits \textbf{except} the $CZ$ gate has only one linear offset reference with $k=0$ and there is a single-qubit gate on this qubit. In this case, the instruction will be rotated to the right along with this single qubit gate.
    \end{itemize}
\end{definition}

This definition of movable instructions guarantees the programs before and after the rotation are equivalent.
We use the following procedure to rotate one instruction from left to  right:
\begin{enumerate}
    \item Find the first \textbf{unmarked} movable instruction that, there exists another instruction to merge or cancel with $t=1$.
    \item \textbf{Mark} the chosen instruction, and rotate the instruction to the right. The instruction is added to prologue and the others added to epilogue.
    %as from equation \ref{equ:before_rot} to equation \ref{equ:after_rot}.
    \item Perform left compaction on the new loop kernel. Note that the left-compaction algorithm is modified, so that merging single-qubit gates or cancelling $CZ$ gates will clear the mark.
    \item If there is no rotatable instruction, stop the procedure.
\end{enumerate}

%After the rotating procedure, we assert:

\begin{corollary}
    If the original loop has only candidates with $t=1$ and no one-qubit gate merges with itself, this procedure eliminates all across-loop merging or cancellation. That is, if we unroll the loop after rotation, the unrolled quantum ``circuit'' should be a fixpoint of compaction procedure.
\end{corollary}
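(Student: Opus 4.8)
The plan is to split the statement into two parts: that the rotation procedure terminates, and that once it stops the unrolled program admits no merge or cancellation at all, i.e.\ it is a fixpoint of the compaction procedure. Throughout I would organize the $t=1$ merging/cancellation candidates of the compacted kernel as the edges of a graph on its gates — one vertex per kernel gate, and an edge $\{X,Y\}$ whenever $X$ in one iteration can merge or cancel with $Y$ one iteration later. The hypothesis ``no one-qubit gate merges with itself'' says this graph has no self-loops; the hypothesis ``only candidates with $t=1$'' says that, in the unrolled circuit, every merging opportunity is between the ``tail'' occurrence of some kernel gate in instance $k$ and the ``head'' occurrence of another kernel gate in instance $k+1$, with at most commuting $CZ$'s in between as licensed by Theorems~\ref{rule-1} and~\ref{th:generalized_conjugation}.

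For termination I would attach to every intermediate loop the lexicographically ordered measure (number of gates in the kernel, number of unmarked instructions incident to a $t=1$ edge). Step~2 marks one incident instruction without changing the kernel gate count; if the left compaction of Step~3 performs the merge or cancellation that the rotation set up, the first component strictly drops (the marks cleared by that merge only reset the second, bounded component); if Step~3 performs no merge, the first component is unchanged and the second drops by at least one. Since the measure is bounded below, the procedure halts. The key sublemma is that rotating a movable instruction past the iteration boundary and recompacting never enlarges the candidate graph in a way that defeats the measure: because every candidate is $t=1$, the moved instruction meets in its new iteration exactly the partner that originally lay one iteration away, so that single across-loop edge is either contracted by Step~3 (a genuine merge) or, if compaction is blocked, carried along marked — and no new edge incident to an \emph{unmarked} gate is created, since the gates on each qubit and their order relative to the $CZ$'s are only shuffled locally, not multiplied.

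For the fixpoint claim I would argue by contradiction: suppose after termination the unrolled program still admits a merge or cancellation. By the $t=1$ structure it is between a tail gate of instance $k$ and a head gate $H$ of instance $k+1$ on a common or aliasing qubit, possibly across commuting $CZ$'s. I would then verify that $H$ satisfies the movability conditions of the definition above — a single-qubit head gate of a compacted kernel has no gate before it on its qubit or on an aliasing qubit, and a $CZ$ head gate is covered by the second or third movability clause (the third being exactly the ``$k=0$ offset'' bookkeeping case) — and that $H$ is unmarked, because a marked instruction has, since its mark was last cleared, already been rotated out with its partner absorbed and so cannot head a live edge. Then Step~1 would find $H$ and Step~4 would not have stopped, a contradiction. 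Hence no across-iteration candidate remains; combined with the fact that the closing left compaction, iterated to a fixpoint (cf.\ Theorem~\ref{thm:compaction}), leaves each kernel instance internally merge-free, and after compacting the prologue and epilogue as well, the whole unrolled circuit is a fixpoint of the compaction procedure.

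The main obstacle is the sublemma inside the termination argument — showing that one round of rotate-then-recompact strictly decreases the measure — because the compaction steps are not purely local: moving an antidiagonal single-qubit gate past a $CZ$ injects $Z$ corrections on the other wire, and a third incoming $CZ$ can ``uncancel'' a previously cancelled pair. I would isolate this as a lemma and check case by case (diagonal versus antidiagonal moved gate; $CZ$ cancellation with and without the uncancel; aliasing versus same qubit for the partner) that these corrections only ever reproduce gates already present and never create a one-qubit merge opportunity that was not already among the original $t=1$ edges, so that the candidate graph only loses edges by contraction and never gains an edge touching an unmarked vertex.
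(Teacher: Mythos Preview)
The paper does not actually prove this corollary: it is asserted immediately after the rotation procedure is defined, and the text then moves straight on to loop unrolling for the $t>1$ case. None of the appendices treat it either. So there is no ``paper's own proof'' to compare against, and your proposal already goes well beyond anything the paper supplies.

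Your overall shape---a termination measure plus a contradiction argument for the fixpoint claim---is the natural one, and the termination half is essentially fine: any merge in Step~3 removes at least one gate, so the first lexicographic component drops, and if no merge occurs the newly marked instruction drops the second component.

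The gap is in the fixpoint half, and it is not the sublemma you flag. You argue that a surviving head gate $H$ of a $t=1$ edge must be unmarked ``because a marked instruction has, since its mark was last cleared, already been rotated out with its partner absorbed and so cannot head a live edge.'' This is circular. Marks are cleared precisely when a merge or cancellation happens (Step~3), so a mark that persists witnesses that the intended merge did \emph{not} go through. Your argument silently assumes that whenever a movable instruction with a $t=1$ partner is rotated, the ensuing left compaction necessarily performs that merge---but that is exactly the substantive claim you need, and it is not the same as your ``no new edges'' sublemma. To close the gap you must show directly that under the hypotheses (only $t=1$ candidates, no self-merges), rotation of a movable head $H$ places $H_{i+1}$ on the same qubit expression as its partner with nothing non-commuting in between, so compaction always succeeds and the mark is always cleared. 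This uses that a $t=1$ candidate forces the two offsets to have equal slope, that movability guarantees nothing sits before $H$ on that qubit, and that anything between the partner and the iteration boundary already commutes (else the pair would not have been a candidate). Once you have ``rotation always triggers the merge,'' marks never persist, the second lexicographic component is superfluous, and both termination and the fixpoint claim follow from the strict decrease in gate count.
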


However, loop rotation can only handle potential gate merging across one iteraion (i.e. from nearby iterations). To handle potential merging across many iteraions, we adopt loop unrolling from classical loop optimization. While the major objective for loop unrolling is usually to reduce branch delay, Aiken et al. \cite{aiken1988optimal} also used loop unrolling to unroll first few iterations of loop and schedule them ASAP, so that repeating patterns can be recognized into an optimal software pipelining schedule.
Our approach uses modulo scheduling instead of kernel recognition, but we can still exploit the power of loop unrolling to capture patterns that require many iterations to reveal. The key point is that unrolling decreases $t$. Suppose we use a graph to represent all ``candidates for instruction merging'', with edge $A\stackrel{t}{\longrightarrow} B$ indicating instruction $A$ will merge with or cancel out instruction $B$ from $t$ iterations later, if we unroll the loop by $C$ times, the weight of the edges in the graph will decrease.

\begin{example}
    Figure \ref{fig:unroll_perfect_pipelining} gives an example showing the connection between the ``merging graph'' before unrolling and the one after unrolling: if 
$        \forall t, C\geqslant t$, there are no edges with $t>1$.
\end{example}

\begin{figure}
    \begin{center}
        \begin{subfigure}{0.45\textwidth}
            \begin{center}
            \includegraphics[max width=\textwidth, max height=0.5\textheight]{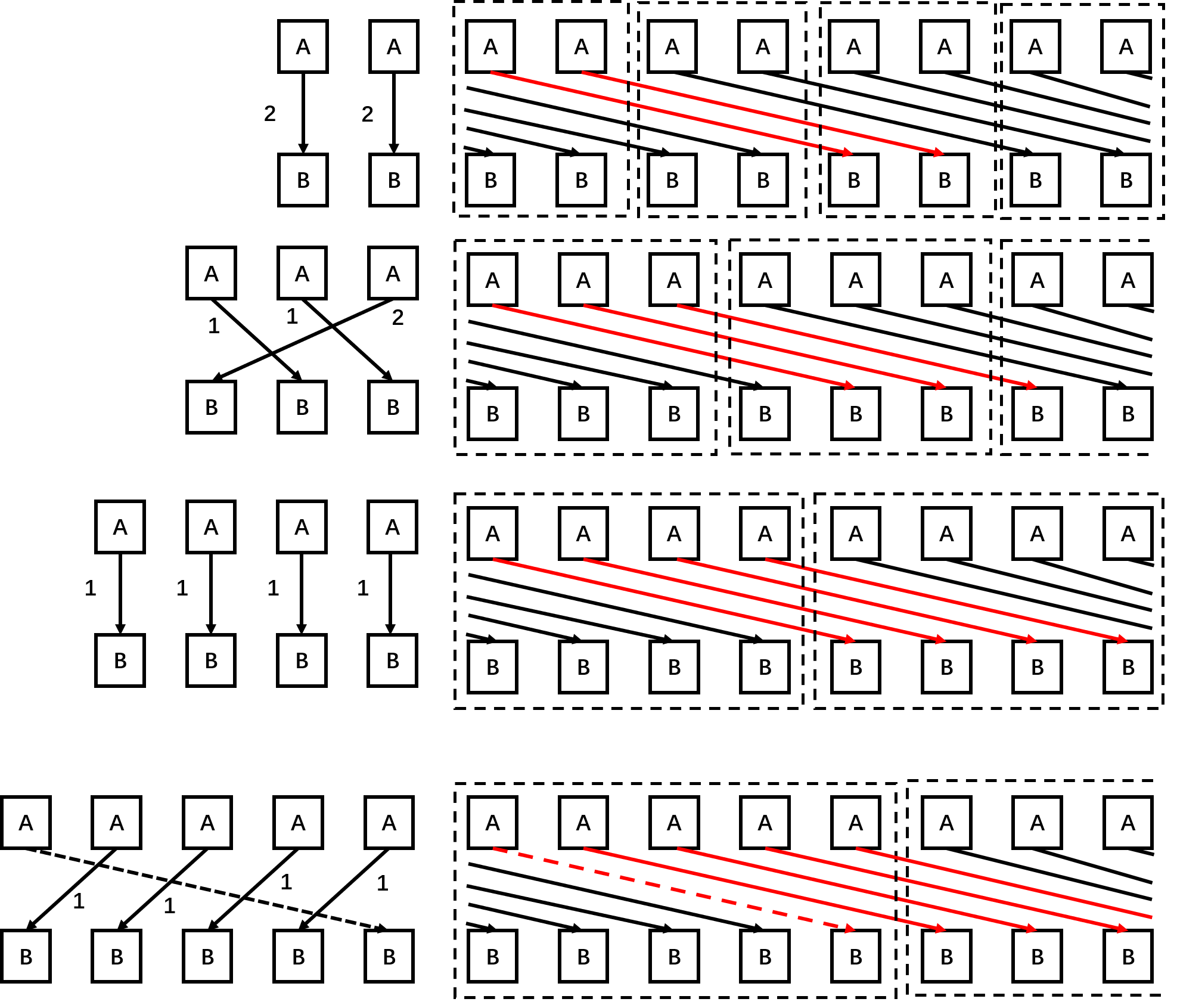} 
            \end{center}
            %\caption{Unrolled $2,3,4,5$ times.}
        \end{subfigure}
    \end{center}
    \caption{Example for the QDGs of loop ``$A\stackrel{4}{\longrightarrow} B$'' unrolled 2, 3, 4 and 5 times. Unrolling the loop decreases the edge weight $t$. When $C=max\left\{t\right\}$ all edges will be decreased to weight 1.}
    \label{fig:unroll_perfect_pipelining}
\end{figure}

There is a tradeoff between generated code length (determined by $C$) and remaining $t>1$ edges. For example, if there is an edge with $t=10000$, we are not likely to unroll the loop for $10000$ times just to merge the two single qubit gates. Also for eliminating self-cancelling $CZ$ gates (i.e. $CZ$ gates on a pair of constant qubits), we may want $C\geqslant 2$ and $C$ even.
In the following discussion we use $C$ as a configurable variable in our algorithm determining the maximal allowed unroll time (and the minimal time of iterations of the loop).
The new unrolled loop will be in the form
\begin{equation}
\begin{split}
    &for(i=m;i\leqslant n;i+=C)\left\{op(k_ti+b_t)\right\}\\
    &for(i=m';i\leqslant n;i+1)\left\{op(k_ti+b_t)\right\}
\end{split}
\end{equation}
and the first loop should be written into
\begin{equation}
    for(i=0;i\leqslant n';i+=1)\left\{op(Ck_ti+b_t+mk_t)\right\}
\end{equation}
where $n'=\floor*{\frac{n-m+1}{C}}-1$ and $m'=C(n'+1)+m$.
This step of transformation makes sure the loop stride is still $1$ after loop unrolling.
Note that item $(mk_t)$ appears in every offset of the loop body. If $m$ is unknown we can't proceed with our algorithm.
Fortunately, since $m=pC+q, q=m\mod C$, we have
%\begin{equation}
    $Ck_ti+b_t+mk_t=Ck_t(i+p)+b_t+qk_t,$
%\end{equation}
showing that when the range is unknown, the results of array dependency depend only on the Euclidean modulo $q=m \mod C$. In this case, we can generate $C$ copies of code for each case of $q$, and perform following parts of the algorithm on each copy.

Let us briefly summarize our compilation flow till now: we compact the loop kernel, unroll the loop by $C$, and rotate some instructions in the  unrolled loop kernel. The unrolling step may copy the loop by $C$ times, and steps after unrolling (including rotation) will be performed on each copy.
}

\subsection{Modulo scheduling}
Our next step is modulo scheduling borrowed from~\cite{lam1988software}: 
\begin{enumerate}
    \item Find in-loop and loop-carried dependencies.
    \item Estimate an initialization interval $II$. For simplicity we use binary search and the maximum $II$ is total instruction count. Use Floyd to check validity.
    \item Using Tarjan algorithm to find strong connected components and schedule all SCCs by in-loop dependency subgraph.
    \item Merge every SCC in DDG into one node, obtaining a new DDG.
    \item Schedule the new DDG by list scheduling.
\end{enumerate}
There are some major differences between quantum programs and the classical programs considered in ~\cite{lam1988software}:
\subsubsection{Quantum dependency graph}

The instruction dependency for quantum programs is described by a QDG (Quantum Dependency Graph) as a generalization of DDG (Data Dependency Graph), where vertices represent instructions and edges represent precedence constraints that must be satisfies while reordering. In modulo scheduling, a dependency edge is described by two integers: $min$ and $dif$. Suppose there is an edge pointing from instruction $A$ to instruction $B$ with parameter $\left(min, dif\right)$, it means ``instruction $B$ from $dif$ iterations later should be scheduled at least $min$ ticks later than instruction $A$ in \textbf{this} iteration''. Recall from Section \ref{subsec:qubit_alias} and \ref{subsec:data_dep}, our dependency is defined by the rules:
\begin{enumerate}
    \item There are no dependencies between $CZ$ gates, or between a $CZ$ and a diagonal single qubit gate.
    \item In-loop dependency: if two offsets are on the same qubit array and reveal in-loop qubit aliasing, there is a dependency edge $(1, 0)$ between the corresponding instructions. To unify with across-loop, we set $\Delta i=0$.
    \item Across-loop dependency: if two offsets are on the same qubit array and reveal across-loop qubit aliasing with $\Delta i$, there is a dependency edge $(1, \Delta i)$ between the corresponding instructions.
    \item Exception on antidiagonal gates: if the qubit $(k_1i+b_1)$ of an antidiagonal gate aliases with one operand $k_2i+b_2$ of a $CZ$ gate and $k_1=k_2$, we remove the edge if there's no aliasing on the other operand.
    \item Exception on single qubit gates: if two single qubit gates operate on the same qubit array where offsets $(k_1i+b_1)$ and $(k_2i+b_2)$ aliases with each other and $k_1=k_2$, we specify the dependency edge to be valued $(0, \Delta i)$, that is, $min=0$ rather than $min=1$.
\end{enumerate}
There may be multiple edges in the graph connecting the same pair of instructions; for example, an in-loop dependency and an across-loop dependency between the two instructions. Since we are going to use Floyd algorithm on the graph to compute largest distance in modulo scheduling, we only need the edge with the maximal $(min-II\cdot dif)$ after assigning $II$. Fortunately we don't need to save all multiple edges, since the following theorem guarantees that we can compare $(min-II\cdot dif)$ before assigning different $II$s.

\begin{theorem}
\label{th:remove-multiple-edge}
    Suppose $(min_1, dif_1), (min_2,dif_2)$ are two edges with $min_1\leqslant 1$, $min_2\leqslant 1$ and $dif_1>dif_2$. Then for all $II\geqslant 1$, we have: 
   % \begin{equation}
        $min_1-II\cdot dif_1 \leqslant min_2-II\cdot dif_2.$
    %\end{equation}
 \end{theorem}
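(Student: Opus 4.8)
The plan is to reduce the claimed inequality to a one-line integer estimate. First I would rearrange $min_1 - II\cdot dif_1 \leqslant min_2 - II\cdot dif_2$ into the equivalent form
\begin{equation*}
    min_1 - min_2 \leqslant II\,(dif_1 - dif_2),
\end{equation*}
so that the two sides are separated into a "$min$ part" and a "$dif$ part". This is the statement I actually intend to verify.

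Next I would pin down the ranges of the quantities involved by inspecting the dependency rules from the definition of the QDG in Section~\ref{subsec:qubit_alias}--\ref{subsec:data_dep}. Every edge produced by those rules has $min\in\{0,1\}$: the in-loop rule gives $(1,0)$, the across-loop rule gives $(1,\Delta i)$, and the single-qubit exception gives $(0,\Delta i)$. Hence $0\leqslant min_1\leqslant 1$ and $0\leqslant min_2\leqslant 1$, so $min_1-min_2\leqslant 1$. Likewise $dif$ is always a nonnegative integer ($dif=0$ for in-loop edges, $dif=\Delta i\geqslant 1$ for loop-carried edges); since $dif_1$ and $dif_2$ are integers with $dif_1>dif_2$, we get $dif_1-dif_2\geqslant 1$.

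Finally I would combine these: because $II\geqslant 1$ and $dif_1-dif_2\geqslant 1$,
\begin{equation*}
    II\,(dif_1-dif_2)\ \geqslant\ II\ \geqslant\ 1\ \geqslant\ min_1-min_2,
\end{equation*}
which is exactly the rearranged inequality, so the theorem follows.

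There is essentially no technical obstacle here; the only point that needs care is that the hypotheses as literally stated ($min_1\leqslant 1$, $min_2\leqslant 1$) are not by themselves sufficient — a lower bound $min_2\geqslant 0$ is genuinely needed (otherwise the inequality fails for, e.g., very negative $min_2$). So the one substantive step is to justify $min_j\in\{0,1\}$ from the construction of the dependency edges rather than from the stated inequalities, and to note that $dif_1,dif_2\in\mathbb{Z}$ so that $dif_1>dif_2$ upgrades to $dif_1-dif_2\geqslant 1$. Everything else is arithmetic.
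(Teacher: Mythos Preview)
Your proof is correct and follows essentially the same arithmetic route as the paper: both reduce the claim to $min_1 - min_2 \leqslant 1 \leqslant II\,(dif_1 - dif_2)$, using that $dif_1,dif_2\in\mathbb{Z}$ to upgrade $dif_1>dif_2$ to $dif_1-dif_2\geqslant 1$. Your observation about the gap is also on point: the paper's proof derives ``$min_1 \leqslant min_2 + 1$'' directly from ``$min_1\leqslant 1$ and $min_2\leqslant 1$'', which is invalid without the lower bound $min_2\geqslant 0$ that you correctly extract from the QDG construction, so your version is in fact the more careful of the two.
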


This theorem allows us to sort multiple edges by lexical ordering of $(dif, -min)$ (i.e. compare $dif$ first, and compare $(-min)$ if $dif_1=dif_2$) and the smallest one is exactly the edge with maximal $(min-II\cdot dif)$.

\begin{figure}
    \begin{subfigure}{0.4\textwidth}
        \begin{lstlisting}[frame=single]
for x=m to n do
    CNOT q1[x-50],q0[x+0];
    CNOT q1[x-50],q0[x+0];
end for
        \end{lstlisting}
        \caption{Loop program.}
    \end{subfigure}
    \begin{subfigure}{0.3\textwidth}
        \includegraphics[max width=\textwidth]{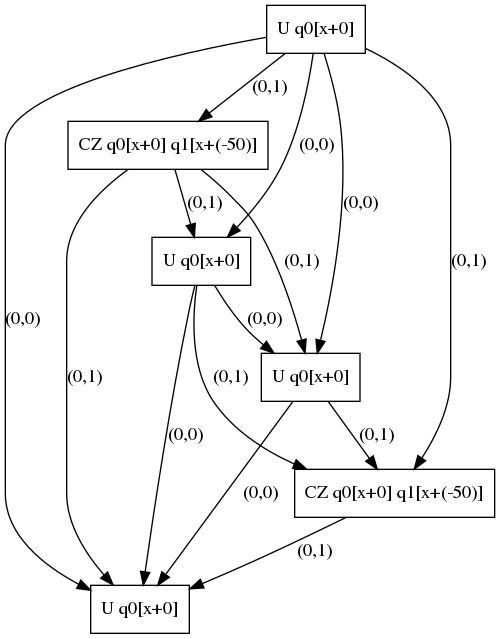}
        \caption{Corresponding QDG.}
    \end{subfigure}
    \caption{Quantum dependency graph example. Tuples represent $(min, dif)$.}
    \label{fig:qdg_example}
\end{figure}

\subsubsection{Resource conflict handling}
Another important issue when inserting an instruction into modulo scheduling table or merging two strong connected components is resource conflict: there is no dependency between two $CZ$ gates, yet they may not be executed together because they may share a same qubit.
To solve this issue, let us first introduce several notations:
    \begin{enumerate}
        \item $II$ is the current iteration interval being tested.
        \item $L$ is the length of the original loop kernel.
        \item The $c$-th instruction in the original loop is placed in the modulo scheduling table at tick $t=pII+q$, where $p\geqslant 0$, $0\leqslant q<II$.
    \end{enumerate}

\begin{example}
    Figure \ref{fig:ms_example} is a simple example for modulo scheduling. In this case, $II=2$ and $L=4$. Instructions are placed at time slot $0,2,3,4$. Thus, $A$ from one iteration, $B$ from a previous iteration, and $D$ from previous 2 iterations are executed simultaneously, while $C$ is executed alone.
\end{example}

We use the retrying scheme: if a resource conflict is detected, try next tick. The basic approach to detect resource conflict is detecting in-loop qubit aliasing. This leads to two new problems that do not exist in the   classical case:
\begin{enumerate}
    \item The array offsets of instruction operands may increase. As $t$ increases, $p$ also increase, and the instruction comes from one more iteration earlier, thus changing array offsets.
    \item The pair of instructions for resource conflict checking may not both exist in some iterations. Increasing $t$ leads to a long prologue and long epilogue, shrinking the range for loop kernel, and may eliminate the resource conflict that once existed (when the loop range is known).
\end{enumerate}

\begin{example}
    Suppose when generating the schedule in Figure \ref{fig:ms_example}, we have inserted instructions $A$, $B$ and $C$, and are ready to insert $D$ at time slot $4$.
    \begin{enumerate}
        \item Since $4=2II+0$, the $D$ in the loop kernel is from two iterations earlier compared with the iteration that the $A$ is in. We have to decrease offset of $D$ operands by $2i$. The offseted index may no longer conflict with $A$.
        \item When checking if there is resource conflict between $D$ and $A$, we only need to check the case where both iterations are valid; that is, $i=2$. 
%        \begin{equation}
 %           \left\{\begin{aligned}
  %              &i\geqslant 0\\
   %             &i\leqslant 2\\
    %            &i-2\geqslant 0\\
     %           &i-2\leqslant 2
      %      \end{aligned}\right.,
       % \end{equation}
        %which is
        %\begin{equation}
         %   2\leqslant i \leqslant 2.
        %\end{equation}
        This means the scheduling is still valid even if $A_0$ has a resource conflict with $D_{-2}$, since $D_{-2}$ does not even exist.
    \end{enumerate}
\end{example}
\begin{figure}
    \begin{subfigure}{0.15\textwidth}
        \begin{center}
        \includegraphics[max width=\textwidth]{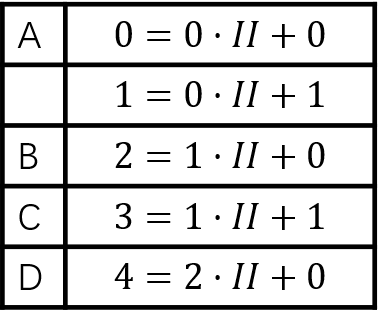}
        \end{center}
        \caption{Rescheduled single iteration. $II$=2.}
    \end{subfigure}
    \hspace{0.1cm}
    \begin{subfigure}{0.1\textwidth}
        \begin{center}
        \includegraphics[max width=\textwidth]{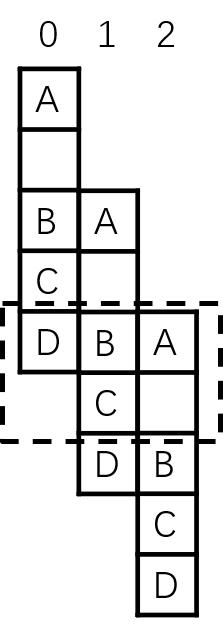}
        \end{center}
        \caption{Issuing each iteration reveals loop kernel.}
    \end{subfigure}
    \hspace{0.1cm}
    \begin{subfigure}{0.15\textwidth}
        \begin{center}
        \includegraphics[max width=\textwidth]{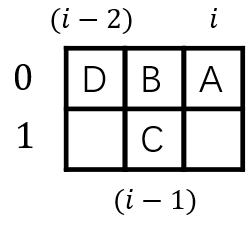}
    \end{center}
        \caption{Modulo scheduling table. Column index represents original iteration.}
    \end{subfigure}
    \caption{\label{fig:ms_example}Example for modulo scheduling loop $A_iB_iC_iD_i$. In this case $II=2$, $L=4$, $T=\left[0,2\right]$.}
\end{figure}

In the original modulo scheduling and other classical scheduling algorithms, the retry strategy only allows $II$ retries. For example, if there is not enough $ALU$ or $FPU$ for instruction $A_i$ in modulo scheduling table tick $q$, there is also not enough resource for instruction $A_{i-1}$ from previous iteration. However, this is not true for our case, and we have to modify the strategy.

\begin{example}
    Suppose we perform modulo scheduling on the program in Figure \ref{fig:resource_move}. Since the three $CZ$s are exactly the same, we may expect $II=3$ due to resource conflict. However, if we allow more retries, these $CZ$s can be separated into different iterations and can be executed concurrently with $CZ$s from other iterations.
\end{example}

    \begin{figure*}
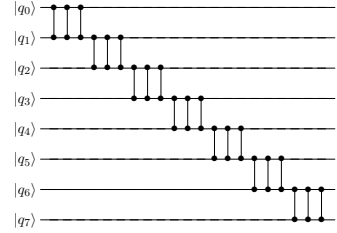
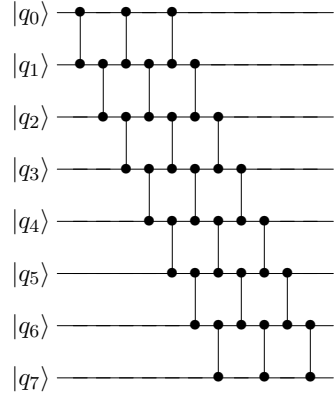

    \begin{center}
        \begin{subfigure}{0.24\textwidth}
            \begin{lstlisting}[frame=single]
for x=0 to 6 do
    CZ q[x],q[x+1];
    CZ q[x],q[x+1];
    CZ q[x],q[x+1];
end for
                        \end{lstlisting}
        \caption{Original Program.}
        \end{subfigure}
        \hspace*{1cm}
        \begin{subfigure}{0.24\textwidth}
\begin{qprognamed}{resmove_1}
    qubit q0
    qubit q1
    qubit q2
    qubit q3
    qubit q4
    qubit q5
    qubit q6
    qubit q7
    ZZ q0,q1
    ZZ q0,q1
    ZZ q0,q1
    ZZ q1,q2
    ZZ q1,q2
    ZZ q1,q2
    ZZ q2,q3
    ZZ q2,q3
    ZZ q2,q3
    ZZ q3,q4
    ZZ q3,q4
    ZZ q3,q4
    ZZ q4,q5
    ZZ q4,q5
    ZZ q4,q5
    ZZ q5,q6
    ZZ q5,q6
    ZZ q5,q6
    ZZ q6,q7
    ZZ q6,q7
    ZZ q6,q7
\end{qprognamed}
        \caption{Unrolled Program, for a clearer view.}
        \end{subfigure}
\end{center}
\begin{center}
        \begin{subfigure}{0.32\textwidth}
\begin{lstlisting}[frame=single]
CZ q[0],q[1];
CZ q[1],q[2];
CZ q[0],q[1]; CZ q[2],q[3];
CZ q[1],q[2]; CZ q[3],q[4];
for x=4 to 6 parallel do
    CZ q[x-4],q[x-3];
    CZ q[x-2],q[x-1];
    CZ q[x],q[x+1];
end for
CZ q[3],q[4]; CZ q[5],q[6];
CZ q[4],q[5]; CZ q[6],q[7];
CZ q[5],q[6];
CZ q[6],q[7];
\end{lstlisting}
    \caption{Software pipelined version.}
\end{subfigure}
\hspace*{1cm}
\begin{subfigure}{0.24\textwidth}
    \begin{center}
    \begin{qprognamed}{resmove_2}
        qubit q0
        qubit q1
        qubit q2
        qubit q3
        qubit q4
        qubit q5
        qubit q6
        qubit q7
        ZZ q0,q1
        ZZ q1,q2
        ZZ q0,q1
        ZZ q2,q3
        ZZ q1,q2
        ZZ q3,q4
        ZZ q0,q1
        ZZ q2,q3
        ZZ q4,q5
        ZZ q1,q2
        ZZ q3,q4
        ZZ q5,q6
        ZZ q2,q3
        ZZ q4,q5
        ZZ q6,q7
        ZZ q3,q4
        ZZ q5,q6
        ZZ q4,q5
        ZZ q6,q7
        ZZ q5,q6
        ZZ q6,q7
    \end{qprognamed}
\end{center}
    \caption{Software pipelined version, unrolled.}
\end{subfigure}
\end{center}
        \caption{\label{fig:resource_move}Three $CZ$ gates in a row. Although there seems to be resource conflicts, the minimal $II=1$.}
    \end{figure*}

We consider the general case where loop range is unknown. When placing an instruction in the modulo scheduling table, we check its operands with all operands scheduled at this tick. Suppose now we check operand $(k_2(i-p_2)+b_2)$ with operand $(k_1(i-p_1)+b_1)$, and we find an aliasing, that is, 
%\begin{equation}
   % \label{equ:res_conflict_1}
    $\exists i_0\in \mathbb{Z}, k_2(i_0-p_2)+b_2=k_1(i_0-p_1)+b_1.$ 
%\end{equation}
In case $k_1=k_2$, %equation \ref{equ:res_conflict_1} indicates
    %\begin{equation}
        $\forall i\in \mathbb{Z}, k_2(i-p_2)+b_2=k_1(i-p_1)+b_1.$
    %\end{equation}
    When $k_1=0$, this is the same as classical resource scheduling; otherwise,
    %\begin{equation}
        $\forall \Delta p\neq 0, \forall i\in \mathbb{Z}, k_2(i-p_2-\Delta p)+b_2\neq k_1(i-p_1)+b_1.$
    %\end{equation}
    This means if we delay the instruction by $\Delta pII$ ticks, the conflict will be resolved. We call it \textbf{false conflict}.
    In case $k_1\neq k_2$, after $\Delta pII$ ticks it will fall in the same time slot. There is still a conflict iff  
    %\begin{equation}
        $\exists i_1\in \mathbb{Z}, k_2(i_1-p_2-\Delta p)+b_2=k_1(i_1-p_1)+b_1$; that is, 
    %\end{equation}
    %is together with equation \ref{equ:res_conflict_1} indicates
    %\begin{equation}
        $i_1=i_0+\frac{\Delta pk_2}{k_2-k_1},$ 
    %\end{equation}
    which means
    %\begin{equation}
        $(k_2-k_1) | \Delta pk_2.$
%    \end{equation}
    The conflict appears periodically as $\Delta p$ increases. However, in the worst case where 
    %\begin{equation}
        $(k_2-k_1) | k_2,$ 
    %\end{equation}
    there is always a conflict and can be seen as classical resource scheduling. We call it, together with the case where $k_1=k_2=0$, \textbf{true conflict}.
%\end{enumerate}

We insert an instruction or an entire schedule into the modulo scheduling table in the following way: if there is no conflict, we insert the instructions; if there is only false conflict, we try next tick. As an exception, false conflicts between two single qubit gates are also seen as no conflict; and if there is true conflict, we start a ``death countdown'' before trying next tick: if next $(II-1)$ retries do not succeed, give up, as we do in classical retry scheme.

\subsubsection{Inversion pair correction}
The commutativity between antidiagonal $R_Z^+$ gates and $CZ$ gates comes at a price of a Z gate. In modulo scheduling stage we allowed them to commute freely, ignoring the generated Z gates. Now we have to fill them back to ensure equivalence. By the term ``inversion'', we mean that our scheduling alters the execution order of instructions \textbf{compared with original ordering}:
\begin{definition}
    If the original $c$th instruction is modulo-scheduled at $t=pII+q$ in new loop (where the $k$th original loop is issued), we define the absolute order of the instruction to be
   % \begin{equation}
        $T=(k-p)L+c=kL+(c-pL).$
    %\end{equation}
\end{definition}
\begin{example}
    Suppose $L=4$ and $B$ in Figure \ref{fig:ms_example} is the second instruction in the original loop ($c=1$). $B$ is placed in the modulo scheduling table at $p=1$ and $q=0$. 
    %Now we consider all the $B$ instructions in the figure:
    \begin{enumerate}
        \item The first $B$ instruction is issued in the prologue (incomplete loop kernel) where the second $(k=1)$ iteration is issued. Thus the absolute order of the instruction is $T=1$.
        \item The second $B$ instruction is issued in the loop kernel where the third $(k=2)$ iteration is issued. Thus the absolute order is $T=5$.
        \item The third $B$ instruction is issued in the epilogue (again incomplete loop kernel) where the fourth $(k=3)$ iteration is issued (or, should be issued). The absolute order is $T=9$.
    \end{enumerate}
    We see that the absolute order is exactly the time when the instruction is executed in the original loop.
\end{example}

Our idea is to check all inversion pairs in the modulo schedule. There are two kind of order-inversions:
\begin{definition}
\label{def:inversion}
\begin{enumerate}
    \item In-loop inversion: For two instructions in the $k$-iteration in new scheduling (i.e. the iteration where $k$th iteration of original loop is issued), if the first precedes the second while its absolute order succeeds the absolute order of the second instruction: 
   % \begin{equation}
    %    \label{equ:inversion_1}
        $kL+(c_1-p_1L)>kL+(c_2-p_2L),$
  %  \end{equation}
     there is an in-loop inversion.
    \item Loop-carried inversion: For two instructions in $k$-iteration and $(k+r)$-iteration ($r\geqslant 1$), if  
    %\begin{equation}
     %   \label{equ:inversion_2}
        $kL+(c_1-p_1L)>(k+r)L+(c_2-p_2L),$
    %\end{equation}
     there is an across-loop inversion.
\end{enumerate}
\end{definition}

Since the $kL$ term can be cancelled, inversion pairs in modulo schedule also reveals periodicity. Figure \ref{fig:cross_loop_inversion} shows an example with periodic $r=1$ inversions, and $r=2$ inversions.
Since the term $(k+r)L+(c_2-p_2L)$ increases as $r$ increases, there exists $r_0$ s.t. $\forall r>r_0$ there is no across-loop inversion. We can increase $r$ and find pairs of inversion from iteration $k$ and $(k+r)$, until there is no inversion pair. When finding all inversion pairs, we can check the pairs to see if one is $CZ$ and the other is antidiagonal on one of $CZ$'s operand. If so, we add a $Z$ gate at the tick where $CZ$ is placed.

\begin{figure}
    \begin{center}
    \includegraphics[max width=0.3\textwidth, max height=0.5\textheight]{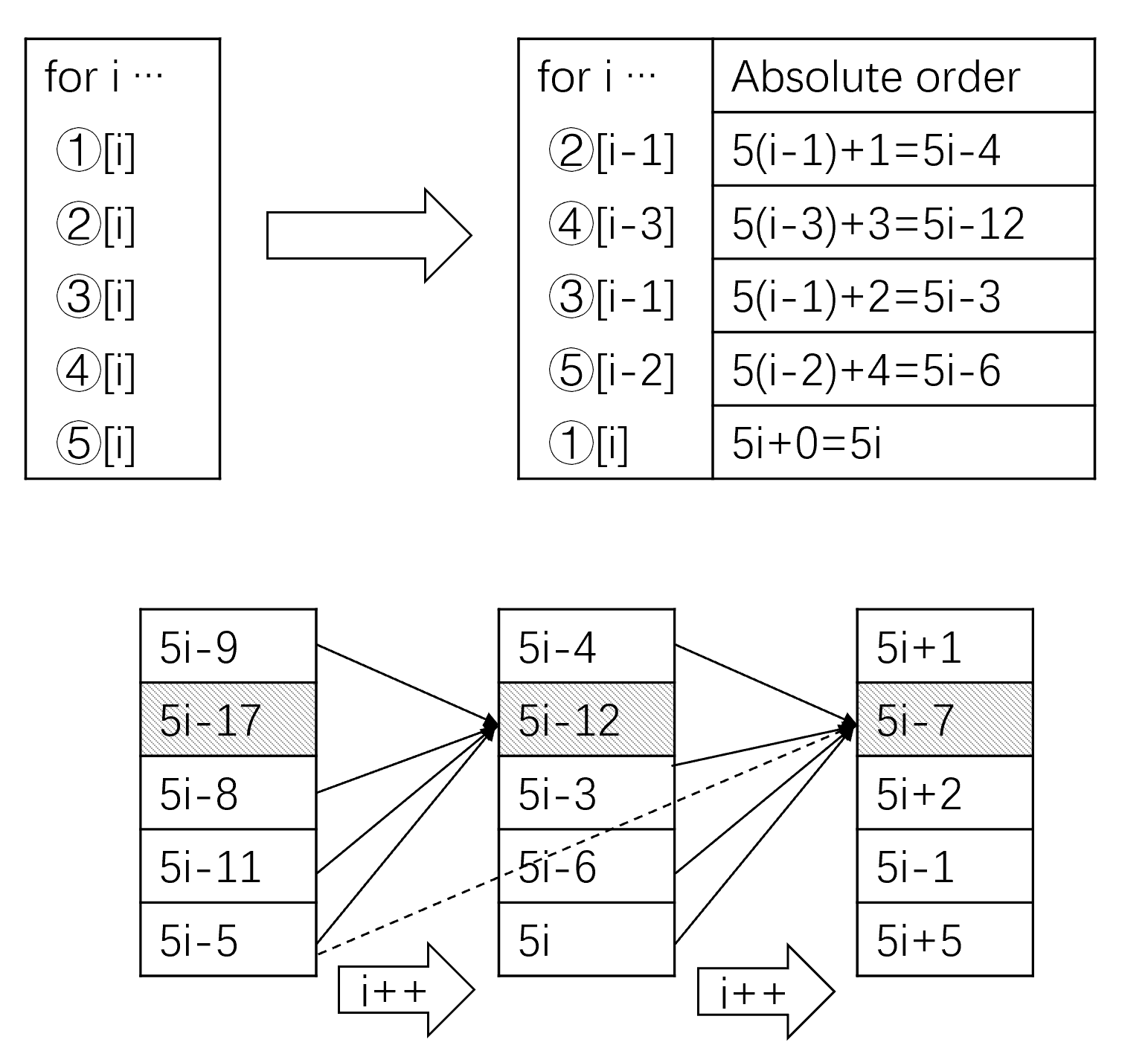} 
    \end{center}
    \caption{An example of inverted pairs of instructions across loop iterations.}
    \label{fig:cross_loop_inversion}
\end{figure}

\subsubsection{Code generation for kernel, prologue and epilogue}

We generate prologue and epilogue by removing non-existing instructions from the loop kernel.

\begin{example}
    Consider in Figure \ref{fig:ms_example} (remember $T=\left[0,2\right]$), the iteration where $k$th original iteration is issued (or should be issued) by enumerating $k$ from $-\infty$ to $\infty$:
    \begin{enumerate}
        \item For $k<0$, $\left\{k, k-1, k-2\right\}\cap T=\Phi$, no instruction is put.
        \item For $k=0$, $\left\{k, k-1, k-2\right\}\cap T=\left\{k\right\}$, only $A$ is put.
        \item For $k=1$, $\left\{k, k-1, k-2\right\}\cap T=\left\{k, k-1\right\}$, $A,B,C$ are put.
        \item For $k=2$, $\left\{k, k-1, k-2\right\}\cap T=\left\{k, k-1, k-2\right\}$. This is the complete loop kernel.
        \item For $k=3$, $\left\{k, k-1, k-2\right\}\cap T=\left\{k-1, k-2\right\}$, $B,C,D$ are put.
        \item For $k=4$, $\left\{k, k-1, k-2\right\}\cap T=\left\{k-2\right\}$, $D$ is put.
        \item For $k>4$, $\left\{k, k-1, k-2\right\}\cap T=\Phi$, no instruction is put.
    \end{enumerate}
\end{example}

For prologue and epilogue, we have to remove instructions from iterations that do not exist; for extra $Z$ gates from the inversion of a $CZ$ and an antidiagonal, removing either gate will make the $Z$ gate disappear. After removing non-existing instructions, we perform compaction and ASAP schedule on the two parts.

For loop kernel, we need to merge the single qubit gates on the same qubit in the same time slot (from the resource conflict exception) by their absolute order.
\begin{figure*}
\begin{center}
\begin{tabular}{|l|l|l|l|l|l|l|l|l|l|l|l|l|}
\hline
\multirow{2}{*}{Test case}            & \multicolumn{3}{c|}{Input Loop}    & \multicolumn{3}{c|}{Output Loop} & \multicolumn{6}{c|}{Known range   results}                                                                \\ \cline{2-13} 
   & ASAP & $C$ & $C$-ASAP & Pre   & K  & Post  & \#Iter & \textbf{K-ASAP} & \textbf{Unroll} & \textbf{Cirq} & QSP\#Iter & \textbf{QSP} \\ \hline
Cluster & 4  & 2 & 5  & 4  & 1  & 4  & 200 & 800  & 203 & {203} & 96 & 104 \\ \hline
Array 1 & 5  & 2 & 10 & 8  & 4  & 5  & 100 & 500  & 500 & {500} & 48 & 205 \\ \hline
Array 2 & 3  & 2 & 5  & 4  & 1  & 4  & 100 & 300  & 201 & {201} & 46 & 54  \\ \hline
Array 3 & 11 & 2 & 17 & 12 & 12 & 17 & 100 & 1100 & 605 & {606} & 48 & 605 \\ \hline
Grover 1      & 13   & 2   & 26                    & 26         & 24      & 871       & 99         & 1287                 & 1287            & {1288} & 15         & 1257         \\ \hline
Grover 2                   & \multicolumn{1}{l|}{71}   & \multicolumn{1}{l|}{2}   & \multicolumn{1}{l|}{141}                   & \multicolumn{1}{l|}{141}      & \multicolumn{1}{l|}{135}    & \multicolumn{1}{l|}{40881}    & \multicolumn{1}{l|}{1000}       & \multicolumn{1}{l|}{71000}                & \multicolumn{1}{l|}{70001}           & \multicolumn{1}{l|}{{71001}} & \multicolumn{1}{l|}{207}        & \multicolumn{1}{l|}{68967}        \\ \hline
QAOA-Hard 1 & 21   & 2   & 41                    & 41         & 40      & 2021      & 1001       & 21021                & 20021           & {20021} & 449        & 20022        \\ \hline
QAOA-Hard 2 & 21   & 2   & 41                    & 41         & 40      & 2061      & 1001       & 21021                & 20021           & {20021} & 448        & 20022        \\ \hline
QAOA-Hard 3 & 16   & 2   & 27                    & 41         & 18      & 1121      & 1001       & 16016                & 11016           & {11016} & 448        & 9226         \\ \hline
QAOA-Hard 4 & 33 & 2 & 47 & 60 & 31 & 3882 & 1000 & 33000 & 14019 & {14019} & 360 & 15102 \\ \hline
\multicolumn{1}{|l|}{QAOA-Par 1}                      & \multicolumn{1}{l|}{15}   & \multicolumn{1}{l|}{2}   & \multicolumn{1}{l|}{26}                    & \multicolumn{1}{l|}{46}       & \multicolumn{1}{l|}{20}     & \multicolumn{1}{l|}{943}      & \multicolumn{1}{l|}{201}        & \multicolumn{1}{l|}{3015}                 & \multicolumn{1}{l|}{2215}            & \multicolumn{1}{l|}{{2215}} & \multicolumn{1}{l|}{56}         & \multicolumn{1}{l|}{2109}         \\ \hline
\multicolumn{1}{|l|}{QAOA-Par 2}                      & \multicolumn{1}{l|}{15}   & \multicolumn{1}{l|}{2}   & \multicolumn{1}{l|}{26}                    & \multicolumn{1}{l|}{45}       & \multicolumn{1}{l|}{20}     & \multicolumn{1}{l|}{1009}     & \multicolumn{1}{l|}{201}        & \multicolumn{1}{l|}{3015}                 & \multicolumn{1}{l|}{2215}            & \multicolumn{1}{l|}{{2215}} & \multicolumn{1}{l|}{53}         & \multicolumn{1}{l|}{2114}         \\ \hline
\multicolumn{1}{|l|}{QAOA-Par 3}                      & \multicolumn{1}{l|}{18}   & \multicolumn{1}{l|}{2}   & \multicolumn{1}{l|}{29}                    & \multicolumn{1}{l|}{43}       & \multicolumn{1}{l|}{18}     & \multicolumn{1}{l|}{1080}     & \multicolumn{1}{l|}{201}        & \multicolumn{1}{l|}{3618}                 & \multicolumn{1}{l|}{2218}            & \multicolumn{1}{l|}{{2218}} & \multicolumn{1}{l|}{50}         & \multicolumn{1}{l|}{2023}         \\ \hline
QAOA-Par 4      & 15 & 2 & 29 & 29 & 25 & 3668 & 1000 & 15000 & 14001 & {14001} & 368 & 12897 \\ \hline
%QAOA-Par 1       & 15   & 2   & 26                    & 46         & 20      & 943       & 101        & 1515                 & 1115            & {\color{red}UNTEST} & 6          & 1109         \\ \hline
%QAOA-Par 2       & 15   & 2   & 26                    & 45         & 20      & 1009      & 101        & 1515                 & 1115            & {\color{red}UNTEST} & 3          & 1114         \\ \hline
%QAOA-Par 3       & 18   & 2   & 29                    & 43         & 18      & 1080      & 201        & 3618                 & 2218            & {\color{red}UNTEST} & 50         & 2023         \\ \hline
\end{tabular}
\captionof{table}{\label{tab:evaluation}Evaluation results. ASAP is the minimal depth of original loop body. $C$-ASAP is the minimal depth of the original loop body unrolled by $C$ times. Pre, K and Post represents prologue, kernel and epilogue. For each test case a range sized \#Iter is assigned, and the span of the output loop is QSP\#Iter.}
\end{center}
\end{figure*}

\subsection{Modulo scheduling again}
In the first round of modulo scheduling, inversion of $CZ$ and antidiagonal gates may introduce $Z$ gates overlapping $CZ$s, resulting an illegal schedule. To generate an executable schedule, we perform modulo scheduling again, but this time we no longer allow ``commutativity'' between antidiagonals and $CZ$s, and thus the inversion-fix step can be skipped. The scheduled loop by this second round of modulo scheduling is directly executable on the device.

[\textit{An analysis on the complexity of our algorithm presented in this section is given in Appendix \ref{complexity}.}]

\section{Evaluation}
\label{sec:evaluation}
We have implemented our method and carried out experiments on several quantum programs. Some of them are intrinsically parallel, while others are not.
Baselines for our evaluation come from the following sources:
\begin{itemize}
    \item \textbf{Kernel-ASAP} performs compaction and ASAP scheduling on the loop kernel. We expect our work to outperform this naive approach.
    \item \textbf{Unroll} unrolls the loop and performs compaction as well as ASAP scheduling on the unrolled circuit. The software-pipelined version should generate a program with similar depth but much smaller code size.
    \item \textbf{Cirq} uses the optimization passes in \cite{quantum_ai_team_and_collaborators_2020_4062499} to unroll the loop. This gives another perspective of loop unrolling besides our implementation.
\end{itemize}

The experiment results are in Table \ref{tab:evaluation}. We hereby analyze some of the important examples:

\subsection{Grover Search}
Grover search is a test case with long dependency chain and little space for optimization. Yet our approach can reduce the overall depth by merging adjacent gates in iteration and across iterations.
We use the $CCNOT$ case from \cite{DBLP:journals/corr/abs-1804-03719} and Sudoku solver from \cite{Qiskit-Textbook}. Since Grover search is a hard-to-optimize case, we inspected the optimized code and got the following findings: 

Although examples do not reveal much optimization chance, there is a pitfall for ASAP optimizers that may cause a diagonal $T^\dag$ gate to be scheduled at the first tick alone. This is prevented in our approach by performing bidirectional compactions. Moreover, the depth cut mainly comes from inversion of a pair of $CZ$s while scheduling, which indeed our approach does not consider. (see Figure \ref{fig:grover_inverse_cz}). This inspires us to find more optimization chances while placing instructions without dependency, like a program with many $CZ$s.
\begin{figure}
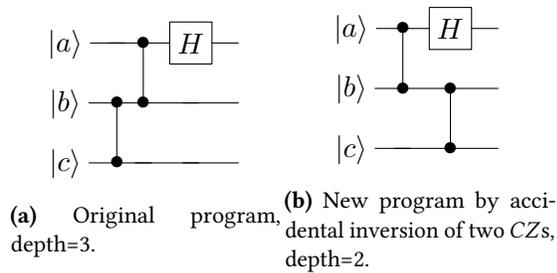

\begin{center}
    \begin{subfigure}{0.2\textwidth}
        \begin{center}
            \begin{qprognamed}{grover_inverse_cz_1}
            qubit a
            qubit b
            qubit c
            ZZ b,c
            ZZ a,b
            H a
            \end{qprognamed}
        \end{center}
        \caption{Original program, depth=3.}
    \end{subfigure}
    \begin{subfigure}{0.2\textwidth}
        \begin{center}
            \begin{qprognamed}{grover_inverse_cz_2}
            qubit a
            qubit b
            qubit c
            ZZ a,b
            ZZ b,c
            H a
            \end{qprognamed}
        \end{center}
        \caption{New program by accidental inversion of two $CZ$s, depth=2.}
    \end{subfigure}
    \caption{The accidental inversion of $CZ$s reduced kernel depth by 1.}
    \label{fig:grover_inverse_cz}
\end{center}
    
\end{figure}
\subsection{QAOA}
The QAOA programs in \cite{farhi2014quantum} (in Figure \ref{fig:qaoa_examples}), as well as the QAOA example in \cite{quantum_ai_team_and_collaborators_2020_4062499} are used in our experiment, but with a $p$ (i.e. the number of iterations) large enough.
Since the decomposition of QAOA into gates affects how it can be optimized on our architecture, we consider two different ways: \textbf{QAOA-Par} where QAOA is decomposed to expose more commutativity (\textit{see the details in Appendix \ref{appendix:par_qaoa}}), and \textbf{QAOA-Hard}, where QAOA is decomposed into a harder form, with a long dependency chain formed by cross-qubit operations that is unable to be detected by gate-level optimizers.

The evaluation results in Table \ref{tab:evaluation} show that in all cases, our approach can reduce the loop kernel size compared with \textbf{Kernel-ASAP}, and can sometimes outperform unrolling results. This advantage is more evident in the QAOA-Par cases than in the QAOA-Hard cases, since QAOA-Par reveals more commutativity chances than QAOA-Hard. Another finding is that QAOA-Hard generates larger code than QAOA-Par, and thus requires more iterations for software-pipelining to take effect.

\begin{figure}
\begin{center}
\begin{subfigure}{0.08\textwidth}
        \centering\includegraphics[max width=\textwidth]{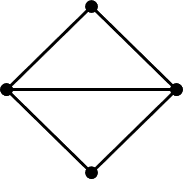}
    \caption{}
\end{subfigure}
\hspace{0.5cm}
\begin{subfigure}{0.12\textwidth}
        \centering\includegraphics[max width=\textwidth]{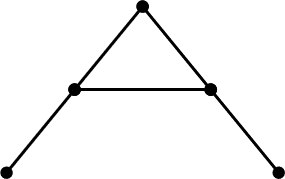}
    \caption{}
\end{subfigure}
\hspace{0.5cm}
\begin{subfigure}{0.12\textwidth}
        \centering\includegraphics[max width=\textwidth]{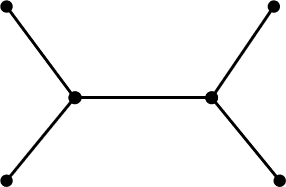}
    \caption{}
\end{subfigure}
\end{center}
\caption{\label{fig:qaoa_examples}QAOA-MaxCut examples in \cite{farhi2014quantum}.}
\end{figure}
{\color{red}
%(TODO: ScaffCC results and other experiment results.)
}
\textit{[More discussions on examples are in Appendix \ref{cluster-optimization}.]}
\section{Conclusion}
\label{sec:conclusion}
We proposed a compilation flow for optimizing quantum programs with control flow of for-loops. In particular, data dependencies and resource dependencies are redefined to exposes more chances for optimization algorithms. % This may open up more chances for the upcoming quantum program optimization.
Our approach is tested against several important quantum algorithms, revealing code-size advantages over the existing approaches while keeping depth advantage close to loop rolling. Yet there is still gap for optimization of more complex quantum programs, on different architectures, and with lower complexity, which could be filled in future works.

%%
%% The acknowledgments section is defined using the "acks" environment
%% (and NOT an unnumbered section). This ensures the proper
%% identification of the section in the article metadata, and the
%% consistent spelling of the heading.
%%\begin{acks}

%%\end{acks}

\newpage

%%
%% The next two lines define the bibliography style to be used, and
%% the bibliography file.
\bibliographystyle{ACM-Reference-Format}
\bibliography{references}

%%
%% If your work has an appendix, this is the place to put it.

\newpage

\appendix

\section{Basic quantum gates}\label{appendix:basic_quantum_gates}

The following are the frequently-used one-qubit gates represented in $2\times 2$ unitary matrices:
    \begin{align*}{\rm Pauli\ gates}:\ 
    &X=\mattwo{0}{1}{1}{0},\\ &Y=\mattwo{0}{-i}{i}{0},\\ &Z=\mattwo{1}{0}{0}{-1},\\
  {\rm Hadamard\ gate}:\ &H=\frac{1}{\sqrt{2}}\mattwo{1}{1}{1}{-1},\\ 
{\rm Phase\ and}\ \frac{\pi}{8}\ {\rm gates}:\    
   &S=\mattwo{1}{0}{0}{i},\\ &T=\mattwo{1}{0}{0}{e^{\frac{i\pi}{4}}},\\
   {\rm Pauli\ Rotations}:\ &R_X(\alpha)=\mattwo{cos\frac{\alpha}{2}}{-isin\frac{\alpha}{2}}{-isin\frac{\alpha}{2}}{cos\frac{\alpha}{2}},\\
    & R_Y(\alpha)=\mattwo{cos\frac{\alpha}{2}}{-sin\frac{\alpha}{2}}{sin\frac{\alpha}{2}}{cos\frac{\alpha}{2}},\\ 
    &R_Z(\alpha)=\mattwo{e^{-\frac{i\alpha}{2}}}{0}{0}{e^{\frac{i\alpha}{2}}}.\end{align*} They combined with one of the (two-qubit) controlled gates 
    \begin{align*}
    \mathit{CNOT}&=\left[ \begin{array}{cccc}1& & & \\ &1& &  \\ & & &1 \\ & &1& \end{array}\right],\\ CZ&=\left[ \begin{array}{cccc}1& & & \\ &1& &  \\ & & 1& \\ & & &-1 \end{array}\right].
\end{align*} are universal for quantum computing; that is, they can be used to construct arbitrary quantum gate of any size.

Beside the above, we will use the following auxiliary gates to simplify the presentation of our approach:
\begin{equation*}\begin{split}R_X^{-}(\alpha)&=\left[\begin{array}{cc}\cos\frac{\theta}{2}& -i\sin\frac{\theta}{2}\\
i\sin\frac{\theta}{2}&-\cos\frac{\theta}{2}\end{array}\right],\\
R_Z^{+}(\alpha)&=\left[\begin{array}{cc}0 &
e^{i\alpha/2}\\e^{-i\alpha/2}
&0\end{array}\right]=XR_Z(\alpha),\\
H(\alpha)&=\frac{1}{\sqrt{2}}\left[\begin{array}{cc}1& 1\\
e^{i\alpha}& -e^{i\alpha}\end{array}\right]=R_Z(\alpha)H,\\
H^{-}(\alpha)&=\frac{1}{\sqrt{2}}\left[\begin{array}{cc}1& -1\\
e^{i\alpha}&e^{i\alpha}\end{array}\right]=R_Z(\alpha)HZ.
\end{split}\end{equation*}
Note that parameter $\alpha$ in the above gates is a real number. The $R_Z^{+}(\alpha)$ gate can represent all single qubit gates that are anti-diagonal, i.e. only anti-diagonal entries are not $0$. The other three notations are used in Appendix \ref{appendix:cnot_conjugation}.

For real-world quantum computers, a quantum device may only support a discrete or contiguous set of single qubit gates while keeping the device universal.
For example, IBM's devices allow the following three kinds of single qubit gates to be executed directly\cite{Qiskit-Textbook}:
\begin{equation*}
\begin{aligned}
&U_1(\lambda)=\mattwo{1}{0}{0}{e^{i\lambda}},\\ &U_2(\phi, \lambda)=\frac{1}{\sqrt{2}}\mattwo{1}{-e^{i\lambda}}{e^{i\phi}}{e^{i\lambda+i\phi}},\\
&U_3(\theta, \phi, \lambda)=\mattwo{cos(\frac{\theta}{2})}{-e^{i\lambda}sin(\frac{\theta}{2})}{e^{i\phi}sin(\frac{\theta}{2})}{e^{i\lambda+i\phi}cos(\frac{\theta}{2})}
\end{aligned}
\end{equation*}
Note that $U_2(\phi, \lambda)=U_3(\frac{\pi}{2}, \phi, \lambda)$ and $U_1(\lambda)=U_3(0,0,\lambda)$. Also note that gate $U_3$ itself is universal for single-qubit gates, and the main reasons for supporting $U_1$ and $U_2$ is to mitigate error, which is beyond our consideration.
\section{More Examples for quantum loop programs}\label{more-examples}

We hereby presents more quantum algorithms that can be written into quantum loop programs and can thus be potentially optimized by our approach.

\subsection{One-way quantum computing}

Preparation circuit for simulating one-way quantum computation on quantum circuit is another example that allows each iteration to be performed on different qubits.

\begin{example}
One-way quantum computing $QC_\mathcal{C}$\cite{raussendorf2003measurement} is a quantum computing scheme that is quite different from the commonly used quantum-circuit based schemes. Instead of starting from $\ket{0}$, $QC_\mathcal{C}$ initializes all qubits (on a 2-dimensional qubit grid) in a highly-entangled state, called \textbf{cluster state}. After the preparation step, $QC_\mathcal{C}$ performs single-qubit measurements on all qubits and extract the computation result from these measurement outcomes.

To simulate one-way quantum computing with quantum circuit, we first need to prepare the cluster state from $\ket{0}$. This can be done by first performing Hadamard gates on all qubits, then performing $CZ$ gate on each pair of adjacent qubits on the qubit grid.

The preparation circuit can be written in a nested loop manner. If we assume the grid has a fixed width ($3$ in our case), we can unroll the innermost loop to get the flattened loop:

\begin{algorithmic}
\STATE {$H[q[0]]$}
\STATE {$H[q[1]]$}
\STATE {$H[q[2]]$}
\STATE {$CZ[q[0],q[1]]$}
\STATE {$CZ[q[1],q[2]]$}
\FOR{i=1 \TO (L-1) }  {
    \STATE {$H[q[3i]]$}
    \STATE {$H[q[3i+1]]$}
    \STATE {$H[q[3i+2]]$}
    \STATE {$CZ[q[3i],q[3i+1]]$}
    \STATE {$CZ[q[3i+1],q[3i+2]]$}
    \STATE {$CZ[q[3i],q_1[3i-3]]$}
    \STATE {$CZ[q[3i+1],q_2[3i-2]]$}
    \STATE {$CZ[q[3i+2],q_3[3i-1]]$}
} \ENDFOR
\end{algorithmic}

Figure \ref{fig:cluster-state-preparation} shows the gates and qubits involved in each iteration where $L=5$.
The optimization of this program will be discussed in Appendix \ref{cluster-optimization}. 
\end{example}
\begin{figure*}
    \begin{center}\includegraphics[max width=0.8\textwidth]{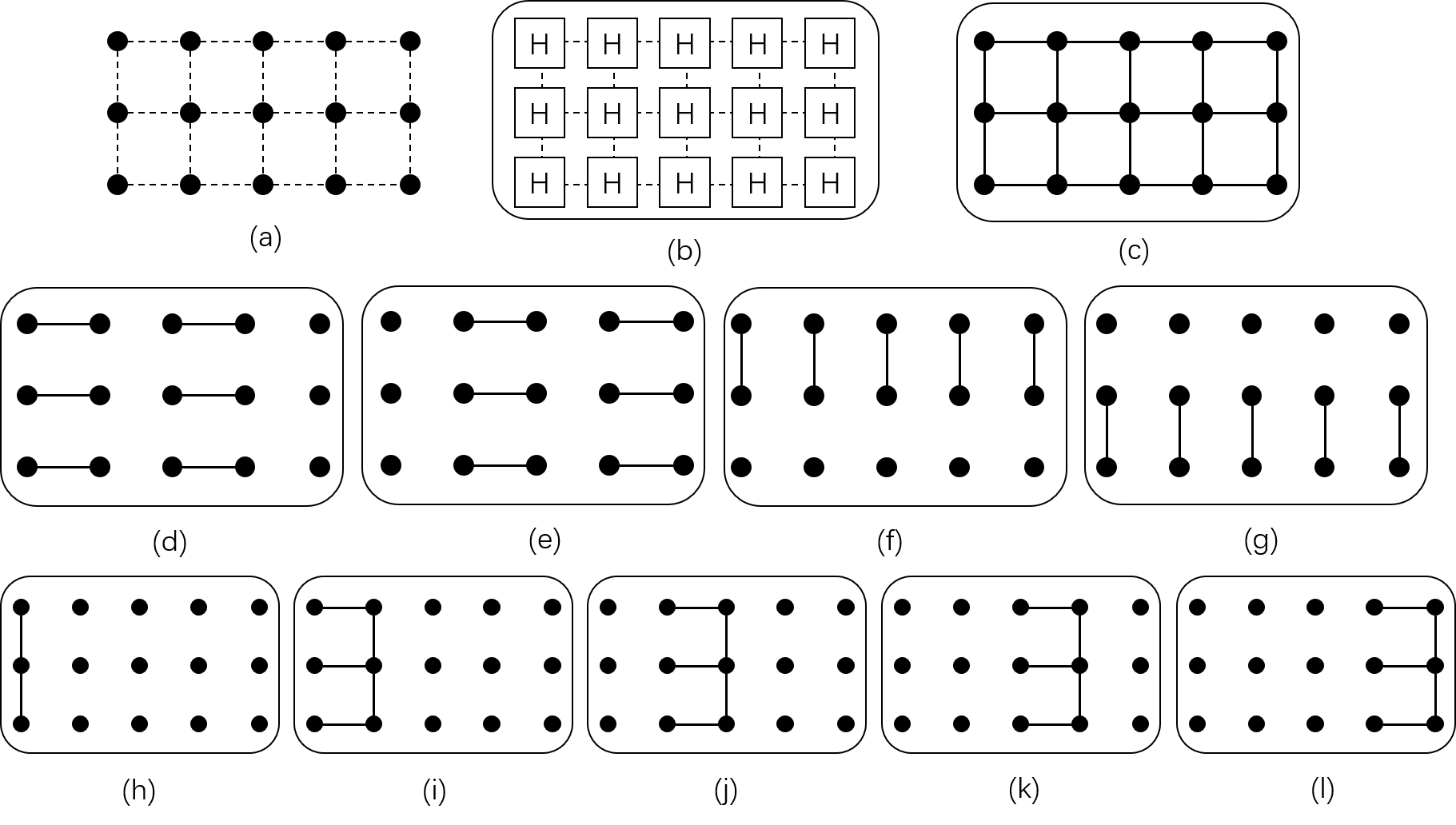}
    \end{center}
    \caption{\label{fig:cluster-state-preparation}Converting cluster state preparation circuit into loop program.
    Fig (a) is a $3\times 5$ two-dimensional qubit network. The preparation is done by performing a layer of Hadamard gates (Fig (b)) and a layer of $CZ$ gates (Fig (c)). 
    One way to perform those $CZ$ gates without qubit conflict is to split them into four non-overlapping groups and execute each group separately, as in Fig (d) to Fig (g).
    The procedure can also be written into loop program, as in Fig (h) to Fig (l).
    }
\end{figure*}

\subsection{Quantum Approximate Optimization Algorithm}
\begin{example}
Quantum Approximate Optimization Algorithm (QAOA)\cite{farhi2014quantum} can be used to solve MaxSat problems, for example, MaxCut problems on 3-regular graphs, say $G=\langle V,E\rangle$. QAOA performs quantum computation and classical computation alternatively. On the quantum part, it requires us to create the state: 
\begin{equation}
    \ket{\gamma, \beta}=\prod_{i=1}^{p}U(B, \beta_i)U(C, \gamma_i)\ket{+}
\end{equation}
where: 
\begin{align}
    &U(C, \beta_i)=\prod_{(a,b)\in E}\left[ \begin{array}{cccc}1& & & \\ &e^{-i\omega_{ab}\gamma_i}& &  \\ & & e^{-i\omega_{ab}\gamma_i} & \\ & & &1 \end{array}\right]\\
    &U(B, \gamma_i)=\prod_{j=0}^{n-1}R_X(\beta_i,j).
\end{align}

The sets of parameters $\left\{\beta_i\right\}$ and $\left\{\gamma_i\right\}$ are computed in the classical computation between every two quantum epochs. This requires the optimizer to support compilation of the circuit above without knowing all parameters in advance.

$U(B, \gamma_i)$ are products of Pauli $X$ rotations on all qubits. Since in our case $U(C, \beta_i)$ can be decomposed in the following way:
\begin{equation}
    \label{equ:qaoa_impl}
    \left[ \begin{array}{cccc}1& & & \\ &e^{-i\omega_{ab}\gamma_i}& &  \\ & & e^{-i\omega_{ab}\gamma_i} & \\ & & &1 \end{array}\right]=
    \begin{qprognamed}{qaoa_implementation_of_gamma_rot}
        qubit a
        qubit b
        def	ua,0,'R_Z(-\omega_{ab}\gamma_i)'
        cnot a,b
       ua b
        cnot a,b
   \end{qprognamed}
    ,
\end{equation}
we can define parametric gate arrays $U_C[i]=R_X(\beta_i,j)$ and $U_B[i]=R_Z(-\omega_{ab}\gamma_i)$, and the QAOA quantum part can be written as a parametric quantum loop program:
\begin{algorithmic}
\FOR{i=0 \TO (N-1) }  {
    \STATE {$H[q[i]]$}
} \ENDFOR
\FOR{i=1 \TO p }  {
    \FOR{$(a, b) \in E$}  {
        \STATE {$CNOT[q[a], q[b]]$}
        \STATE {$U_B[i][q[b]]$}
        \STATE {$CNOT[q[a], q[b]]$}
    }\ENDFOR
    \FOR{j=0 \TO (N-1) }  {
        \STATE {$U_C[j][q[j]]$}
    } \ENDFOR
} \ENDFOR
\end{algorithmic}
\end{example}

The two nested loops can be fully unrolled by hand, and the outcome loop satisfies our requirements for optimization.

\section{Output language}
\label{appendix:output_language}
If the input range of the loop program is unknown, we may have to add guard statements into the orginal program, for example, when we want to check if the range is large enough for us to use the software-pipelined version. Those features such as guard statements, unfortunately, are not supported in our definition of input language. So we have to define the following language for the optimization result:
\begingroup
\allowdisplaybreaks
\begin{align*}
    % \begin{aligned}
    \textbf{program} :=& \textbf{header}\ \textbf{statement}*\\
    \textbf{header} :=& [(\textbf{qdef}\ |\ \textbf{udef})^*]\\
    \textbf{qdef} :=& qubit\ \textbf{ident}[\mathbb{N}];\\
    \textbf{udef} :=& defgate\ \textbf{ident}[\mathbb{N}] = \textbf{gate};\\
    \textbf{gate} :=& [(\mathbb{C}^{2\times 2})^*]\ |\ R_Z\ |\ R_Z^+\ |\ Unknown\\
    \textbf{gateref} :=& \textbf{ident}[\textbf{expr}]\\
    \textbf{qubit} :=& \textbf{ident}[\textbf{expr}]\\
    \textbf{op} :=& SQ(\textbf{gateref})\ \textbf{qubit};\\
                         |\ & CZ\ \textbf{qubit},\textbf{qubit};\\
    \textbf{statement} :=& \textbf{op}\\
                         |\ & for\ \textbf{ident}\ in\ \textbf{expr}\ to\ \textbf{expr} \{\textbf{statement}^*\}\\
                         |\ & parallel\{\textbf{statement}*\}\\
                         |\ & guard\{\\
                             &(\textbf{compare}=>\{\textbf{statement}^*\})^* \\ %
                         &otherwise=>\{\textbf{statement}^*\}\\
                             &\} \\
    %\textbf{expr} :=& \mathbb{Z}  * \textbf{ident} + \mathbb{Z}
    \textbf{expr} :=&\textbf{ident}\ |\ expr+expr\ |\ expr-expr\\
                 |\ &expr*expr\ |\ expr/expr\ |\ expr\%expr\ |\ \mathbb{Z} \\
    \textbf{compare} :=& \textbf{expr}\ \textbf{ordering}\ \textbf{expr}\\
    \textbf{ordering} :=& ==\ |\ !=\ |\ >\ |\ <\ |\ >=\ |\ <=
    % \end{aligned}
\end{align*}
\endgroup
The main differences between the input language and the output language are:
\begin{enumerate}
    \item The $parallel$ notation is added to explicitly point out which instructions are scheduled together.
    \item The $guard$ statement is added to check whether the input range is suitable for the software-pipelined version if the range is unknown at compilation time, and to separate cases with different $(m\ mod\ C)$. The $guard$ statement executes the first statement block with a satisfied guard condition.
    \item The $expr$ allows for more general indexing into qubit arrays and gate arrays. Note that the division and modulo operators are Euclidean, i.e. it always holds that 
    \begin{equation}
        \left\{\begin{aligned}
        &sign(a\%b)=sign(b)\\
        &a\%b + (a/b)*b = a
        \end{aligned}\right.
    \end{equation}
\end{enumerate}
\section{Solving Diophantine equations}
\label{appendix:diophantine-noilp}
In this appendix we focus on solving the Diophantine equation:
\begin{equation}
    (k_2-k_1)i+k_2(\Delta i)=b_1-b_2, i\in T, i+\Delta i \in T, \Delta i\geqslant 1.
\end{equation} 
We rewrite it into:
\begin{equation}
    ax+by=c, x\in T, x+y\in T, y\geqslant 1.
\end{equation}

We recall the solutions $S$ for linear Diophantine equations with two variables:
\begin{lemma}
\textbf{Solutions for linear Diophantine equations with two variables}
\begin{equation}
    ax+by=c, x\in\mathbb{Z}, y\in \mathbb{Z}.
\end{equation}
\begin{enumerate}
    \item If $a=0$ and $b=0$, $S=\Phi$ if $c\neq 0$ and $S=\mathbb{Z}\times \mathbb{Z}$ if $c=0$.
    \item If $a=0$ but $b\neq 0$ (similar for $b=0$ but $a\neq 0$),
    \begin{enumerate}
        \item If $b | c$, $S=\mathbb{Z}\times \left\{\frac{c}{b}\right\}$.
        \item Otherwise, $S=\Phi$.
    \end{enumerate}
    \item If $a\neq 0$ and $b\neq 0$:
    \begin{enumerate}
        \item If $c=d\cdot gcd(a,b)$,
        \begin{itemize}
            \item Special solution $(x_0,y_0)$ where \begin{equation}ax_0+by_0=gcd(a,b)\end{equation} can be solved using extended Euclidean algorithm.
            \item General solution $\left(k\frac{b}{gcd(a,b)}, -k\frac{a}{gcd(a,b)}\right)$ for equation \begin{equation}ax+by=0\end{equation} is known.
            \item The total solution space is
            \begin{equation}
                S=\left\{\left(x_0+k\frac{b}{gcd(a,b)}, y_0-k\frac{a}{gcd(a,b)}\right) | k\in \mathbb{Z}\right\}.
            \end{equation}
            We rewrite the equation into:
            \begin{equation}
                S=\left\{\left(x_0+k\Delta x, y_0+k\Delta y\right) | k\in \mathbb{Z}\right\}.
            \end{equation}
        \end{itemize}
        \item Otherwise, $S=\Phi$.
    \end{enumerate}
\end{enumerate}
\end{lemma}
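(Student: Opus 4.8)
The plan is to verify the characterisation case by case, following exactly the split in the statement: the two degenerate cases go by inspection, and the real work is the case $a\neq 0$, $b\neq 0$. When $a=b=0$ the left-hand side is identically $0$, so every pair is a solution when $c=0$ and there are none when $c\neq 0$, giving $S=\mathbb{Z}\times\mathbb{Z}$ when $c=0$ and $S=\Phi$ when $c\neq 0$. When exactly one coefficient vanishes, say $a=0$ and $b\neq 0$, the equation becomes $by=c$, which is solvable iff $b\mid c$; in that case $y=c/b$ is forced while $x$ is unconstrained, so $S=\mathbb{Z}\times\{c/b\}$, and $S=\Phi$ otherwise. The case $b=0$, $a\neq 0$ is symmetric.

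For the main case I would first establish necessity of the divisibility condition: $\gcd(a,b)$ divides both $a$ and $b$, hence divides $ax+by$ for every integer pair $(x,y)$, so $S\neq\Phi$ forces $\gcd(a,b)\mid c$, i.e.\ $c=d\cdot\gcd(a,b)$ for some $d\in\mathbb{Z}$. For sufficiency, the extended Euclidean algorithm produces a B\'ezout pair $(x_0,y_0)$ with $ax_0+by_0=\gcd(a,b)$, and then $(dx_0,dy_0)$ is a particular solution of $ax+by=c$. If $(x,y)$ is any solution then $(x-dx_0,\,y-dy_0)$ solves the homogeneous equation $ax+by=0$, and conversely any homogeneous solution added to $(dx_0,dy_0)$ again solves $ax+by=c$; so once the homogeneous solution set is known, $S$ is simply its translate by $(dx_0,dy_0)$, which has precisely the shape claimed in the lemma (with the lemma's $(x_0,y_0)$ read as this scaled particular solution).

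The one genuinely substantive --- albeit routine --- step is to pin down the homogeneous solutions. Writing $a'=a/\gcd(a,b)$ and $b'=b/\gcd(a,b)$, so that $\gcd(a',b')=1$, the equation $ax=-by$ becomes $a'x=-b'y$; since then $b'\mid a'x$ and $\gcd(a',b')=1$, Euclid's lemma gives $b'\mid x$, say $x=k\,b'$, and substituting back forces $y=-k\,a'$. Conversely every pair $(k\,b',\,-k\,a')$ is visibly a homogeneous solution, so the homogeneous set equals $\{(k\Delta x,\,k\Delta y):k\in\mathbb{Z}\}$ with $\Delta x=b/\gcd(a,b)$ and $\Delta y=-a/\gcd(a,b)$, and therefore $S=\{(dx_0+k\Delta x,\,dy_0+k\Delta y):k\in\mathbb{Z}\}$. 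I expect the only delicate points to be the correct invocation of Euclid's lemma and keeping the signs of $a$, $b$ (hence of $\Delta x$, $\Delta y$) straight; no deeper idea enters.
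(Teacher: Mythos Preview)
Your proof is correct and complete; it is the standard textbook argument for the structure of integer solutions to a linear Diophantine equation in two unknowns. The paper, however, does not prove this lemma at all: it is introduced with the phrase ``We recall the solutions $S$ for linear Diophantine equations with two variables'' and is used as a black box in the subsequent discussion of constrained solutions. So there is no paper proof to compare against; you have simply supplied what the paper takes for granted.

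One minor remark: you correctly spotted that the lemma as stated uses $(x_0,y_0)$ both for the B\'ezout pair satisfying $ax_0+by_0=\gcd(a,b)$ and, implicitly, for the particular solution of $ax+by=c$ appearing in the description of $S$. Your parenthetical ``with the lemma's $(x_0,y_0)$ read as this scaled particular solution'' is the right way to reconcile the notation, and the underlying mathematics is unaffected.
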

For our original question with constraints, we only consider the cases where $a\neq 0$ and $b\neq 0$.

When $T=\mathbb{Z}$, the constraints no longer exist and we only need to find the minimal positive integer in set $\left\{y_0+k\Delta y\right\}$, which can be solved by an Euclidean division. With loss of generality, we can just let $k=0$ by choosing $y_0$ to be exactly the smallest positive integer in $\left\{y_0+k\Delta y\right\}$ and adjust $x_0$ accordingly, without affecting the solution set $S$.

When $T=\left[a, b\right]$, the corresponding $x_0$ may not lie in $T$. In this case we may want to find a secondary-minimal positive integer. Without loss of generality we assume $\Delta y>0$ (otherwise choose $\Delta x = - \Delta x$ and $\Delta y = - \Delta y$). Then the problem becomes: find minimal $k\in N_+$ s.t.
\begin{equation}
    \left\{\begin{aligned}
        &x_0+k\Delta x >= a\\
        &x_0+k\Delta x <= b
    \end{aligned}\right.,
\end{equation}
which is equivalent to
\begin{equation}
    \left\{\begin{aligned}
        &k\Delta x >= a-x_0\\
        &k\Delta x <= b-x_0
    \end{aligned}\right.
\end{equation}
which can thus be solved by a routine calculation: a minimal $k$ exists, or does not exist at all.

\section{Proofs of Theorems \ref{rule-1} (CZ conjugation rules)}
%and \ref{th:generalized_conjugation} 
In this section we give out proof for our new rules of instruction data dependency. We will show that our definition of dependency is ``sufficient and necessary'' for quantum gate sets using $CZ$.

\label{appendix:cz_conjugation_rules}
%\subsection{Proof of Theorem \ref{rule-1}}
We first restate Theorem \ref{rule-1} as follows:
%\begin{theorem*}
   % (CZ conjugation of single qubit gates) 
   $$CZU_AU_BCZ=V_AV_B,$$ if and only if $U_A$ and $U_B$ are diagonal or anti-diagonal. That is, $U_i=R_Z(\theta)$ or $U_i=R_Z^+(\theta)$ for $i\in\{A,B\}$.
%\end{theorem*}
    \begin{proof}
    We here introduce our methodology of proving quantum gate algebra equations: first we give a necessary condition by trying several input states, and show that the condition is also sufficient for the equation to hold.
    
    The first lemma is a criteria for deciding whether a state is separable or entangled:

    \begin{lemma}
        \label{lem:entangled-state-decide}
        Two-qubit state
        $\ket{\psi}=\left(a, b, c, d\right)^T$ is separable if and only if:
        \begin{equation}
            ad-bc=0.
        \end{equation}
    \end{lemma}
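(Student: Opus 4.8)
The plan is to identify the coefficient vector with the $2\times 2$ matrix $M=\mattwo{a}{b}{c}{d}$ and to observe that separability of $\ket{\psi}$ is precisely the statement that $M$ has rank at most one, which in turn is equivalent to $\det M = ad-bc = 0$. So the whole lemma reduces to the elementary rank-one characterization of vanishing determinant, and the work is just in making the tensor-product bookkeeping explicit.

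For the forward (``only if'') direction I would assume $\ket{\psi}=\ket{\phi}\otimes\ket{\chi}$ with $\ket{\phi}=(p,q)^T$ and $\ket{\chi}=(r,s)^T$, so that $(a,b,c,d)=(pr,ps,qr,qs)$, and then compute directly $ad-bc = (pr)(qs)-(ps)(qr)=0$. For the reverse (``if'') direction I would assume $ad=bc$. If $a=b=c=d=0$ there is nothing to prove (and this case is excluded once $\ket{\psi}$ is required to be a unit vector), so suppose some coefficient is nonzero; by the symmetry of the four entries under exchanging the two tensor factors and/or the two computational-basis states of one qubit, assume without loss of generality that $a\neq 0$. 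Then I would exhibit the explicit factorization $\ket{\psi}=\ket{\phi}\otimes\ket{\chi}$ with $\ket{\phi}=(a,c)^T$ and $\ket{\chi}=(1,\,b/a)^T$, verifying the four entries $a\cdot 1=a$, $a\cdot(b/a)=b$, $c\cdot 1=c$, and $c\cdot(b/a)=bc/a=ad/a=d$, where the last equality uses the hypothesis $ad=bc$. Rescaling $\ket{\phi}$ and $\ket{\chi}$ to unit vectors (possible since $\ket{\psi}$ has unit norm, so neither factor is zero) then yields a genuine product state.

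The only care needed is the case split in the reverse direction: one must note that whichever of $a,b,c,d$ is nonzero, an analogous explicit factorization is available (the four cases are interchanged by the qubit-relabelling symmetries mentioned above), together with the routine normalization bookkeeping. Neither point is a genuine obstacle — the argument is elementary linear algebra — so I would expect the main effort elsewhere in the section to lie in the subsequent steps that use this lemma, namely deducing necessity of the diagonal/anti-diagonal condition in Theorem~\ref{rule-1} by feeding well-chosen (entangled or separable) input states through $CZ\,U_A U_B\,CZ$ and tracking which ones must remain separable.
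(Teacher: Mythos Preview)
Your proposal is correct and follows essentially the same approach as the paper: both directions are handled by writing out the tensor product explicitly and, for sufficiency, exhibiting an explicit factorization. The only presentational difference is that the paper does a three-way case split ($b=0$, $c=0$, and the generic case with explicitly normalized factors) rather than invoking the qubit-relabelling symmetry to reduce to a single case $a\neq 0$ as you do; your rank/determinant framing is also a slightly cleaner conceptual wrapper, but the underlying computation is the same.
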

        \begin{proof}
        \textbf{(Necessity)} If $\ket{\psi}$ is separable,
        there exists two single qubit states $\ket{\psi_1}$ and $\ket{\psi_2}$,
        s.t. 
        \begin{equation}
            \ket{\psi}=\ket{\psi_1}\otimes \ket{\psi_2}
        \end{equation}
        Suppose
        \begin{equation}
            \ket{\psi_1}=\left(\alpha_1, \beta_1\right)^T,
        \end{equation}
        \begin{equation}
        \ket{\psi_2}=\left(\alpha_2, \beta_2\right)^T,
        \end{equation}
        We have
        \begin{equation}
            \ket{\psi}=\left(\alpha_1\alpha_2, \alpha_1\beta_2, \beta_1\alpha_2, \beta_1\beta_2\right)^T,
        \end{equation}
        and it can be easily verified that $ad-bc=0$.    

        \textbf{(Sufficiency)} If
        \begin{equation}\ket{\psi}=\left(a,b,c,d\right)^T\end{equation} with $ad-bc=0$,
        \begin{enumerate}
        \item If $b=0$, this indicates $a=0$ or $d=0$. If $a=0$, let
        \begin{equation}
        \begin{cases}
        \ket{\psi_1}=\ket{1}\\
        \ket{\psi_2}=c\ket{0}+d\ket{1}
        \end{cases};
        \end{equation}
        otherwise $d=0$, and let
        \begin{equation}
        \begin{cases}
        \ket{\psi_1}=a\ket{0}+c\ket{1}\\
        \ket{\psi_2}=\ket{0}
        \end{cases}.
        \end{equation}
        \item If $c=0$, this indicates $a=0$ or $d=0$. If $a=0$, let
        \begin{equation}
        \begin{cases}
        \ket{\psi_1}=b\ket{0}+d\ket{1}\\
        \ket{\psi_2}=\ket{1}
        \end{cases};
        \end{equation}
        otherwise $d=0$, and let
          \begin{equation}
        \begin{cases}
        \ket{\psi_1}=\ket{0}\\
        \ket{\psi_2}=a\ket{0}+b\ket{1}
        \end{cases}.
        \end{equation}
        \item Otherwise $a,b,c,d\neq 0$. Let
        \begin{equation}
        \begin{cases}
        \ket{\psi_1}=\left(\frac{b}{\sqrt{\|b\|^2+\|d\|^2}},\frac{d}{\sqrt{\|b\|^2+\|d\|^2}}\right)^T\\
        \ket{\psi_2}=\left(\frac{a}{b\sqrt{\|\frac{a}{b}\|^2+\|1\|^2}},\frac{1}{\sqrt{\|\frac{a}{b}\|^2+\|1\|^2}}\right)^T
        \end{cases}.
        \end{equation}
        It can be verified that $\|\ket{\psi_1}\|=\|\ket{\psi_2}\|=1$, and that
        \begin{equation}
            \ket{\psi_1}\otimes \ket{\psi_2} = \frac{\left(a,b,c,d\right)^T}{\sqrt{(\|b\|^2+\|d\|^2)(\|\frac{a}{b}\|^2+\|1\|^2})},
        \end{equation}
        which is exactly $(a,b,c,d)^T$ since tensor product preserves norm.
        %\item If $c\neq 0$, let
        %\begin{equation}
        %\begin{cases}
        %\ket{\psi_1}=\left(\frac{a}{\sqrt{\|c\|^2+\|d\|^2}},\frac{c}{\sqrt{\|c\|^2%+\|d\|^2}}\right)^T\\
        %\ket{\psi_2}=\left(\frac{1}{\sqrt{\|c\|^2+\|d\|^2}},\frac{d}{c\sqrt{\|c\|^%2+\|d\|^2}}\right)^T
        %\end{cases}.
        %\end{equation}
        \end{enumerate}
        \end{proof}

    \begin{lemma}
    \textbf{(Necessity)} For the equation to hold, $U_A$ and $U_B$ have to be diagonal or anti-diagonal. This means $U_i$ transforms $\ket{0}$ to $\ket{0}$ or $\ket{1}$, up to a global phase.
\end{lemma}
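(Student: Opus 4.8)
The plan is to use the separability criterion of Lemma~\ref{lem:entangled-state-decide} as the main tool. Set $M:=CZ(U_A\otimes U_B)CZ$. By hypothesis $M=V_A\otimes V_B$, so $M$ carries every product state to a product state. I will apply this to inputs on which the rightmost $CZ$ acts merely by a scalar --- namely computational basis states --- so that the constraint becomes: ``$CZ$ applied to $(U_A\ket{x})\otimes(U_B\ket{y})$ is separable''. Since $CZ$ of a product state is generically entangled, this is restrictive enough to pin down $U_A$ and $U_B$.

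First I would probe $M$ with the four basis states. For the input $\ket{xy}$ we have $CZ\ket{xy}=(-1)^{xy}\ket{xy}$, hence $M\ket{xy}=(-1)^{xy}\,CZ\bigl((U_A\ket{x})\otimes(U_B\ket{y})\bigr)$; this must equal the product state $(V_A\ket{x})\otimes(V_B\ket{y})$, so it is separable. Writing $U_A\ket{x}=(p_0,p_1)^T$ and $U_B\ket{y}=(q_0,q_1)^T$, the vector $(U_A\ket{x})\otimes(U_B\ket{y})=(p_0q_0,\,p_0q_1,\,p_1q_0,\,p_1q_1)^T$ becomes $(p_0q_0,\,p_0q_1,\,p_1q_0,\,-p_1q_1)^T$ after $CZ$, and the criterion $ad-bc=0$ reduces to $p_0p_1q_0q_1=0$. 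Thus, for every $x,y\in\{0,1\}$, at least one of $U_A\ket{x}$, $U_B\ket{y}$ is --- up to a global phase --- a computational basis state.

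Now the case analysis. A unitary sending both $\ket{0}$ and $\ket{1}$ to phase multiples of basis states is, by orthogonality, forced to be diagonal or anti-diagonal; so if $U_A$ were neither, at least one of $U_A\ket{0},U_A\ket{1}$ would not be a basis state, and then the relation above with that value of $x$ and both $y=0,1$ forces $U_B\ket{0}$ and $U_B\ket{1}$ both to be basis states, i.e.\ $U_B$ is diagonal or anti-diagonal. Since $\mathrm{SWAP}$ commutes with $CZ$ and conjugates $U_A\otimes U_B$ into $U_B\otimes U_A$, the hypothesis is symmetric in the two qubits, so the same argument gives: if $U_B$ is not diagonal/anti-diagonal then $U_A$ is. Hence at least one of $U_A,U_B$ is diagonal or anti-diagonal, and by this symmetry it suffices to finish the case where $U_A$ is. Viewing $CZ$ as a $2\times2$ array of $2\times2$ blocks it equals $\mathrm{diag}(I,Z)$, so with $U_A=\mattwo{a}{b}{c}{d}$ a short block computation shows $CZ(U_A\otimes U_B)CZ$ has blocks $aU_B$, $bU_BZ$, $cZU_B$, $dZU_BZ$. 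Matching against $V_A\otimes V_B$, whose four blocks are all scalar multiples of one matrix $V_B$, and using $b=c=0$ (diagonal case) or $a=d=0$ (anti-diagonal case), leaves exactly the relation $ZU_BZ=\lambda U_B$ for a scalar $\lambda$; inspecting entries, this holds iff $U_B$ is diagonal ($\lambda=1$) or anti-diagonal ($\lambda=-1$). This completes the necessity argument, and ``$U_i$ takes $\ket{0}$ to $\ket{0}$ or $\ket{1}$ up to a global phase'' is just ``diagonal or anti-diagonal'' restated via unitarity.

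The main obstacle I anticipate is not any single computation but keeping the branching and the ``up to global phase'' bookkeeping honest --- especially in the last step, where one must check carefully that being a product operator really does force the off-diagonal blocks of $V_A\otimes V_B$ to vanish in the diagonal case (resp.\ the diagonal blocks in the anti-diagonal case) and that $V_B$ ends up proportional to $U_B$, before the clean relation $ZU_BZ\propto U_B$ can be extracted.
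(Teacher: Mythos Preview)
Your argument is correct, but the paper reaches the conclusion by a shorter route. Instead of probing with the four computational basis states, the paper feeds in the single state $\ket{0}\otimes U_B^\dagger\ket{\phi}$ with $\ket{\phi}:=U_A\ket{0}=(a,b)^T$. The rightmost $CZ$ fixes this input (control is $\ket{0}$), and after $U_A\otimes U_B$ one obtains the symmetric product $\ket{\phi}\otimes\ket{\phi}$; applying the final $CZ$ yields $(a^2,ab,ab,-b^2)^T$, whose separability forces $a^2b^2=0$ directly. This decouples the analysis of $U_A$ from $U_B$ in one stroke, so the conclusion for $U_A$ (and by symmetry $U_B$) falls out with no case analysis and no block computation. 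Your approach, by contrast, only extracts the mixed condition $p_0p_1q_0q_1=0$ from each probe, which is why you then need the ``at least one of $U_A,U_B$ is diagonal/anti-diagonal'' step followed by the $ZU_BZ=\lambda U_B$ computation to pin down the other factor. Both arguments are sound; the paper's trick of engineering $\ket{\phi}\otimes\ket{\phi}$ after the middle layer is what buys the brevity, while your basis-probe-plus-block-matrix route is more mechanical and arguably easier to adapt if the two-qubit gate were something other than $CZ$.
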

    \begin{proof}
    Suppose $\ket{\phi}=U_A\ket{0}=(a, b)^T$, thus
    \begin{align}
        &CZU_AU_BCZ(\ket{0}\otimes(U_B^\dag\ket{\phi}))\\
        =&CZ\ket{\phi}\otimes\ket{\phi}\\
        =&(a^2, ab, ab, -b^2)^T,
    \end{align} which should be a separable state since this is also $V_AV_B(\ket{0}\otimes(U_B^\dag\ket{\phi}))$, which is separable. Thus $a^2b^2=0$, so $a=0$ ($R_Z^+$ case) or $b=0$ ($R_Z$ case). This is the same for $U_B$.
    \end{proof}
    
    \begin{lemma}
    \textbf{(Sufficiency)} $R_Z$ and $R_Z^+$ satisfies the conjugation rules.
\end{lemma}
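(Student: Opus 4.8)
The plan is to verify the sufficiency direction of Theorem~\ref{rule-1}: that if each $U_i$ (for $i\in\{A,B\}$) is of the form $R_Z(\theta)$ or $R_Z^+(\theta)$, then $CZ\,U_A U_B\, CZ = V_A V_B$ for suitable single-qubit gates $V_A, V_B$. Since the identity $I$ is a special case of $R_Z$ and $CZ$ is symmetric in its two qubits, it suffices to treat the two ``atomic'' cases $I\otimes R_Z(\theta)$ and $I\otimes R_Z^+(\theta)$ and then compose; a gate $R_Z(\theta_A)\otimes R_Z(\theta_B)$ is handled by applying the single-factor result twice (once on each wire), and mixed cases like $R_Z^+(\theta_A)\otimes R_Z(\theta_B)$ likewise. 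So the heart of the argument is two short matrix computations.

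First I would compute $CZ\,(I\otimes R_Z(\theta))\,CZ$. Since both $CZ$ and $R_Z(\theta)$ are diagonal in the computational basis, the product is diagonal and a direct multiplication of diagonal matrices gives $CZ\,(I\otimes R_Z(\theta))\,CZ = I\otimes R_Z(\theta)$ (the two $CZ$ factors square to the identity on each basis vector where $R_Z$ does nothing to the control-selection). Concretely, acting on $\ket{ab}$ one gets a phase $(-1)^{ab}\cdot e^{(-1)^b i\theta/2}\cdot(-1)^{ab}$, and the $(-1)^{ab}$ factors cancel, leaving exactly the phase of $I\otimes R_Z(\theta)$. Hence the diagonal case conjugates trivially: $V_A = I$, $V_B = R_Z(\theta)$.

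Next I would compute $CZ\,(I\otimes R_Z^+(\theta))\,CZ$. Writing $R_Z^+(\theta) = X R_Z(\theta)$, and noting that on the second qubit the $X$ flips $b$, conjugation by $CZ$ turns the control-dependent phase into something that can be absorbed: acting on $\ket{ab}$, $CZ\,(I\otimes R_Z^+(\theta))\,CZ\ket{ab} = (-1)^{a(1-b)} e^{\pm i\theta/2}\ket{a(1-b)}$ (signs tracked carefully), and one checks this equals $\big(Z^{?}[A]\otimes R_Z^+(\theta)[B]\big)\ket{ab}$ up to the sign bookkeeping — i.e. $CZ\,(I\otimes R_Z^+(\theta))\,CZ = Z\otimes R_Z^+(\theta)$ (equivalently $(Z[A])(R_Z^+(\theta)[B])$), so $V_A = Z$, $V_B = R_Z^+(\theta)$. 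This is precisely the ``$R_Z^+$ introduces a $Z$ on the other qubit'' phenomenon anticipated in the text, and it is also the seed of the generalized-$CZ$ statement in Theorem~\ref{th:generalized_conjugation}. To finish, I would assemble the general case: for arbitrary $U_A\otimes U_B$ with each factor diagonal or antidiagonal, decompose $U_A\otimes U_B = (U_A\otimes I)(I\otimes U_B)$, push the two $CZ$'s through each tensor factor using the two computations above (one $CZ$ on the left, one on the right, inserting $CZ^2=I$ between the factors as needed), collect the extra $Z$'s, and absorb each extra $Z$ into the diagonal/antidiagonal $V$ on that wire (since $Z\cdot R_Z(\theta)$ and $Z\cdot R_Z^+(\theta)$ are again of the allowed form, up to phase).

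The main obstacle is bookkeeping rather than conceptual difficulty: carefully tracking the $\pm$ signs and which qubit is ``control'' versus ``target'' in the $CZ$ conjugation, and making sure that when both $U_A$ and $U_B$ are antidiagonal the two generated $Z$'s land on the correct wires and combine correctly with the $V$'s — getting a consistent convention for $R_Z^+ = X R_Z$ and for how $X$ and $Z$ anticommute is where a careless computation would slip. A clean way to avoid errors is to do everything at the level of the basis-state phase formula $\ket{ab}\mapsto (\text{phase})\,\ket{a'b'}$ rather than multiplying $4\times4$ matrices by hand, and to note at the end that the four required compositions (diag$\otimes$diag, diag$\otimes$antidiag, antidiag$\otimes$diag, antidiag$\otimes$antidiag) are all instances of the same two single-wire computations.
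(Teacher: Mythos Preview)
Your proof is correct and follows essentially the same route as the paper: the paper's two-line argument invokes precisely the decomposition $R_Z^+ = XR_Z$ together with the commutation $CZ\,X_A = X_A Z_B\,CZ$, which combined give exactly your atomic-case result $CZ(I\otimes R_Z^+)CZ = Z\otimes R_Z^+$ (and trivially $CZ(I\otimes R_Z)CZ = I\otimes R_Z$). Your write-up is simply a more explicit unpacking of that sketch, including the composition step for the general $U_A\otimes U_B$.
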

    \begin{proof}
    Note that $R_Z^+=XR_Z$ and $CZX_A=X_AZ_BCZ$. By simple computation we can see the conjugation holds.
    \end{proof}
    
    \end{proof}

\section{Proof of Theorem \ref{thm:compaction} (Convergence of compaction)}
\label{appendix:compaction}
We show that compaction procedure will converge after applying the procedure three times.

If we look at the factors that prevents compaction procedure from reaching its fixpoint, there are two main reasons:
\begin{enumerate}
    \item Single qubit merging results in new diagonal gates or antidiagonal gates, which is not recognized when the first gate is placed. Compacting $\#1$ in Figure \ref{fig:greedy-leftnormalize-bad} shows an example where three gates merge into an antidiagonal $X$ gate, which can merge through the $CZ$ gate on next compaction.
    \item Antidiagonal and $CZ$ changing order will add $Z$ gates to the circuit. Compacting $\#2$ in Figure \ref{fig:greedy-leftnormalize-bad} shows an example.
\end{enumerate}

Fortunately, these problems will not occur at the third time of compaction. This is because diagonal gates and antidiagonal gates forms a subgroup of $U_2$:
\begin{lemma}
    \label{lem:zgroup}
    Let
    \begin{align}
    G_{Z}&=\left\{R_Z(\theta)|\theta \in \left[0, 2\pi\right)\right\},\\
    G_Z^+&=\left\{R_Z^+(\theta)|\theta \in \left[0, 2\pi\right)\right\},\\
    G&=G_Z\cup G_Z^+,
    \end{align}
    thus $G_Z$,$G$ are subgroups of $U_2$, while $\forall g_1,g_2\in G_Z^+, g_1g_2\in G_Z$.
    \end{lemma}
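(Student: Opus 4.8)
The plan is to verify directly that the three sets $G_Z$, $G_Z^+$ and $G = G_Z \cup G_Z^+$ behave as claimed by computing products of the generic matrix forms $R_Z(\theta) = \mathrm{diag}(e^{-i\theta/2}, e^{i\theta/2})$ and $R_Z^+(\theta) = X R_Z(\theta)$, and comparing them against the target forms. Since all matrices in question are determined by a single angle parameter together with a ``diagonal vs.\ antidiagonal'' flag, every claim reduces to a one-line matrix multiplication plus bookkeeping of the angle. Concretely, there are four cases to check: (i) $R_Z(\alpha) R_Z(\beta) = R_Z(\alpha+\beta)$, giving closure of $G_Z$ under multiplication; (ii) $R_Z(\theta)^{-1} = R_Z(-\theta) \in G_Z$, giving closure under inverse (and $R_Z(0) = I$ for the identity), so $G_Z$ is a subgroup of $U_2$; (iii) $R_Z^+(\alpha) R_Z^+(\beta) \in G_Z$, which is the final assertion $\forall g_1, g_2 \in G_Z^+,\ g_1 g_2 \in G_Z$; and (iv) the ``mixed'' products $R_Z(\alpha) R_Z^+(\beta)$ and $R_Z^+(\alpha) R_Z(\beta)$ land in $G_Z^+$, which together with (i) and (iii) shows $G = G_Z \cup G_Z^+$ is closed under multiplication.

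For (iii) and (iv) the key identity is the commutation relation $R_Z(\theta) X = X R_Z(-\theta)$, equivalently $Z R_Z(\theta) = R_Z(\theta) Z$ and $X R_Z(\theta) X = R_Z(-\theta)$, which follows from $X\,\mathrm{diag}(a,b)\,X = \mathrm{diag}(b,a)$. Using $R_Z^+(\beta) = X R_Z(\beta)$ we get $R_Z^+(\alpha) R_Z^+(\beta) = X R_Z(\alpha) X R_Z(\beta) = R_Z(-\alpha) R_Z(\beta) = R_Z(\beta - \alpha) \in G_Z$, which is exactly the claim in (iii); and $R_Z(\alpha) R_Z^+(\beta) = R_Z(\alpha) X R_Z(\beta) = X R_Z(-\alpha) R_Z(\beta) = X R_Z(\beta-\alpha) = R_Z^+(\beta-\alpha) \in G_Z^+$, with the other mixed product handled the same way. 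Finally, inverses within $G$ are covered because $R_Z^+(\theta)^{-1} = R_Z(-\theta) X^{-1} = R_Z(-\theta) X$; and $X R_Z(-\theta) = R_Z(\theta) X = R_Z^+$ of the appropriate angle, so $R_Z^+(\theta)^{-1} \in G_Z^+ \subseteq G$. Since $G$ is a nonempty subset of the group $U_2$ closed under products and inverses, it is a subgroup.

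I do not expect a real obstacle here; the only mild subtlety is matching the angle conventions carefully — the factor $\tfrac12$ in the exponents and the sign flips introduced by conjugation by $X$ — and making sure global phases are tracked consistently with whatever normalization the paper uses for $R_Z$ and $R_Z^+$ (as defined in Appendix \ref{appendix:basic_quantum_gates}). One should also note that $G$ is not abelian (the mixed products pick up the sign flip $\beta \mapsto \beta - \alpha$ rather than $\alpha + \beta$), so the argument must be phrased purely in terms of closure under multiplication and inversion rather than invoking commutativity. Everything else is a routine $2\times 2$ computation.
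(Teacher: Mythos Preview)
Your approach is correct and is the natural direct verification; the paper itself states this lemma without proof, treating it as an elementary observation about diagonal and antidiagonal $2\times 2$ unitaries. The only point worth flagging is the one you already anticipate: with the parametrization $\theta\in[0,2\pi)$ and the half-angle convention $R_Z(\theta)=\mathrm{diag}(e^{-i\theta/2},e^{i\theta/2})$, one has $R_Z(\theta+2\pi)=-R_Z(\theta)$, so closure literally holds only up to global phase (i.e.\ in $PU(2)$ rather than $U(2)$); this is the standard convention in quantum computing and is implicit throughout the paper, so your caveat about global-phase bookkeeping is exactly the right remark to make.
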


    \begin{corollary}
        \label{cor:no-new-diagonal}
    $\forall g_1\in U_2\backslash G, g_2 \in G, g_2g_1\in U_2\backslash G$.
    \end{corollary}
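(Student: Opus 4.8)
The plan is to derive this corollary directly from the subgroup structure of $G$ established in Lemma \ref{lem:zgroup}, invoking only the elementary closure properties of a group. The statement asserts that left-multiplying a matrix lying outside $G$ by an element $g_2 \in G$ cannot carry it back into $G$; this is precisely the familiar fact that a subgroup is ``coset-stable'' under its own left translations, so I expect the whole argument to be a two-line consequence of Lemma \ref{lem:zgroup} rather than anything requiring fresh computation with $R_Z$ and $R_Z^+$ matrices.

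Concretely, I would argue by contradiction. Suppose $g_1 \in U_2\backslash G$ and $g_2 \in G$, yet $g_2 g_1 \in G$. By Lemma \ref{lem:zgroup}, $G$ is a subgroup of $U_2$, hence closed under inverses, so $g_2^{-1} \in G$; and it is closed under multiplication, so the product $g_2^{-1}(g_2 g_1)$ again lies in $G$. By associativity $g_2^{-1}(g_2 g_1) = g_1$, which forces $g_1 \in G$ and contradicts the hypothesis $g_1 \in U_2\backslash G$. Therefore $g_2 g_1 \in U_2\backslash G$, as claimed.

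Equivalently, one may phrase the same fact as follows: left-multiplication by $g_2 \in G$ is a bijection of $U_2$ whose inverse is left-multiplication by $g_2^{-1}$, and this bijection restricts to a bijection of $G$ onto itself (again using closure); consequently it must map the complement $U_2\backslash G$ onto itself. There is no genuine obstacle to overcome here, since the entire content of the corollary is already packaged in Lemma \ref{lem:zgroup}'s assertion that $G$ is a subgroup. The role of the corollary is simply to record, in the form we will apply in the proof of Theorem \ref{thm:compaction}, the consequence that composing a general single-qubit gate on the left with a diagonal or anti-diagonal gate never yields a diagonal or anti-diagonal gate, which is exactly what guarantees that no spurious new diagonal/anti-diagonal gates arise during compaction.
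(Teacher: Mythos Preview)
Your proof is correct and matches the paper's approach: the corollary is stated immediately after Lemma \ref{lem:zgroup} as a direct consequence of $G$ being a subgroup, with no further argument given, and your contradiction via $g_1 = g_2^{-1}(g_2 g_1) \in G$ is exactly the standard unpacking of that fact.
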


On $\#2$ compaction, single qubit gates can only merge when they are on different sides of a $CZ$ gate and one is diagonal or antidiagonal (otherwise they should have been merged on $\#1$ compaction). According to corollary \ref{cor:no-new-diagonal}, this merging will not add new diagonals or antidiagonals, and all new gates from compaction $\#2$ come from moving antidiagonal through $CZ$. The last compaction merges these additional $Z$ gates to their left.
\section{Proof of Theorem \ref{th:remove-multiple-edge} (Remove multiple edges)}
\label{appendix:remove-multiple-edges}
In the QDG defined in Section \ref{sec:resched_loop_body}, Theorem \ref{th:remove-multiple-edge} is proposed so that multiple edges can be removed before $II$ is assigned. The proof of Theorem \ref{th:remove-multiple-edge} is listed below:

\begin{proof}
    Since $dif_1$ and $dif_2$ are integers,
    \begin{equation}
        1+dif_2\leqslant dif_1,
    \end{equation}
    Since $II\geqslant 1$,
    \begin{equation}
        \label{equ:remove_multiedge_e1}
        -II\cdot dif_1\leqslant -II-II\cdot dif_2\leqslant -1-II\cdot dif_2.
    \end{equation}
    Since $min_1\leqslant 1$ and $min_2\leqslant 1$,
    \begin{equation}
        \label{equ:remove_multiedge_e2}
        min_1\leqslant min_2+1.
    \end{equation}
    Adding up Equation \ref{equ:remove_multiedge_e1} and \ref{equ:remove_multiedge_e2} shows the result.
\end{proof}
\section{Resource scheduling complexity analysis}
\label{appendix:resource-scheduling-complexity}
In Secion IV we mentioned that we can keep retrying if there is a ``resource conflict'' and the death countdown is not timed-out (i.e. resource conflict are all caused by false conflicts), which may lead to too many retries that may dominate the complexity of the algorithm. This requires us to give an upper bound of maximum number of retries to estimate the total complexity.

Recall how we perform resource checking when inserting instructions into the schedule:

\begin{itemize}
    \item For every time slot, we have scheduled a bunch of instructions in this time slot.
    \item When adding an instruction or a group of instructions, we check the operands of each instruction to be added against instructions in the time slot where it will be added.
    \item If there is a resource conflict, we have to try next tick (and perhaps start a death countdown).
\end{itemize}

We first show that if there is only false conflict, the loop can be written into an equivalent form where all $k=1$. In fact, this is achieved by the fact:
\begin{equation}
    ki+b=k(i+(b/k))+(b\mod k),
\end{equation}
where
\begin{equation}
    (b \mod k)\in\left[0, \|k\|\right), k(b / k)+(b\mod k)=b.
\end{equation}

According to this fact, the array can be split into $\|k\|$ slices, and resource conflict can occur if the two qubit references fall into the same slice. Figure \ref{fig:qubit-array-splitting} is an example for $k=3$. Offsets $3i$ and $(3i-1)$ will never conflict with each other, since they fall into different slices $q_0$ and $q_2$.

This splitting allows us to use one integer $b'=(b/ k)$ to represent an expression in the slice: in the Figure \ref{fig:qubit-array-splitting} case we can use $0$ for $q[3i]$ in slice $q_0$, $0$ for $q[3i+1]$ in slice $q_1$, and $(-1)$ for $q[3i-1]$ in slice $q_2$.

\begin{corollary}
For the modulo scheduling, if a resource is scheduled $II$ ticks later, the integer $b'$ representing the resource decreases by 1.
\end{corollary}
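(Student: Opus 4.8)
The plan is to track how an operand's array offset is rewritten --- first by the slice normalisation and then by the modulo-scheduling placement --- and to observe that a delay of $II$ ticks is exactly a shift of one source iteration. First I would fix a slice. By the identity $ki+b=k(i+\floor*{b/k})+(b\bmod k)$, an operand $q[ki+b]$ of the normalised loop body lies in the slice indexed by $r=(b\bmod k)$ of the sliced array $q'$, and inside that slice it is simply $q'[\,i+b'\,]$ with $b'=\floor*{b/k}$; this is the integer that represents the resource. Then I would recall from the modulo-scheduling construction that the $c$-th instruction of the original body, once placed in the table at tick $t=pII+q$ with $0\leqslant q<II$, supplies its operand from the source iteration lying $p$ steps behind the kernel iteration in which it appears; hence, in the kernel iteration whose index I denote by $i$, that operand is $q[\,k(i-p)+b\,]$. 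Applying the same identity, $k(i-p)+b=k(i-p+b')+(b\bmod k)$, so the slice index $r$ is unchanged and the operand appears as $q'[\,i+(b'-p)\,]$: in the scheduled setting the integer representing the resource is $b'-p$, i.e. the slice representative $b'$ corrected by the iteration shift $-p$.

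The claim is then a one-line computation. Delaying the instruction by $II$ ticks replaces $t$ by $t+II=(p+1)II+q$, which has the same remainder $q$ and quotient $p+1$; thus $p\mapsto p+1$, the slice index $r$ is untouched, and the representing integer $b'-p$ becomes $(b'-p)-1$. It therefore decreases by exactly $1$, which is the assertion.

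I do not anticipate a genuine obstacle. The only point needing care is the meaning of ``the integer representing the resource'' in the scheduled setting --- it is the post-slicing offset $b'$ corrected by the iteration shift $-p$, not the raw $b'$ of the source program --- together with the observation that it is a full $II$-tick delay, rather than a single-tick retry, that keeps the remainder $q$ fixed while raising the quotient by one, which is precisely the granularity at which a false resource conflict in a fixed time slot can reappear.
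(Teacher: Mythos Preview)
Your proposal is correct and follows exactly the reasoning the paper intends: the paper does not spell out a proof for this corollary, treating it as an immediate consequence of the slice representation $b'=b/k$ together with the modulo-scheduling convention that an instruction placed at tick $t=pII+q$ draws its operand from iteration $i-p$ (Section~\ref{sec:resched_loop_body}). Your computation --- that a full $II$-tick delay increments $p$ while fixing the slice index $b\bmod k$, hence sends the effective offset $b'-p$ to $b'-p-1$ --- is precisely the argument behind the paper's one-line statement.
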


\begin{figure*}
    \begin{center}
    \includegraphics[width=\textwidth]{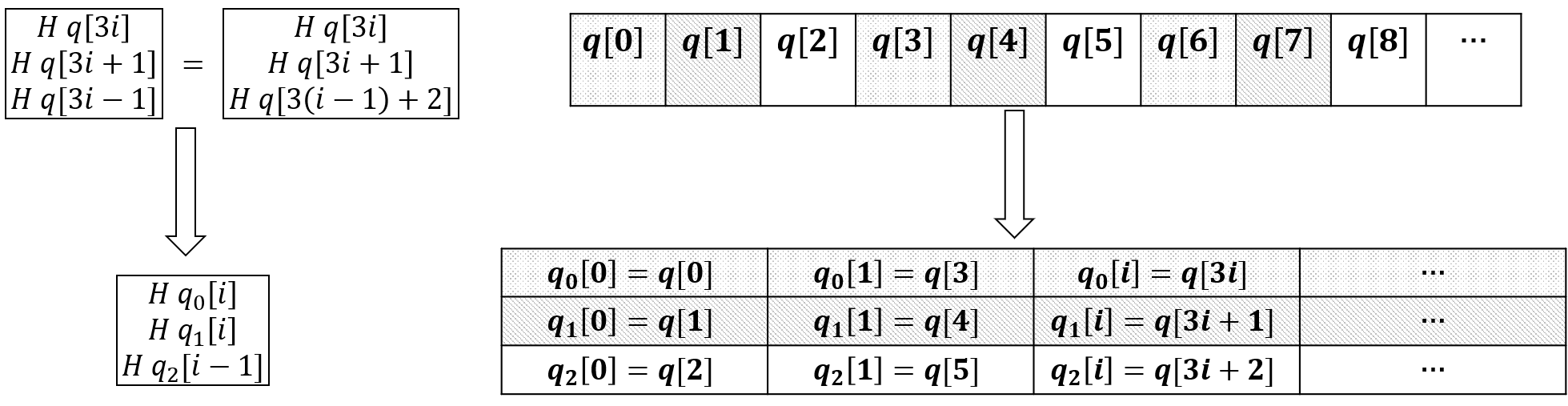} 
    \end{center}
    \caption{Example for splitting the qubit array when $k=3$. Resource conflict can only occur inside each slice, and resources in each slice can be represented by one integer.}
    \Description{Resource conflict can only occur inside each slice.}
    \label{fig:qubit-array-splitting}
\end{figure*}

This allows to use a stricter model for upper-bound estimation:
\begin{itemize}
    \item For the entire schedule, we use a universal set to store all integer representations $\left\{b'\right\}$ of linear expressions.
    \item When adding an instruction or a group of instructions, we check the operands to be added against the universal set, rather than the time-slot set. This means two instructions with the same operand but scheduled at different ticks will also be seen as conflicted.
    \item If the integer representation of operand is already in the set, there is a resource conflict. To find the worst case, we suppose the next $(II-1)$ tries will definitely fail. The next retry that will possibly success is the $II$-th retry where the instruction is going to be placed in the same time slot again.
    \item The array index $q$ and slice index $b\mod k$ are ignored. For example, operands $q[3i]$ and $q[3i+1]$ will be seen as conflicted since they have the same representation $0$, even though the two expressions will never be equal to each other.
\end{itemize}

This strict set of rules reduces our upper bound problem to a clearer problem:

\begin{theorem}
For finite set $A\subset Z$ standing for resources (integers representing each resource) already scheduled, and $B\subset Z$ being resources to be scheduled. Define
\begin{equation}
    B-(k\in N)=\left\{x-k | x\in B \right\}
\end{equation}
to be the resource set of $B$ after $kII$ retries.
Let $k_{min}$ be the minimal $k$, s.t.
\begin{equation}
    A\cap (B-k) = \Phi,
\end{equation}
then $k_{min}II$ retries is required at most in our algorithm.
\end{theorem}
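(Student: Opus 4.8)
The plan is to prove two things: that $k_{\min}$ is well-defined, and that the strict retry model terminates within $k_{\min}II$ retries. Since the passage above has already reduced the real resource-scheduling loop to this strict model, I would work purely with the two finite sets $A$ and $B$ and the basic fact (the corollary on the offset $b'$) that delaying an instruction group by $II$ ticks replaces its resource set $B$ by $B-1$.

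First I would dispose of well-definedness. Because $A$ and $B$ are finite, the set $D=\{\,b-a \mid a\in A,\ b\in B\,\}$ of blocking shifts is finite, and $A\cap(B-k)=\emptyset$ holds precisely when $k\notin D$. Hence every $k$ exceeding $\max D$ (in particular $k=\max(\max B-\min A,0)+1$) is admissible, so $k_{\min}$ exists; this also rules out divergence of the procedure.

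Next, for the retry bound, I would track the attempt explicitly. Say the group with resource set $B$ is first tried at tick $t_0=p_0 II+q_0$ with $0\le q_0<II$; after $r$ retries we are at tick $t_0+r$, and by the corollary the resource set actually checked against the universal set $A$ is $B-j_r$ with $j_r=\lfloor(t_0+r)/II\rfloor-\lfloor t_0/II\rfloor$. The sequence $j_r$ is non-decreasing, equals $0$ at $r=0$, and jumps by exactly $1$ each time $t_0+r$ passes a multiple of $II$, so it runs through $0,1,2,\dots$ in order. By minimality of $k_{\min}$, every attempt with $j_r<k_{\min}$ collides and fails, whereas the first attempt with $j_r=k_{\min}$ has $A\cap(B-k_{\min})=\emptyset$ and succeeds; monotonicity of $j_r$ means the algorithm halts there and never reaches a larger blocking shift in $D$. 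Since the least $r$ with $j_r=k_{\min}$ is $k_{\min}II-q_0\le k_{\min}II$, placement occurs after at most $k_{\min}II$ retries.

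The hard part is really the accounting in the last step rather than any deep fact: one has to see clearly that one unit of progress in the shift parameter costs a whole $II$-tick period in the worst case (the $II-1$ intermediate ticks of a period leave $b'$, hence the collision status, untouched), and that it is the interplay of the monotonicity of $j_r$ with the ``minimal'' in the definition of $k_{\min}$ that simultaneously forbids earlier success and forces success exactly at shift $k_{\min}$, so that blocking shifts larger than $k_{\min}$ are irrelevant. Finiteness of $k_{\min}$ and termination then come for free from $A$ and $B$ being finite.
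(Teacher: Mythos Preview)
Your proof is correct. The paper does not actually supply a separate proof for this theorem; it is stated as the immediate formalization of the ``stricter model for upper-bound estimation'' described just above it, in which (i) every retry within the same $II$-period is assumed to fail (``we suppose the next $(II-1)$ tries will definitely fail''), and (ii) crossing into the next period decrements each resource integer $b'$ by $1$ (the Corollary). From that setup the paper simply reads off that success first occurs after $k_{\min}$ periods, i.e.\ at most $k_{\min}II$ retries, and moves on to the real content of the appendix, the bound $k_{\min}\le\|A\|\|B\|$.

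Your write-up is more explicit on two points. First, you track the shift $j_r=\lfloor(t_0+r)/II\rfloor-\lfloor t_0/II\rfloor$ tick by tick and obtain the slightly sharper $k_{\min}II-q_0$; the paper absorbs this into its worst-case assumption. Second, your well-definedness argument via the difference set $D=\{b-a\mid a\in A,\ b\in B\}$ and the observation that $A\cap(B-k)=\emptyset$ iff $k\notin D$ is exactly the device the paper introduces only in the \emph{next} theorem (with the extra restriction $b-a\ge 0$) to prove $k_{\min}\le\|A\|\|B\|$; so you have anticipated its key lemma.
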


\begin{figure}[H] % use float package if you want it here
    \centering
    \includegraphics[width=0.5\textwidth]{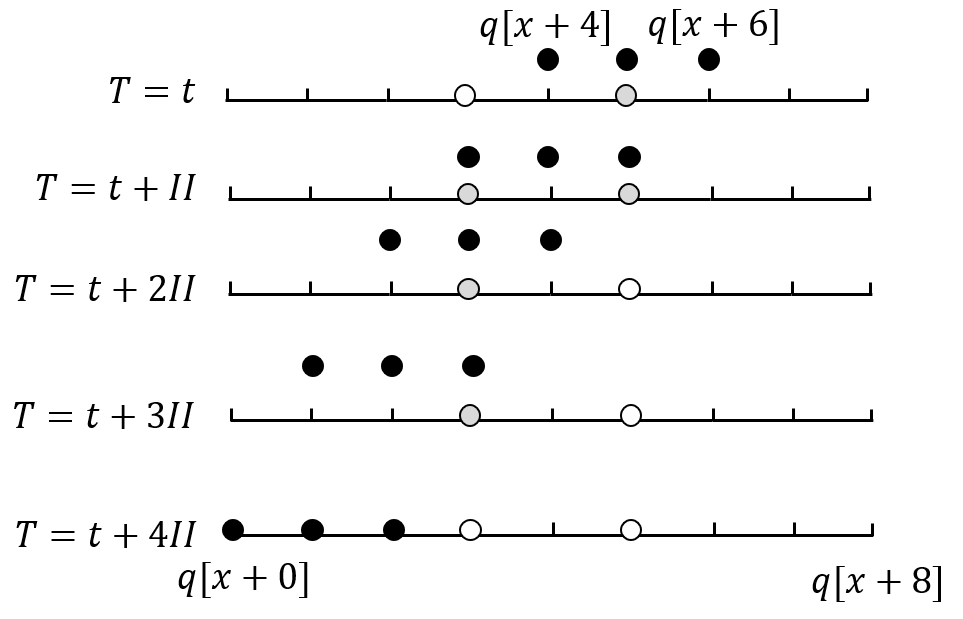}
	\caption[Scheduling conflict and retrying]{Resource 3($q[x+3]$) and 5($q[x+5]$) are now occupied, and resource 4 to 6 required to scheduled. Now $k_{min}=4$.}
    \label{fig:res-sched-axis}
\end{figure}

A naive estimation of $k_{min}$ would be 
\begin{equation}
    k_{min}\leqslant max(B)-min(A),
\end{equation}
which is not acceptable. Fortunately, we can give out a more precise estimation not in the values in $A$ or $B$, but only in the size of sets.

\begin{theorem}
Let $\|S\|$ be size of set $S$,
\begin{equation}
    k_{min}\leqslant \|A\|\|B\|.
\end{equation}
\end{theorem}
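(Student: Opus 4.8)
The plan is to bound $k_{\min}$ by a pigeonhole argument on the structure of the set $A \cap (B-k)$ as $k$ ranges over the nonnegative integers. First I would observe that the ``bad'' values of $k$, i.e. those for which $A \cap (B-k) \neq \Phi$, are exactly the values $k$ such that $k = a - b$ for some $a \in A$, $b \in B$: indeed $x \in A \cap (B-k)$ means $x \in A$ and $x + k \in B$, so $k = (x+k) - x$ with $x+k \in B$ and $x \in A$. Hence the set $K_{\mathrm{bad}} = \{\, k \in \mathbb{Z} : A \cap (B-k) \neq \Phi \,\}$ satisfies $K_{\mathrm{bad}} \subseteq \{\, a - b : a \in A,\ b \in B \,\}$, a set of size at most $\|A\|\|B\|$.

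Next I would argue that $k_{\min}$, being the least \emph{nonnegative} integer not in $K_{\mathrm{bad}}$, cannot exceed $|K_{\mathrm{bad}}|$. This is the standard fact that the smallest nonnegative integer missing from a set $S$ of integers is at most $|S \cap \mathbb{Z}_{\geq 0}| \leq |S|$: if $0, 1, \dots, m-1$ were all in $S$, then $S$ would contain at least $m$ nonnegative elements, so the first gap appears by index $|S|$ at the latest. Combining, $k_{\min} \leq |K_{\mathrm{bad}}| \leq \|A\|\|B\|$, which is the claimed bound. I would also note the degenerate cases: if $A$ or $B$ is empty the intersection is empty for every $k$, so $k_{\min} = 0 \leq \|A\|\|B\|$ trivially holds.

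The only subtlety — and the step I'd expect to need the most care in writing up — is making sure the indexing convention from the preceding theorem matches this counting argument: the preceding theorem defines $B - k = \{x - k : x \in B\}$ and asks for the minimal \emph{nonnegative} $k$ with $A \cap (B-k) = \Phi$, so I must confirm that $k = 0$ is allowed as a candidate (it is, corresponding to ``no retry''), and that the death-countdown reasoning in the algorithm indeed only needs this one-sided (nonnegative $k$, i.e. delaying) bound rather than a two-sided one. Once that alignment is pinned down, the proof is the two short observations above; no estimate involving the actual magnitudes $\max(B)$ or $\min(A)$ is needed, which is precisely the improvement over the naive bound $k_{\min} \leq \max(B) - \min(A)$.
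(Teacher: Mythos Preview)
Your proposal is correct and matches the paper's proof almost exactly: the paper defines $D=\{\,b-a: a\in A,\ b\in B,\ b-a\geq 0\,\}$, notes $k\notin D \iff A\cap(B-k)=\Phi$, and concludes $k_{\min}\leq \|D\|\leq \|A\|\|B\|$ by pigeonhole, which is precisely your argument. One small slip: you write ``$k=a-b$ with $a\in A$, $b\in B$'' but your own derivation gives $k=(x+k)-x$ with $x+k\in B$ and $x\in A$, i.e.\ $k=b-a$; this does not affect the cardinality bound, but you should fix the sign to match your computation.
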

\begin{proof}
    Consider the set
\begin{equation}
    D=\left\{b-a | a\in A, b\in B, (b-a)\geqslant 0\right\}.
\end{equation}
thus $k\not\in D$ if and only if $A\cap(B-k)=\Phi$.
Thus $k_{min}$ is the first natural number not appearing in $D$. However, $\|D\|\leqslant\|A\|\|B\|$ according to its definition, so $k\leqslant \|A\|\|B\|$.
\end{proof}

\begin{corollary}
    Inserting $m$ instructions at one time (e.g. merging to scheduled blocks) into a schedule with $n$ instructions requires at most $O(mnII)$ retries. If each retry takes $O(mn)$ queries to find a conflict, the total complexity is at most $O(m^2n^2II)$.
\end{corollary}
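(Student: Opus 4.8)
The plan is to combine the two theorems just proved in this appendix with a short accounting that translates the abstract resource sets $A$ and $B$ into instruction counts. First I would pin down the correspondence. The schedule already contains $n$ instructions, each either a single-qubit gate or a $CZ$, hence carrying at most two qubit references; after the array-splitting reduction every reference collapses to a single integer $b' = b/k$, so the universal set $A$ of scheduled resources has $\|A\| = O(n)$. The $m$ instructions inserted together likewise contribute $\|B\| = O(m)$ integers. Plugging into the size bound $k_{min}\leqslant \|A\|\|B\|$ gives $k_{min} = O(mn)$, and the preceding theorem states that our algorithm performs at most $k_{min}II$ retries, so the number of retries is $O(mnII)$.

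Next I would bound the work done in a single retry. A retry fixes one candidate tick and must decide whether any of the $O(m)$ incoming operands aliases any of the $O(n)$ resources against which it is checked; by Definition~\ref{def-ilqa} and the discussion immediately following it — and, for the false-versus-true-conflict distinction, by the $O(1)$ solution of the relevant Diophantine equation in Appendix~\ref{appendix:diophantine-noilp} — each such pairwise aliasing test costs $O(1)$. Hence one retry costs $O(mn)$, and multiplying by the retry bound yields the stated total $O(mnII)\cdot O(mn) = O(m^2 n^2 II)$.

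The point that needs the most care — and the one I expect to be the main obstacle — is ensuring the \emph{correct} upper-bound model is invoked. The retrying loop only runs beyond the classical $II$ attempts when the obstructions are \emph{false} conflicts, and it is precisely for that regime that the strengthened counting is sound: checking against the universal set rather than the per-time-slot set, and deliberately ignoring the array index and the slice index $b \bmod k$, only inflates the number of declared conflicts, so the bound it produces is conservative. When a \emph{true} conflict is present, the death-countdown caps the extra attempts at $II$, which is dominated by $O(mnII)$. I would therefore state the corollary with the understanding that $k_{min}$ denotes the worst case over insertions that proceed entirely through false conflicts, remarking that the death-countdown case can only make the bound smaller; with that caveat in place, the corollary follows immediately from the two theorems together with the $O(1)$ cost of aliasing resolution.
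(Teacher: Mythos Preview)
Your proposal is correct and follows precisely the route the paper intends: the paper states this corollary without proof, treating it as an immediate consequence of the two preceding theorems once one identifies $\|A\|=O(n)$ and $\|B\|=O(m)$ (each instruction contributing at most two operand integers), and you have filled in exactly those details together with the $O(mn)$ per-retry cost. Your extra paragraph on the false/true-conflict distinction and the death countdown is more care than the paper takes, but it is sound and only sharpens the argument.
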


According to the theorem, we can get some several important results on the complexity:

\begin{corollary}
\begin{enumerate}
    \item Inserting one instruction into the modulo scheduling table sized $b$ requires $O(bII)$ retries and $O(b^2II)$ time. Thus inserting all $b$ instructions require $O(b^3II)$ time. 
    \item The span of the modulo scheduling table above is bounded by $O(b^2II)$.
    \item Suppose the loop kernel sized $n$ is split into $a\geqslant 2$ strong connected components sized $b$, the total complexity for scheduling all SCCs is $aO(b^3II)=O(ab^3II)=O(n^4)$, and the total time required to merge all SCCs together is
    \begin{equation}
        \sum_{i=1}^{a-1}O(b^2(ib)^2II)=O(a^3b^4II)=O(n^5).
    \end{equation}
    \item  The span of the total schedule is
    \begin{equation}
        aO(b^2II)+\sum_{i=1}^{a-1}b(ib)II=O(ab^2II+a^2b^2II)=O(n^2II).
    \end{equation}
    Thus we expect the length of prologue and epilogue to be
    \begin{equation}
        \sum_{i=1}^{O(n^2)}i\cdot II = O(n^3).
    \end{equation}
\end{enumerate}
\end{corollary}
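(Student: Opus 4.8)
The plan is to derive all four parts mechanically from the insertion bound established immediately above---that merging a block of $m$ instructions into a schedule already holding $n$ instructions costs $O(mnII)$ retries and hence $O(m^2n^2II)$ time---combined with the standing convention that $II$ never exceeds the total instruction count, so that $II=O(n)$ in every asymptotic collapse. The whole argument is bookkeeping: for each part I would first pin down precisely which quantities play the role of $m$ and $n$, since that is where the nontrivial content lives.

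For part (1) I take $m=1$ and $n=b$, so the retry bound gives $O(bII)$ retries for a single insertion into a table of size $b$; as each retry only rescans the at most $O(b)$ operands sharing the candidate tick, one insertion costs $O(b^2II)$, and performing $b$ such insertions while the table stays of size $\le b$ accumulates to $O(b^3II)$. Part (2) reuses the same $O(bII)$ retry count through the observation that each failed retry advances the candidate tick by exactly one, so a single insertion can lengthen the occupied span by at most its retry count; summing over the $b$ insertions bounds the table span by $O(b^2II)$.

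For part (3) the intra-SCC work is $a$ applications of part (1), i.e.\ $a\cdot O(b^3II)=O(ab^3II)$, which collapses to $O(n^4)$ via $ab=n$, $b\le n$ and $II\le n$. The merging phase is where the growing schedule enters: treating the $i$-th component as a block of $m=b$ instructions merged into the partial schedule of $n=ib$ instructions, the insertion bound gives $O(b^2(ib)^2II)$ per merge, and summing over $i=1,\dots,a-1$ with $\sum i^2=O(a^3)$ yields $O(a^3b^4II)=O(n^5)$. Part (4) combines the per-component span $a\cdot O(b^2II)$ from part (2) with the span added at the $i$-th merge, which by the retry-to-tick correspondence is at most $O(b\cdot ib\cdot II)$; the sum gives $O(a^2b^2II)=O(n^2II)$. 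Reading the number of pipeline stages as this span divided by $II$, namely $O(n^2)$, the prologue and epilogue are ramps of incomplete kernels in which each of the $n$ kernel instructions appears in all but a prefix of the stages, so their total instruction count is bounded by (stages)$\times n=O(n^3)$.

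The main obstacle I anticipate is not any individual estimate but the discipline of instantiating $m$ and $n$ correctly at each stage of the flow---most delicately in the merge phase, where viewing an entire SCC as one block of $m=b$ instructions against an accumulated schedule of size $ib$ is exactly what produces the $\sum i^2$ and hence the jump from $O(n^4)$ to $O(n^5)$. The companion subtlety is justifying the retry-to-span translation, that one retry costs one tick of span, since it is this single identity that turns the retry counts of parts (1) and (3) into the span bounds of parts (2) and (4), and then into the final $O(n^3)$ estimate for prologue and epilogue; I would expect the careful statement of that identity, together with confirming that the $O(n^2)$ stage count is the right quantity to multiply by $n$, to be the step needing the most attention.
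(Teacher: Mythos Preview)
Your proposal is correct and is essentially the derivation the paper has in mind: the corollary is stated in the paper without proof, as a direct instantiation of the preceding insertion bound $O(mnII)$ retries / $O(m^2n^2II)$ time, and your bookkeeping of which quantities play $m$ and $n$ at each phase (single instruction into a size-$b$ table; a size-$b$ SCC into a partial schedule of size $ib$) is exactly right, as is the retry-equals-tick translation for the span estimates.

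One small remark on part (4): your closing argument for the prologue/epilogue size, namely (number of stages) $\times$ (kernel size) $= O(n^2)\cdot n = O(n^3)$, is cleaner than the displayed sum $\sum_{i=1}^{O(n^2)} i\cdot II$ in the statement, which taken literally evaluates to $O(n^4\,II)$ rather than $O(n^3)$. The intended reading is presumably the tick length of the ramp, i.e.\ (stages) $\times II = O(n^2\,II)$, which collapses to $O(n^3)$ via $II=O(n)$; your version reaches the same bound by counting instructions directly and avoids the ambiguity.
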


\section{CNOT conjugation rules}
\label{appendix:cnot_conjugation}
These results are taken directly from \cite{ying2007algebra}.

\begin{theorem}
    ($CNOT$ conjugation) $CNOT$ conjugates single qubit gates if and only if the conjugation satisfies one of the following eight cases:
    \begin{enumerate}
        \item %1
    \begin{equation}
        \begin{qprognamed}{cnot_conjugation_1}
        qubit a
        qubit b
        def	ua,0,'R_Z(\alpha)'
        cnot a,b
        ua a
        cnot a,b
        \end{qprognamed}
        =
        \begin{qprognamed}{cnot_conjugation_2}
        qubit a
        qubit b
        def	uc,0,'R_Z(\alpha)'
        uc a
        \end{qprognamed}
    \end{equation}
        \item %2
    \begin{equation}
        \begin{qprognamed}{cnot_conjugation_3}
        qubit a
        qubit b
        def	ua,0,'R_Z^+(\alpha)'
        cnot a,b
        ua a
        cnot a,b
        \end{qprognamed}
        =
        \begin{qprognamed}{cnot_conjugation_4}
        qubit a
        qubit b
        def	uc,0,'R_Z^+(\alpha)'
        def ud,0,'X'
        uc a
        ud b
        \end{qprognamed}
    \end{equation}
        \item %3
    \begin{equation}
        \begin{qprognamed}{cnot_conjugation_5}
        qubit a
        qubit b
        def	ua,0,'R_X(\alpha)'
        cnot a,b
        ua b
        cnot a,b
        \end{qprognamed}
        =
        \begin{qprognamed}{cnot_conjugation_6}
        qubit a
        qubit b
        def	uc,0,'R_X(\alpha)'
        nop a
        uc b
        \end{qprognamed}
    \end{equation}
        \item %4
    \begin{equation}
        \begin{qprognamed}{cnot_conjugation_7}
        qubit a
        qubit b
        def	ua,0,'R_X^-(\alpha)'
        cnot a,b
        ua a
        cnot a,b
        \end{qprognamed}
        =
        \begin{qprognamed}{cnot_conjugation_8}
        qubit a
        qubit b
        def	uc,0,'Z'
        def ud,0,'R_X^-(\alpha)'
        uc a
        ud b
        \end{qprognamed}
    \end{equation}
        \item %5
    \begin{equation}
        \begin{qprognamed}{cnot_conjugation_9}
        qubit a
        qubit b
        def	ua,0,'H(\alpha)'
        def ub,0,'H(\beta)^\dag'
        cnot b,a
        ua a
        ub b
        cnot a,b
        \end{qprognamed}
        =
        \begin{qprognamed}{cnot_conjugation_10}
        qubit a
        qubit b
        def	uc,0,'H(\alpha)'
        def ud,0,'H(\beta)^\dag'
        uc a
        ud b
        \end{qprognamed}
    \end{equation}
        \item %6
    \begin{equation}
        \begin{qprognamed}{cnot_conjugation_11}
        qubit a
        qubit b
        def	ua,0,'H^-(\alpha)'
        def ub,0,'H(\beta)^\dag'
        cnot b,a
        ua a
        ub b
        cnot a,b
        \end{qprognamed}
        =
        \begin{qprognamed}{cnot_conjugation_12}
        qubit a
        qubit b
        def	uc,0,'H^-(\alpha)'
        def ud,0,'H(\beta+\pi)^\dag'
        uc a
        ud b
        \end{qprognamed}
    \end{equation}
        \item %7
    \begin{equation}
        \begin{qprognamed}{cnot_conjugation_13}
        qubit a
        qubit b
        def	ua,0,'H(\alpha)'
        def ub,0,'H^-(\beta)^\dag'
        cnot b,a
        ua a
        ub b
        cnot a,b
        \end{qprognamed}
        =
        \begin{qprognamed}{cnot_conjugation_14}
        qubit a
        qubit b
        def	uc,0,'H(\alpha+\pi)'
        def ud,0,'H^-(\beta)^\dag'
        uc a
        ud b
        \end{qprognamed}
    \end{equation}
        \item %8
    \begin{equation}
        \begin{qprognamed}{cnot_conjugation_15}
        qubit a
        qubit b
        def	ua,0,'H^-(\alpha)'
        def ub,0,'H^-(\beta)^\dag'
        cnot b,a
        ua a
        ub b
        cnot a,b
        \end{qprognamed}
        =
        \begin{qprognamed}{cnot_conjugation_16}
        qubit a
        qubit b
        def	uc,0,'H^-(\alpha+\pi)'
        def ud,0,'H^-(\beta+\pi)^\dag'
        uc a
        ud b
        \end{qprognamed}
    \end{equation}
    \end{enumerate}
    \end{theorem}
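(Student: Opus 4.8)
My plan is to reduce every $CNOT$ conjugation to a $CZ$ conjugation and then invoke Theorems~\ref{rule-1} and~\ref{th:generalized_conjugation}, rather than redo an entanglement argument from scratch. The two facts that drive the reduction are the standard basis changes $CNOT_{a\to b}=(I\otimes H)\,CZ\,(I\otimes H)$ and $CNOT_{b\to a}=(H\otimes I)\,CZ\,(H\otimes I)$, together with the identities recorded in Appendix~\ref{appendix:basic_quantum_gates}: $HR_Z(\theta)H=R_X(\theta)$, $HR_Z^{+}(\theta)H=R_X^{-}(\theta)$, $H(\alpha)=R_Z(\alpha)H$, $H^{-}(\alpha)=R_Z(\alpha)HZ$, and $HZH=X$.

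For the four ``same-direction'' identities (those with $CNOT_{a\to b}$ on both sides) I would conjugate by $(I\otimes H)$ to get $CNOT_{a\to b}(U_A\otimes U_B)CNOT_{a\to b}=(I\otimes H)\,CZ\,(U_A\otimes HU_BH)\,CZ\,(I\otimes H)$, so the left-hand side is a product $V_A\otimes V_B$ \emph{if and only if} $CZ(U_A\otimes HU_BH)CZ=V_A\otimes HV_BH$ is a product. By the necessity half of Theorem~\ref{rule-1} this forces $U_A$ and $HU_BH$ to be diagonal or anti-diagonal, i.e.\ $U_A\in\{R_Z(\cdot),R_Z^{+}(\cdot)\}$ and $U_B\in\{R_X(\cdot),R_X^{-}(\cdot)\}$ up to global phase; running the sufficiency half together with the byproduct rule of Theorem~\ref{th:generalized_conjugation} (an anti-diagonal commuted through the central $CZ$ leaves a $Z$ on the partner wire, which the outer Hadamard turns into an $X$) reproduces exactly the four families $U_A=R_Z$, $U_A=R_Z^{+}$, $U_B=R_X$, $U_B=R_X^{-}$ --- the first four identities of the theorem.

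For the four ``opposite-direction'' identities I would conjugate by $(I\otimes H)$ on the left and $(H\otimes I)$ on the right simultaneously: the product $CNOT_{a\to b}(U_A\otimes U_B)CNOT_{b\to a}$ equals $(I\otimes H)\,CZ\,M\,CZ\,(H\otimes I)$ with middle factor $M=(U_AH)\otimes(HU_B)$. Writing the $H(\cdot)$ and $H^{-}(\cdot)$ gates as $R_Z(\cdot)H$ and $R_Z(\cdot)HZ$ collapses $M$ into a tensor product of diagonal and anti-diagonal gates, so Theorem~\ref{rule-1} applies; translating back through the outer Hadamards, the four sign-combinations (diagonal or anti-diagonal on each side) yield the remaining four identities, with the $+\pi$ phase shifts in the statement tracing to the extra $Z$ produced whenever an anti-diagonal factor occurs. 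The ``only if'' direction of the whole theorem is then immediate: each reduction step is an equivalence, and Theorem~\ref{rule-1} --- proved via the separability criterion of Lemma~\ref{lem:entangled-state-decide} --- already enumerates \emph{every} single-qubit gate a $CZ$ can conjugate, so no further family (a generic $R_Y$, a non-axis rotation, an arbitrary $U_3$) can be conjugated by $CNOT$ either.

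The part I expect to be delicate is the byproduct bookkeeping: tracking where the spurious $Z$ gates land when an anti-diagonal is pushed through the central $CZ$, and how the outer Hadamard conjugations turn those $Z$'s into $X$'s or absorb them into the $H(\alpha)$-type gates as phase shifts, so that precisely the eight stated identities --- and no spurious extras --- drop out. A secondary subtlety, specific to the opposite-direction cases, is that $CNOT_{a\to b}$ and $CNOT_{b\to a}$ are diagonalized by Hadamards on \emph{different} wires, so one must check that the collapsed middle factor $M$ really lands in the diagonal/anti-diagonal subgroup exactly for the named gate families and not for a larger set.
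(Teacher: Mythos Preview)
The paper does not actually prove this theorem: Appendix~\ref{appendix:cnot_conjugation} opens with ``These results are taken directly from \cite{ying2007algebra}'' and simply lists the eight identities, and the only argument the paper offers is the one-line remark that ``$CNOT$ conjugation rules and $CZ$ conjugation rules are equivalent to each other, by converting $CNOT$ to $CZ$ and vice versa.'' Your proposal is precisely a fleshed-out version of that remark---sandwiching each $CNOT$ by Hadamards to land on a $CZ$ conjugation and then invoking Theorems~\ref{rule-1} and~\ref{th:generalized_conjugation}---so you are taking essentially the same route the paper gestures at, just with the bookkeeping (the $Z$ byproducts becoming $X$'s under the outer Hadamards, and the $+\pi$ shifts in the $H^{\pm}(\cdot)$ cases) made explicit.
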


It is easy to check that $CNOT$ conjugation rules and $CZ$ conjugation rules are equivalent to each other, by converting $CNOT$ to $CZ$ and vice versa.

\section{Parallel QAOA Decomposition}
\label{appendix:par_qaoa}
QAOA is one of the fashionable algorithms in NISQ era. We will use the QAOA program for solving MaxCut problems as our optimization test cases.

However, we face the problem of lacking commutativity when optimizing $QAOA$ programs: our device can't execute $U(B,\beta_i)$ operation directly and it has to be decomposed into basic gates according to Equation \ref{equ:qaoa_impl}, and the block-commutativity optimization chances by commutativity between $U(B, \beta_i)$ matrices are missed.

There have been different ways to optimize QAOA circuits with $U(B, \beta_i)$ commutable with each other in mind. For example, \cite{shi2019optimized} detects all two-qubit diagonal structures in the circuit and aggregate them, so that commutativity detection can be performed on aggregated blocks. Another layout synthesis algorithm (scheduling considering device layout) QAOA-OLSQ\cite{tan2020optimal} schedules QAOA circuits twice, the first time on a large granularity (named TB-OLSQ) and the second time on a small granularity (named OLSQ). The large-granularity pass allows block commutativity to be considered and gates are placed in blocks. The small-granularity pass finishes the scheduling. 

However, these two approaches both require the optimization algorithm to perform coarse-grain block-level scheduling in addition to fine-grain gate-level scheduling. We may want to find another way to give commutativity hints to a gate-scheduling algorithm without modifying the algorithm itself.

Equation \ref{equ:qaoa_impl} inspires us with the fact that the shape of decomposed form of $U(B, \beta_i)$ is a bit like $CNOT$ gate: it has a ``controller'' qubit and a ``controlled'' qubit; multiple blocks with the same ``controller'' qubit can be commuted and interleaved freely at gate level, and can be finished in 2 ticks on average instead of 3, as in Figure \ref{fig:qaoa_nearly_simultaneously}.

\begin{figure}[H]
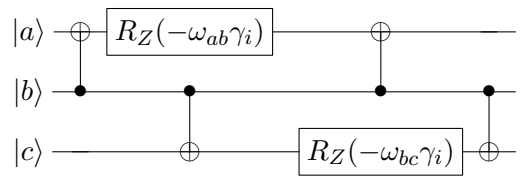

    \begin{qprognamed}{qaoa_nearly_simultaneously}
        qubit a
        qubit b
        qubit c
        def	ua,0,'R_Z(-\omega_{ab}\gamma_i)'
        def	uc,0,'R_Z(-\omega_{bc}\gamma_i)'
        cnot b,a
        cnot b,c
        ua a
        uc c
        cnot b,a
        cnot b,c
    \end{qprognamed}
    \caption{\label{fig:qaoa_nearly_simultaneously}The two blocks can be executed interleavingly.}
\end{figure}

The level of ``blocks'' according to the discovery above can be derived by directing and coloring all edges in the undirected graph $G=\langle V,E\rangle$:

\begin{itemize}
    \item First, we assign every edge with the direction in which we would perform the \ref{equ:qaoa_impl} decomposition (i.e. assign the graph with an orientation). Suppose the direction points from the controller qubit to the controlled qubit.
    \item Then, we colour all edges with minimal number of colours under the following constraints:
    \begin{enumerate}
        \item All in-degree edges of a vertex should be coloured differently from each other.
        \item Out-degree edges of a vertex should be coloured differently from all in-degree edges of the vertex.
    \end{enumerate}
\end{itemize}

The minimal number of required colors over all possible orientations is the minimal number of layers we can put these gates into.

Note that finding the minimal edge colouring under the constraints can be reduced to the problem of finding minimal vertex colouring of a new graph. In the new graph, vertices represent original edges; vertices for out-degree edges are fully connected; vertices for in-degree edges are connected with those for out-degree edges. 
Figure \ref{fig:qaoa_layer} is an example of assigning directions and colours for edges in the graph, and the equivalent vertex-colouring problem to the edge-colouring one.

\begin{figure}
    \begin{subfigure}{0.2\textwidth}
        \centering\includegraphics[width=\textwidth]{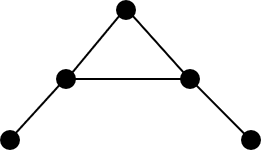}
        \caption{Graph for QAOA.}
    \end{subfigure}
    \hspace{1cm}
    \begin{subfigure}{0.2\textwidth}
        \centering\includegraphics[width=\textwidth]{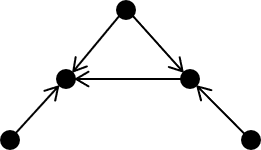}
        \caption{One orientation for the graph.}
    \end{subfigure}
    \begin{subfigure}{0.2\textwidth}
        \centering\includegraphics[width=\textwidth]{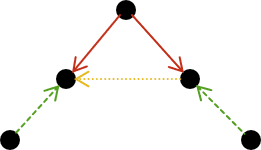}
        \caption{One coloring satisfying the constraints.}
    \end{subfigure}
    \hspace{1cm}
    \begin{subfigure}{0.2\textwidth}
        \centering\includegraphics[width=\textwidth]{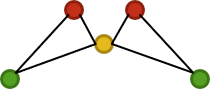}
        \caption{The equialent vertex-coloring problem.}
    \end{subfigure}
    \caption{\label{fig:qaoa_layer}Example for one possible orientation and layering of a graph.}
\end{figure}

One direct way to compute the block placement strategy is to use an SMT solver, for example, $QAOA-Par$ test cases in our evaluation are generated using Z3 Solver\cite{10.1007/978-3-540-78800-3_24}. We leave it as an open problem whether there is an efficient approach.

\section{Complexity Analysis}\label{complexity}

In this section we give a rough estimation of complexity of the
scheduling algorithm above. We put the main complexity results in table \ref{tab:complexity}, with some notes below to explain.

\begin{table}[]

\centering\begin{tabular}{|
>{\columncolor[HTML]{EFEFEF}}l |l|l|}
\hline
Step                & \cellcolor[HTML]{EFEFEF}Time & \cellcolor[HTML]{EFEFEF}Code Size \\ \hline
Compaction          & $O(n^2)$                     & $O(n)$                            \\ \hline
Unrolling           & $O(n^2+C^2n)$                & $C$ loops sized $O(Cn)$          \\ \hline
\multicolumn{3}{|c|}{\cellcolor[HTML]{EFEFEF}For each loop sized $O(m)$}               \\ \hline
Rotation            & $O(m^3)$                     & $O(m^3)$                      \\ \hline
Try $II$            & $O(logm)$                    & -                                 \\ \hline
Tarjan               & $O(m^2)$                     & -                                 \\ \hline
Floyd               & $O(m^3)$                     & -                                 \\ \hline
Scheduling          & $O(m^5)$                     & Span=$O(m^3)$           \\ \hline
Add $Z$             & $O(m^4)$                     & -                                 \\ \hline
Codegen             & $O(m^6)$                     & $O(m^3)$                          \\ \hline
Total               & $O(m^6)logm$                 & $O(m^3)$                          \\ \hline
\multicolumn{3}{|c|}{\cellcolor[HTML]{EFEFEF}In Total}                                 \\ \hline
Overall & $O(C^6n^6(logCn))$       & $O(C^4n^3)$                       \\ \hline
\end{tabular}

\caption{\label{tab:complexity}Complexity of our software pipelining approach.}
\end{table}

\subsection{Complexity of loop compaction}
Complexity for compacting a piece of loop program sized $O(n)$ once is $O(n^2)$, since when adding every instruction we check it against all instructions that are previously added.

\subsection{Complexity of loop unrolling}
Finding merging or cancelling candidates requires $O(n^2)$ time. Suppose the loop range is unknown, we have to perform the following steps on $C$ loops sized $m=O(Cn)$.

\subsection{Complexity of loop rotation}
A loop sized $O(n)$ can be rotated for at most $O(n^2)$ times, since loop rotation will not introduce new ``qubit'' into the loop, and the $O(n)$ qubits can be placed in an partial order: $q_a\prec q_b$ if a single qubit gate on $q_a$ will be on $q_b$ after rotation.

This will create a prologue sized $O(n^2)$, an epilogue sized $O(n^3)$ and a new loop sized $O(n)$. Each rotation requires $O(n^2)$ time (to find a rotatable gate) so the total complexity is $O(n^4)$.

\subsection{Complexity of modulo scheduling}
We need $O(logm)$ retries to binary-search the minimal $II$.
Complexity of Tarjan algorithm on a dense graph is $O(m^2)$, and complexity of Floyd algorithm is $O(m^3)$.

We leave the proof of complexity from retrying due to resource conflict in Appendix \ref{appendix:resource-scheduling-complexity}.

\subsection{Inversion pair detection}
The complexity for detecting in-loop inversion pair if $O(m^2)$.
The complexity for detecting across-loop inversion depends on the span of the total schedule. Note that according to Definition \ref{def:inversion}:
\begin{equation}
    r\leq (p_2-p_1) + \frac{c_1-c_1}{L},
\end{equation}
where $p_1,p_2=O(m^2)$. Thus
\begin{equation}
    r=O(m^2).
\end{equation}

The total complexity of checking $O(m^2)$ pairs of instructions across $r$ iterations is $O(m^4)$.

\subsection{Code generation}
The complexity for code generation is just the length of prologue and epilogue, $O(m^3)$. The compaction is of quadratic complexity so the total complexity is $O(m^6)$. However, for cases where the 
loop range is known, using a hash set to store the last \textbf{operation} on each qubit can reduce the complexity to $O(m^3)$.

\begin{theorem}
The total time complexity for our algorithm is
\begin{equation}
    O(C^6n^6(logCn)),
\end{equation}
and the size of the generated code is
\begin{equation}
    O(C^4n^3).
\end{equation}
\end{theorem}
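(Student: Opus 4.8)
The plan is to prove the two bounds by summing the costs of the individual stages of the compilation flow of Figure~\ref{fig:compilation-flow}, which are tabulated step by step in Table~\ref{tab:complexity}, while carefully tracking how the size of the object being processed grows as it moves through the pipeline. Almost everything is bookkeeping; the only genuinely non-routine ingredient is the bound on the number of resource-conflict retries inside modulo scheduling, which is isolated in Appendix~\ref{appendix:resource-scheduling-complexity}.

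First I would dispatch the stages that operate on the original loop, whose body has $n$ instructions. Loop-body compaction costs $O(n^2)$, since placing each of the $O(n)$ instructions into the compacted kernel compares it against all previously placed ones, and by Theorem~\ref{thm:compaction} exactly three passes suffice; its output still has $O(n)$ instructions. Loop unrolling then costs $O(n^2)$ to enumerate all merge/cancel candidate pairs plus $O(C^2 n)$ to rewrite the $O(n)$ offsets of the $C$-times unrolled body, and—when the loop range is unknown—produces at most $C$ residue copies, one per value of $q=m\bmod C$, each a loop whose body has $m=O(Cn)$ instructions. From here on every remaining stage is run on each such loop of size $m$.

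Next I would bound each per-loop stage in terms of $m$. Loop rotation performs $O(m^2)$ rotations—each strictly advances a qubit in the partial order $q_a\prec q_b$ induced by ``a single-qubit gate migrates from $q_a$ to $q_b$'', so no qubit is advanced past the $O(m)$ others more than once—each costing $O(m^2)$ to locate a movable instruction; the binary search for $II$ contributes an $O(\log m)$ factor, with validity checks by Tarjan in $O(m^2)$ and Floyd in $O(m^3)$, and Theorem~\ref{th:remove-multiple-edge} lets us keep only one edge per pair so these graphs stay of size $O(m^2)$. Inversion-pair correction examines $O(m^2)$ pairs across $r=O(m^2)$ iterations, i.e.\ $O(m^4)$. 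Code generation is the dominant term: the prologue and epilogue have total length $O(m^3)$ (obtained by summing the incomplete-kernel lengths over the $O(m^2)$-long scheduling span, as in the code-generation analysis above), and compacting a segment of this length costs quadratically, i.e.\ $O(m^6)$. The resource-conflict retries buried inside modulo scheduling are controlled by Appendix~\ref{appendix:resource-scheduling-complexity}: after reducing each array index to a single integer per slice, the worst-case number of retries needed to clear a conflict between the already-scheduled set $A$ and the arriving set $B$ is $k_{\min}\cdot II$ with $k_{\min}\le\|A\|\,\|B\|$, which keeps the scheduling stage within $O(m^5)$. Collecting these, the per-loop time is dominated by code generation together with the $II$ search, giving $O(m^6\log m)$, and the per-loop output size is dominated by the prologue/epilogue, $O(m^3)$.

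Finally I would substitute $m=O(Cn)$ and aggregate over the (at most $C$) residue copies: the total generated code has size $C\cdot O(m^3)=O(C^4 n^3)$, and aggregating the per-loop cost $O(m^6\log m)$ and substituting $m=O(Cn)$ gives total time $O(C^6 n^6\log(Cn))$, which are exactly the claimed bounds. The step I expect to be the main obstacle is precisely the resource-conflict retry bound: unlike the classical modulo scheduler, which needs at most $II$ retries, the quantum version can retry far more often because nonzero array slopes let ``false conflicts'' be pushed into distinct iterations, so one must argue—as in Appendix~\ref{appendix:resource-scheduling-complexity}—that the displacement set $D=\{\,b-a : a\in A,\ b\in B,\ b\ge a\,\}$ has size at most $\|A\|\,\|B\|$ and hence the first natural number it misses, i.e.\ the number of $II$-blocks of retries needed, is at most $\|A\|\,\|B\|$. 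Without this finite bound the scheduling and code-generation costs, and therefore the whole theorem, would not even be polynomial.
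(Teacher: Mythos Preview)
Your proposal is correct and follows essentially the same approach as the paper: you trace the per-stage costs exactly as in Table~\ref{tab:complexity} and the accompanying appendix, correctly isolate the $k_{\min}\le\|A\|\,\|B\|$ retry bound from Appendix~\ref{appendix:resource-scheduling-complexity} as the single non-routine ingredient, and then substitute $m=O(Cn)$ and multiply by the $C$ residue copies. The only minor wording slip is calling the scheduling span ``$O(m^2)$-long'' when it is $O(m^2\cdot II)$ ticks comprising $O(m^2)$ partial kernels; this does not affect your $O(m^3)$ prologue/epilogue size or any downstream bound.
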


\section{Adapting to existing architectures}
\label{sec:adapt_to_arch}
Note that we are building our approach of optimization based on a specific quantum circuit model as specified in Section \ref{subsec:arch}. Recall some of the features of the model that we use:
\begin{itemize}
    \item Classical computation and loop guards can be carried out instantly.
    \item The hardware can execute arbitrary single qubit operations and $CZ$ gates between arbitrary qubit pairs. All instructions can finish in one cycle.
    \item Instructions on totally different qubits can be carried out at the same time.
\end{itemize}

\subsection{Powerful classical control}

A quantum processor is usually split into classical part and quantum part, and all the classical logics (i.e. branch statements) are run on the classical part.

To implement fast classical guard for $for$-loops, we can use several classical architecture mechanisms, such as superscalar, classical branch prediction and speculative execution. As long as classical part commits instructions faster than quantum part executing instructions, we may keep the quantum part fully-loaded without introducing unnecessary bubbles.

If we want classical operations that affect the control flow of quantum part (e.g. classical branch statements), one way would be converting them to their quantum version. One practical example would be measurements with feedback: if we want to use the measurement outcome to control the following operations, we can just use a qubit array to replace classical memory, use $CNOT$ gate to replace measurement, and use controlled gate to replace classical control. The classical trick of register renaming can be adopted when converting measurement to quantum gates: different iterations can ``measure to'' different qubits to prevent unnecessary name dependency.

Also on real quantum processors the full-parallelism is not likely to be achieved, for example, there may be a limit of instruction issuing width on the device. For this case, we can just limit the maximal issuing width in resource conflict checking.

\subsection{CNOT-based instruction set}
One major difference between our assumptions and the real-world architectures is that most existing models and architectures adopt a $CNOT$-based instruction set, instead of a $CZ$-based one. We provide two possible approaches for extending our method to the $CNOT$-architecture case.

One approach is to convert the original circuit to $CZ$-version directly, using the equation $X[b]CZ[a,b]X[b]=CNOT[b]$. After optimization, an additional step is required to convert each $CZ$ gate into $CNOT$ gates by adding Hadamard gates. Note that the way of adding Hadamard gates can affect the depth of the kernel.
\begin{example}
Adding Hadamard gates on the same qubit of two adjacent $CZ$ gates saves gate depth by 1, compared to the version adding Hadamard gates on different qubits of the two $CZ$ gates.
\begin{equation}
    \begin{aligned}
    \begin{qprognamed}{cz2cnot_1}
    qubit a
    qubit b
    qubit c
    ZZ a,c
    ZZ b,c
    \end{qprognamed}
    =&
    \begin{qprognamed}{cz2cnot_2}
    qubit a
    qubit b
    qubit c
    H a
    cnot b,a
    cnot c,a
    H a
    \end{qprognamed}
    \\=&
    \begin{qprognamed}{cz2cnot_3}
    qubit a
    qubit b
    qubit c
    H a
    cnot b,a
    H a
    H c
    cnot a,c
    H c
    \end{qprognamed}
\end{aligned}
\end{equation}
\end{example}

{
However, deciding all directions of $CNOT$ gates can be a hard problem. We can formulate the problem as an ILP problem. A rough description is as follows:
\begin{itemize}
    \item Each $CZ$ is given a boolean variable, indicating the direction of $CNOT$ (and where to add Hadamard gates).
    \item If one $CZ$ is adjacent to a single qubit gate, the $H$ can be absorbed.
    \item If one $CZ$ is adjacent to another $CZ$ and if they add Hadamard on the same qubit, the two Hadamard can be cancelled and no depth is added.
    \item Otherwise the depth is added by $1$ from Hadamard. If there is an aliasing, the depth need to be added by more than $1$ so that $H$ gates on qubits with aliasing will be placed at two different ticks.
    \item The objective is to minimize the depth on all qubits.
\end{itemize}

We leave the best conversion from $CZ$ program into $CNOT$ program with minimal depth as a remaining problem.

Another way to port our approach is to modify our QDG definition to the $CNOT$-based instruction set. But in fact, the most commonly used $CNOT$ commutation rules that are based on intuition are only part of the complete $CNOT$ conjugation rules:

\begin{lemma}
    ($CNOT$ conjugation rules)\cite{ying2007algebra} There are 8 rules in total for $CNOT$ conjugation rules, similar to $CZ$ rules. See Appendix \ref{appendix:cnot_conjugation}.
\end{lemma}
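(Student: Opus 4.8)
\emph{Proof proposal.} The plan is to reduce the $CNOT$ conjugation rules entirely to the $CZ$ conjugation rules already established in Theorems~\ref{rule-1} and~\ref{th:generalized_conjugation}, via the two Clifford identities $CNOT[a,b]=H[b]\,CZ[a,b]\,H[b]$ and $CNOT[b,a]=H[a]H[b]\,CNOT[a,b]\,H[a]H[b]$. Since conjugation by a fixed local unitary (here $H$ on the target, or $H\otimes H$) carries product operators to product operators bijectively, $CNOT[a,b]\,(U_A\otimes U_B)\,CNOT[a,b]$ has the form $V_A\otimes V_B$ \emph{if and only if} $CZ[a,b]\,(U_A\otimes H U_B H)\,CZ[a,b]$ does, and similarly for the orientation-reversed composition. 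Consequently the ``only if'' direction (completeness: there is no rule outside the eight listed) is inherited from the completeness half of Theorem~\ref{rule-1}, and the ``if'' direction from its sufficiency half applied to the Hadamard-conjugated gates.

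First I would dispatch the four same-orientation rules. Hadamard conjugation interchanges the $Z$- and $X$-axes, so $H R_Z(\theta) H = R_X(\theta)$ and $H R_Z^{+}(\theta) H = Z\,R_X(\theta) = R_X^{-}(\theta)$ (using $R_Z^{+}=XR_Z$ and $HXH=Z$). Hence ``$U_B$ diagonal or anti-diagonal'' on the target becomes, after the first identity, ``$U_B$ of $R_X$-type or $R_X^{-}$-type'', while the control gate $U_A$ is untouched and must remain $R_Z$- or $R_Z^{+}$-type; this is precisely the split seen in rules (1)--(4). The companion corrections then propagate mechanically: an $R_Z^{+}$ on the control produces the extra $X$ on the target (from $CNOT[a,b]\,X[a]\,CNOT[a,b]=X[a]X[b]$ together with $CNOT[a,b]\,R_Z(\alpha)[a]\,CNOT[a,b]=R_Z(\alpha)[a]$), and an $R_X^{-}$ on the target produces the extra $Z$ on the control (from $CZ[a,b]\,R_Z^{+}(\theta)[b]\,CZ[a,b]=Z[a]\,R_Z^{+}(\theta)[b]$, pushed back through $H$).

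Next I would handle the four orientation-reversing rules, whose outer gates are a $CNOT[b,a]$ and a $CNOT[a,b]$. Applying $CNOT[b,a]=H[a]H[b]\,CNOT[a,b]\,H[a]H[b]$ (and, where convenient, the decomposition $SWAP=CNOT[a,b]\,CNOT[b,a]\,CNOT[a,b]$) rewrites such a composition in terms of a single-orientation $CNOT$ conjugation, to which the previous paragraph applies. The gates $H(\alpha)=R_Z(\alpha)H$ and $H^{-}(\alpha)=R_Z(\alpha)HZ$ of Appendix~\ref{appendix:basic_quantum_gates} are exactly the pull-backs of $R_Z$- and $R_Z^{+}$-type gates under this extra Hadamard pair, which is why precisely those families appear; the $\alpha\mapsto\alpha+\pi$ shifts in rules (6)--(8) are the book-keeping trace of the extra $Z$ that anti-diagonal conjugation introduces. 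Necessity for these cases again descends from Theorem~\ref{rule-1}, or one may rerun the entanglement-criterion argument of Lemma~\ref{lem:entangled-state-decide} directly on a suitable product input, as in the proof of Theorem~\ref{rule-1}.

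The hard part is organizational rather than analytic: keeping the control/target asymmetry straight, and in the orientation-reversing cases correctly pinning down the $H(\alpha)$ and $H^{-}(\alpha)$ families together with the exact corrections (the stray $X$, $Z$ and the $\pi$ phase shifts), so that the eight families match the circuit identities of Appendix~\ref{appendix:cnot_conjugation} verbatim. There is no new analytic content --- completeness comes from Theorem~\ref{rule-1} and sufficiency is a finite computation --- so the proof ultimately just records that the $CNOT$ and $CZ$ rule sets are interchangeable under $CNOT[a,b]=H[b]\,CZ[a,b]\,H[b]$.
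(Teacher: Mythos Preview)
Your proposal is correct and matches the paper's treatment: the paper does not give a self-contained proof of this lemma but simply cites \cite{ying2007algebra}, lists the eight identities in Appendix~\ref{appendix:cnot_conjugation}, and then remarks in a single sentence that ``it is easy to check that $CNOT$ conjugation rules and $CZ$ conjugation rules are equivalent to each other, by converting $CNOT$ to $CZ$ and vice versa.'' Your argument is precisely a fleshed-out version of that one-line remark, transporting Theorem~\ref{rule-1} through $CNOT[a,b]=H[b]\,CZ[a,b]\,H[b]$, so there is nothing to add.
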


If we want to exploit full power of these rules, we have to consider all these rules while building QDG, instead of considering only the intuitive rules (usually the first 4 rules).

But this time, the rewriting trick in Theorem \ref{th:generalized_conjugation} no longer works for $CNOT$ rules. How to use these rules directly for QDG construction remains an open problem.

}

\subsection{Working with device topology}
One problem about a controlled-Z architecture is that it can be hard to perform long-distance $CZ$ operation. For the $CNOT$ case, a long distance $CNOT$ gate with length $k$ can be implemented using $(4k-4)$ according to \cite{Shende_2006}. However, this is not true for $CZ$ gates, as ``amplitude'' can't propagate through $CZ$ gates.

A direct conversion approach can be taken by converting $CZ$ to $CNOT$ and back forth. Since every $CNOT$ is on critical path and no adjacent controlled bits can be found on critical path, this would require $(8k-8+1)=(8k-7)$ 
gates on critical path. The exception is $k=2$, since the last $Hadamard$ on the critical path should be removed and total depth is 8.

\section{Optimization of Cluster State Preparation, etc.}\label{cluster-optimization}

This chapter introduces the \textbf{Cluster} and \textbf{Array} test cases used in our evaluation.

\textbf{Cluster} is an example of cluster-state preparation program, which is a for-all loop: increasing count of iterations does not add to the overall depth of the program, which on the 2-dimensional grid is a constant $5$ ($4$ for $CZ$s in four directions and $1$ for Hadamard). Despite that, we can still perform loop optimization on this program to get a loop with kernel sized $1$.

\begin{figure}
\begin{subfigure}{0.08\textwidth}
    \centering\includegraphics[max width=\textwidth]{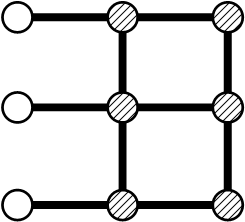}
    \caption{Before.}
    \label{fig:cs-outcome-1}
\end{subfigure}
\hspace{0.5cm}
\begin{subfigure}{0.35\textwidth}
    \centering
    \includegraphics[max width=\textwidth]{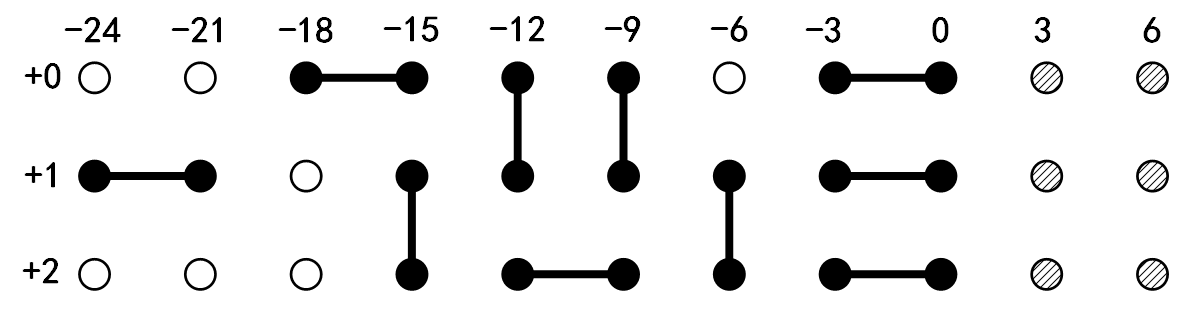}
    \caption{After. The numbers correspond to the intercept $b$ in expression $q[6i+b]$.}
    \label{fig:cs-outcome-2}
\end{subfigure}
\caption{Loop kernel for cluster state preparation ($N=3$). Shaded dots are qubits for Hadamard operands and closed dots are $CZ$ operands.}
    \label{fig:cs-outcome}
\end{figure}

For $C=2$, the loop kernels before and after rotation followed by software-pipelining is given in Figure \ref{fig:cs-outcome}. Our approach split $CZ$ gates that conflicts with each other into different iterations so that they can be executed together, and the kernel size is reduced to $1$, the best result for any loop-optimization approach except fully-unrolling.

%\captionof{table}{\label{tab:compaction}Operation table for loop kernel compaction. Empty cell means using previous operation. Check is performed from left to right, so antidiagonal can pass through CZ with a same qubit and an aliasing qubit.}

\textbf{Array} series are several artificially-crafted loop programs on qubit arrays.
\textbf{Array 1} performs three $CZ$ gates as in Figure \ref{fig:resource_move}, while two Hadamard gates are added between $CZ$s to prevent cancellation.
\textbf{Array 2} performs non-cancelling $CZ$ gates so that they can be parallelized maximally.
\textbf{Array 3} constructs a huge Toffoli gate using Toffoli gates and ancillas: in each iteration, a Toffoli is performed on a source qubit, an ancilla and the next ancilla.

The instruction operands of these examples contain the iteration variable and are thus simpler to optimize compared with those on fixed set of qubits.

\end{document}